\documentclass[aps,pra,twocolumn,nopacs]{revtex4}
\usepackage{amsthm}
\usepackage{amsmath}
\usepackage{latexsym}
\usepackage{amsfonts}
\usepackage{amssymb}
\usepackage{color}
\usepackage{bbm,dsfont}
\usepackage{graphicx}

\usepackage{subfigure}
\usepackage{mathrsfs}
\usepackage{mathbbol}
\usepackage{hyperref}
\usepackage{enumerate}
\usepackage{upgreek}

 

\newtheorem{proposition}{Proposition}
\newtheorem{proposition?}{Proposition?}

\newtheorem{lemma}{Lemma}

\theoremstyle{definition}

\newtheorem{remark}{Remark}
\newtheorem{example}{Example}
\newtheorem{definition}{Definition}


\definecolor{green}{RGB}{42, 111, 55}
\definecolor{blue}{RGB}{6, 7, 135}
\definecolor{red}{RGB}{188, 36, 60}


\newcommand{\real}{\mathbb R} 
\newcommand{\complex}{\mathbb C} 
\newcommand{\half}{\tfrac{1}{2}} 

\newcommand{\hi}{\mathcal{H}} 
\newcommand{\hik}{\mathcal{K}} 
\newcommand{\tr}[1]{\mathrm{tr}\left[#1\right]} 
\newcommand{\ptr}[2]{\mathrm{tr}_{#1}[#2]} 
\newcommand{\id}{\mathbbm{1}} 




\newcommand{\A}{\mathsf{A}}
\newcommand{\B}{\mathsf{B}}
\newcommand{\C}{\mathsf{C}}
\newcommand{\D}{\mathsf{D}}
\newcommand{\E}{\mathsf{E}}
\newcommand{\F}{\mathsf{F}}
\newcommand{\G}{\mathsf{G}}
\renewcommand{\P}{\mathsf{P}}
\newcommand{\M}{\mathsf{M}}


\newcommand{\vsigma}{\mathbf{\sigma}} 
\newcommand{\va}{\mathbf{a}} 
\newcommand{\vb}{\mathbf{b}} 
\renewcommand{\vr}{\mathbf{r}} 
\newcommand{\vx}{\mathbf{x}} 
\newcommand{\vy}{\mathbf{y}} 






\newcommand{\hin}{\hi_{in}} 
\newcommand{\state}{\mathcal{S}}


\usepackage{comment}
\newcommand{\vz}{\mathbf{z}}
\newcommand{\x}{\mathrm{x}}
\newcommand{\y}{\mathrm{y}}
\newcommand{\z}{\mathrm{z}}
\usepackage{xcolor}

\usepackage{multirow}


\begin{document}

\title[]{Testing incompatibility of quantum devices with few states}

\author{Teiko Heinosaari}
\email{teiko.heinosaari@utu.fi}
\address{Department of Physics and Astronomy, University of Turku, Finland}
\address{Quantum algorithms and software, VTT Technical Research Centre of Finland Ltd}

\author{Takayuki Miyadera}
\email{miyadera@nucleng.kyoto-u.ac.jp}
\address{Department of Nuclear Engineering, Kyoto University, 6158540 Kyoto, Japan}

\author{Ryo Takakura}
\email{takakura.ryo.27v@st.kyoto-u.ac.jp}
\address{Department of Nuclear Engineering, Kyoto University, 6158540 Kyoto, Japan}

\begin{abstract}
When observations must come from incompatible devices and cannot be produced by compatible devices?
This question motivates two integer valued quantifications of incompatibility, called incompatibility dimension and compatibility dimension.
The first one quantifies how many states are minimally needed to detect incompatibility if the test states are chosen carefully, whereas the second one quantifies how many states one may have to use if they are randomly chosen.
With concrete examples we show that these quantities have unexpected behaviour with respect to noise. \\
\end{abstract}


\maketitle

\section{Introduction}
Quantum information processing, including the exciting fields of quantum communication and quantum computation, is ultimately based on the fact that there are new types of resources that can be utilized in carefully designed information processing protocols. 
The best known feature of quantum information is that quantum systems can be in superposition and entangled states, and these resources lead to applications such as superdense coding and quantum teleportation.
While superposition and entanglement are attributes of quantum states,  quantum measurements have also features that can power new type of applications. 
The best known and most studied property is the incompatibility of pairs (or collections) of quantum measurements \cite{HeMiZi16}.
It is crucial e.g. in the BB84 quantum key distribution protocol \cite{BeBr84} that the used measurements are incompatible.

From the resource perspective, it is important to 
quantify the incompatibility. 
There has been several studies on incompatibility robustness, i.e., how incompatibility is affected by noise.
This is motivated by the fact that noise is unavoidable in any actual implementation of quantum devices and similar to other quantum properties (e.g. entanglement), large amount of noise destroys incompatibility.  
Earlier studies have mostly focused in quantifying noise \cite{DeFaKa19} and finding those pairs or collections of measurements that are most robust to certain types of noise \cite{HeScToZi14}, or to find conditions under which all incompatibility is completely erased \cite{HeKiReSc15}.
In this work we introduce quantifications of incompatibility which are motivated by operational aspect of testing whether a collection of devices is incompatible or not. 
We focus on two integer valued quantifications of incompatibility, called \emph{compatibility dimension} and \emph{incompatibilility dimension}. 
We formulate these concepts for arbitrary collections of devices. 
Roughly speaking, the first one quantifies how many states we minimally need to use to detect incompatibility if we choose the test states carefully, whereas the second one quantifies how many (affinely independent) states we may have to use if we cannot control their choice.
We study some of the basic properties of these quantifications of incompatibility and we present several examples to demonstrate their behaviour.

We show that, remarkably, even for the standard example of noisy orthogonal qubit observables the incompatibility dimension has a jump in a point where all noise robustness measures are continuous and indicate nothing special to happen.
More precisely, the noise parameter has a threshold value where the number of needed test states to reveal incompatibility shifts from 2 to 3. 
This means that even in this simple class of incompatible pairs of qubit observables there is a \emph{qualitative} difference in the incompatibility of less noisy and more noisy pairs of observables. 
An interesting additional fact is that the compatibility dimension of these pairs of observables does not depend on the noise parameter.

For simplicity and clarity, we will restrict to finite dimensional Hilbert spaces and observables with finite number of outcomes. 
Our definitions apply not only to quantum theory but also to any general probabilistic theory (GPT) \cite{Kuramochi20, Plavala21}.
However, for the sake of concreteness we keep the discussion in the realm of quantum theory. 
The main definitions work in any GPT without any changes.
We expect that similar findings as the aforementioned result on noisy orthogonal qubit observables can be made in subsequent studies on other collections of devices.  

Related studies have been recently reported in \cite{GuQuAo19,Kiukas20,LoNe21,UoKrDeMiTaPeGuBr21} in the case of quantum observables.
We will explain the interconnections of these studies to ours in Section \ref{sec:dim}
 once the relevant definitions have been introduced.

\section{(In)compatibility on a subset of states}

A quantum observable is mathematically described as a positive operator valued measure (POVM) \cite{MLQT12}. 
A quantum observable with finite number of outcomes is hence a map $x \mapsto \A(x)$ from the outcome set to the set of linear operators on a Hilbert space.
We recall that the compatibility of quantum observables $\A_1,\ldots,\A_n$ with outcome sets $X_1,\ldots,X_n$ means that there exists an observable $\G$, called \emph{joint observable}, defined on the product outcome set $X_1 \times \cdots \times X_n$ such that from an outcome $(x_1,\ldots,x_n)$ of $\G$, one can infer outcomes for every $\A_1,\ldots,\A_n$ by ignoring the other outcomes.
More precisely, the requirement is that
\begin{equation}\label{eq:pp}
\begin{split}
& \A_1(x_1) = \sum_{x_2,\ldots,x_n} \G(x_1,x_2,\ldots,x_n) \\
& \A_2(x_2) = \sum_{x_1,x_3\ldots,x_n} \G(x_1,x_2,\ldots,x_n) \\
& \quad\vdots \\
& \A_n(x_n) = \sum_{x_1,\ldots,x_{n-1}} \G(x_1,x_2,\ldots,x_n) \\
\end{split}
\end{equation}

If $\A_1,\ldots,\A_n$ are not compatible, then they are called \emph{incompatible}.
This definition applies in all general probabilistic theories and has, in fact, led to inspiring findings on quantum incompatibility compared to incompatibility in other general probabilistic theories \cite{BuHeScSt13,JePl17,TaMi20}. 

\begin{example}(\emph{Unbiased qubit observables})\label{ex:qubit-1}
We recall a standard example to fix the notation that we will use in later examples.
An unbiased qubit observable is a dichotomic observable with outcomes $\pm$ and
determined by a vector $\va\in\real^3$, $|\va| \leq 1$ via 
\begin{equation*}
\A^{\va}(\pm) = \half ( \id \pm \va \cdot \vsigma ) \,,
\end{equation*}
where $\va \cdot \vsigma = a_1\sigma_1 + a_2\sigma_2 + a_3\sigma_3$ and $\sigma_i$, $i=1,2,3$, are the Pauli matrices.
The Euclidean norm $|\va|$ of $\va$ reflects the noise in $\A^{\va}$; in the extreme case of $|\va|=1$ the operators $\A^{\va}(\pm)$ are projections and the observable is called sharp.
As shown in \cite{Busch86}, two unbiased qubit observables $\A^{\va}$ and $\A^{\vb}$ are compatible if and only if 
\begin{equation}\label{eq:paul}
|\va + \vb| + |\va - \vb| \leq 2 \, .
\end{equation}
There are two extreme cases.
Firstly, if $\A^{\va}$ is sharp then it is compatible with some $\A^{\vb}$ if and only if $\vb=r\va$ for some $-1\leq r \leq 1$.
Secondly, if $|\va|=0$, then $\A^{\va}(\pm) = \half \id$ and it is called a trivial qubit observable, in which case it is compatible with all other qubit observables. 
\end{example}

How can we test if a given family of observables is compatible or incompatible?
From the operational point of view, the existence of 
an observable $\G$ satisfying 
(\ref{eq:pp}) is equivalent to 
the existence of $\G$ such that for any state $\varrho$ the equation
\begin{equation}\label{eq:naive}
\tr{\varrho \A_1(x_1)}=  \sum_{x_2,\ldots,x_n} \tr{\varrho \G(x_1,x_2,\ldots,x_n)} 
\end{equation}
holds.
To test the incompatibility we should hence check the validity of \eqref{eq:naive} in a subset of states that spans the whole state space.
An obvious question is then if we really need all those states, or if a smaller number of test states is enough.
Further, does the number of needed test states depend on the given family of observables? How does noise affect the number of needed test states?

Before contemplating into these questions, we recall that analogous definitions of compatibility and incompatibility make sense for other types of devices, in particular, for instruments and channels \cite{HeMiRe14,Haapasalo15,HeMiZi16,HeMi17,HeReRyZi18, Kuramochi18a, Haapasalo19}. 
We limit our discussion to quantum devices although, again, the definitions apply to devices in general probabilistic theories.
We denote by $\state(\hi)$ the set of all density operators on a Hilbert space $\hi$.
The input space of all types of devices must be $\state(\hin)$ on the same Hilbert space $\hin$
as the devices operate on a same system.
We denote $\state(\hin)$ simply by $\state$.
A device is a completely positive map and the `type' of the device is 
characterised by its output space.
Output spaces for the three basic types of devices are:
\begin{itemize}
\item observable: $P(X):=\{p=\{p(x)\}_{x\in X}\mid0 \leq p(x)\leq 1,\ 
\sum_{x}p(x)=1\}$ ,
\item channel: $\mathcal{S}(\hi_{out})$ , 
\item instrument: $\mathcal{S}(\hi_{out})\otimes 
P(X)$\, . 
\end{itemize}
In this classification an observable $\A$ is identified with a map $\varrho\mapsto\tr{\varrho \A(x)}$ from $\state(\hin)$ to $P(X)$. 
We limit our investigation to the cases where the number of outcomes in $X$ is finite and the output Hilbert space $\hi_{out}$ is finite dimensional.
Regarding $P(X) \subset \state(\complex^{|X|})$ as the set of all diagonal density operators, we can summarize that quantum devices are normalized completely positive maps to different type of output spaces.  

Devices $\D_1, \ldots, \D_n$ are \emph{compatible} if there exists a device $\D$ that can simulate $\D_1, \ldots, \D_n$ simultaneously, meaning that by ignoring disjoint parts of the output of $\D$ we get the same actions as $\D_1, \ldots, \D_n$ (see \cite{HeMiZi16}).  
This kind of device is called a \emph{joint device} of $\D_1, \ldots, \D_n$. 
The input space of $\D$ is the same as for $\D_1, \ldots, \D_n$, but the output space is the tensor product of their output spaces.
As an illustration, let $\D_j\colon\state(\hi_{in})\to\state(\hi_j)$ $(j=1, \ldots, n)$ be quantum channels. 
They are compatible iff there exists a channel $\D\colon\state(\hi_{in})\to\state(\bigotimes_{j=1}^{n}\hi_j)$ satisfying
\begin{equation*}
	\begin{split}
		& \D_1(\varrho) = \mathrm{tr}_{\hi_2,\ldots,\hi_n} \D(\varrho) \\
		& \D_2(\varrho) = \mathrm{tr}_{\hi_1, \hi_3, \ldots,\hi_n} \D(\varrho) \\
		& \quad\vdots \\
		& \D_n(\varrho) = \mathrm{tr}_{\hi_1,\ldots,\hi_{n-1}} \D(\varrho) \\
	\end{split}
\end{equation*}
for all $\varrho\in\state(\hi_{in})$ (see \eqref{eq:pp}).
If $\D_1, \ldots, \D_n$ are not compatible, then they are \emph{incompatible}.

We recall a qubit example to exemplify the general definition. 

\begin{example}(\emph{Unbiased qubit observable and partially depolarizing noise})\label{ex:pauli}
A measurement of an unbiased qubit observable $\A^{\va}$ necessarily disturbs the system.
This trade-off is mathematically described by the compatibility relation between observables and channels. 
Let us consider partially depolarizing qubit channels, which have the form
\begin{equation}
\Gamma_{p}(\varrho) = p \varrho + (1-p) \tfrac{1}{2} \id 
\end{equation}
for $0 \leq p \leq 1$.
A joint device for a channel and observable is an instrument. 
Hence, $\A^{\va}$ and $\Gamma_p$ are compatible if there exists an instrument $x \mapsto \Phi_x$ such that
\begin{equation*}
\sum_x \Phi_x(\varrho) = \Gamma_p(\varrho) \quad \textrm{and} \quad \tr{\Phi_x(\varrho)} = \tr{\varrho \A^{\va}(x)}
\end{equation*}
for all states $\varrho$ and outcomes $x$.
It has been proven in \cite{HeReRyZi18} that $\A^{\va}$ and $\Gamma_p$ are compatible if and only if 
\begin{equation}
|\va | \leq \frac{1}{2} \left( 1- p + \sqrt{(1-p)(1+3p)} \right)  \, .
\end{equation}
This shows that higher is the norm $|\va |$, smaller must $p$ be.
\end{example}

The earlier discussion motivates the following definition, which is central to our investigation. 
\begin{definition}\label{def:comp}
Let $\state_0\subset\state$.
Devices $\D_1,\ldots,\D_n$ are \emph{$\state_0$-compatible} if there exist compatible devices $\D'_1,\ldots,\D'_n$ of the same type such that
 \begin{equation}\label{eq:comp-same}
\D'_j(\varrho) = \D_j(\varrho)
\end{equation}
for all $j=1,\ldots,n$ and states $\varrho\in\state_0$.
Otherwise, $\D_1,\ldots,\D_n$ are \emph{$\state_0$-incompatible}.
\end{definition}

The definition is obviously interesting only when $\D_1,\ldots,\D_n$ are incompatible in the usual sense, i.e., with respect to the full state space.
In that case the definition means that if devices $\D_1,\ldots,\D_n$ are $\state_0$-compatible, their incompatibility cannot be verified by taking test states from $\state_0$ only, and vice versa, if devices $\D_1,\ldots,\D_n$ are $\state_0$-incompatible, their actions on $\state_0$ cannot be simulated by any collection of compatible devices and therefore their incompatibility should be able to be observed in some way.

The $\state_0$-compatibility depends not only on the size of $\state_0$ but also on its structure.
We start with a simple example showing that 
there exist sets $\state_0$ such that 
an arbitrary family of devices is $\state_0$-compatible. 

\begin{example}\label{ex:distinguishable}
Any set of devices $\D_1,\ldots, \D_n$ is $\state_0$-compatible if 
$\state_0=\{\varrho_1, \ldots, \varrho_k\}$ 
consists of perfectly distinguishable states. 
In fact, one may construct a device 
$\D_k'$ which outputs $\D_k(\varrho_j)$ after 
confirming an input state is $\varrho_j$ by measuring 
an observable that distinguishes the states in $\state_0$. 
It is easy to see that the devices $\D_1', \ldots, \D_n'$ are compatible. 
The same argument works for devices in general probabilistic theories and one can use the same reasoning for a subset $\state_0$ that is broadcastable \cite{BaBaLeWi07}.
(We recall that a subset $\state_0$ is broadcastable if there exists a channel $B:\state \to \state\otimes\state$ such that the bipartite state $B(\varrho)$ has marginals equal to $\varrho$ for all $\varrho\in\state_0$.) 
For instance, two qubit states $\id/2$ and $|0\rangle \langle 0|$ are broadcastable even though not distinguishable.
Any pair of qubit channels $\Lambda_1$ and $\Lambda_2$ is $\state_0$-compatible for $\state_0=\{\id/2,|0\rangle \langle 0| \}$ as we can define 
$\Lambda'_j(\varrho) = \sum_{i=0}^1 \langle i | 
\varrho |i\rangle \Lambda_j(|i\rangle\langle i |)$ for $j=1,2$.
The channel $\Lambda'_j$ has clearly the same action as $\Lambda_j$ on $\state_0$.
A joint channel $\Lambda$ for $\Lambda'_1$ and $\Lambda'_2$ is given as
\begin{align*}
\Lambda(\varrho) =  \sum_{i=0}^1 \langle i | 
\varrho |i\rangle \, \Lambda_1(|i\rangle\langle i |) \otimes  \Lambda_2(|i\rangle\langle i |),
\end{align*}
and it is clear that, in fact, $\ptr{2}{\Lambda(\varrho)}=\Lambda_1(\varrho)$ and $\ptr{1}{\Lambda(\varrho)}=\Lambda_2(\varrho)$.
\end{example}

\section{(In)compatibility dimension of devices}\label{sec:dim}

For a subset $\state_0 \subset \state$, we denote by $\bar{\state}_0$ the intersection of the linear hull of $\state_0$ with $\state$, i.e.,
\begin{align*}
\bar{\state}_0 = \{ \varrho \in \state \mid \textrm{$\varrho = \sum_{i=1}^l c_i \varrho_i$ for some $c_i \in \complex$ and $\varrho_i\in\state_0$} \}
\end{align*}
In this definition we can assume without restriction that $c_i\in\real$ and $\sum_i c_i = 1$ as they follow from the positivity and unit-trace of states.
Since the condition \eqref{eq:comp-same} is linear in $\varrho$, we conclude that devices $\D_1,\ldots,\D_n$ are $\state_0$-compatible if and only if they are $\bar{\state}_0$-compatible.
This makes sense: if we can simulate the action of devices for states in $\state_0$, we can simply calculate the action for all states that are linear combinations of those states.
This observation also shows that a reasonable way to quantify the size of a subset $\state_0$ for the task in question is the number of affinely independent states. 

We consider the following questions. 
Given a collection of incompatible devices $\D_1,\ldots,\D_n$,
\begin{itemize}
\item[(a)] what is the smallest subset $\state_0$ such that $\D_1,\ldots,\D_n$ are $\state_0$-incompatible?
\item[(b)] what is the largest subset $\state_0$ such that $\D_1,\ldots,\D_n$ are $\state_0$-compatible?
\end{itemize}
Smallest and largest here mean the number of affinely independent states in $\bar{\state}_0$.
It agrees with the linear dimension of the linear hull of $\state_0$, or $\mathrm{dim}\mathit{aff}\state_{0}+1$, where $\mathrm{dim}\mathit{aff}\state_{0}$ is the affine dimension of the affine hull $\mathit{aff}\state_{0}$ of $\state_0$ \cite{CA97,CO09}.
The answer to (a) quantifies how many states we need to use to detect incompatibility if we choose them carefully, whereas the answer to (b) quantifies how many (affinely independent) states we may have to use if we cannot control their choice. 
Hence for both of these quantities lower number means more incompatibility in the sense of easier detection.
The precise mathematical definitions read as follows.

\begin{definition}
For a collection of incompatible devices $\D_1,\ldots,\D_n$, we denote
\begin{multline*}
	\chi_{incomp}(\D_1,\ldots,\D_n)=
	\min_{\state_{0}\subset\state}
	\{
	\mathrm{dim}\mathit{aff}\state_{0}+1\\
	\mid\mbox{$\D_1,\ldots,\D_n$: $\state_{0}$-incompatible}
	\}
\end{multline*}
and
\begin{multline*}
	\chi_{comp}(\D_1,\ldots,\D_n)=
	\max_{\state_{0}\subset\state}
	\{
	\mathrm{dim}\mathit{aff}\state_{0}+1\\
	\mid\mbox{$\D_1,\ldots,\D_n$: $\state_{0}$-compatible}
	\}.
\end{multline*}
We call these numbers the \emph{incompatibility dimension} and \emph{compatibility dimension} of $\D_1,\ldots,\D_n$, respectively.
\end{definition}

From Example \ref{ex:distinguishable} and the fact that the linear dimension of the linear hull of $\state$ is $d^2$ we conclude that
\begin{equation}\label{eq:incomp-bounds}
2 \leq \chi_{incomp}(\D_1,\ldots,\D_n) \leq d^2 
\end{equation}
and
\begin{equation}\label{eq:comp-bounds}
d \leq \chi_{comp}(\D_1,\ldots,\D_n) \leq d^2-1 \, .
\end{equation}
Further, from the definitions of these quantities it directly follows that
\begin{equation}
\chi_{incomp}(\D_1,\ldots,\D_n) \leq \chi_{comp}(\D_1,\ldots,\D_n) +1 \, .
\end{equation}
We note that based on their definitions, both $\chi_{incomp}$ and $\chi_{comp}$ are expected to be smaller for collections of devices that are more incompatible.

The following monotonicity property of $\chi_{incomp}$ and $\chi_{comp}$ under pre-processing is a basic property that any quantification of incompatibility is expected to satisfy.

\begin{proposition}
Let $\Lambda:\state\to\state$ be a quantum channel and let $\widetilde{\D}_j$ be a pre-processing of $\D_j$ with $\Lambda$ for each $j=1,\ldots,n$, i.e., $\widetilde{\D}_j(\varrho)=\D_j(\Lambda(\varrho)))$.
If $\widetilde{\D}_j$'s are 
incompatible, then also $\D_j$'s are incompatible and 
\begin{equation}
\chi_{incomp}(\widetilde{\D}_1,\ldots,\widetilde{\D}_n) \geq \chi_{incomp}(\D_1,\ldots,\D_n) 
\end{equation}
and
\begin{equation}
\chi_{comp}(\widetilde{\D}_1,\ldots,\widetilde{\D}_n) \geq \chi_{comp}(\D_1,\ldots,\D_n) \, .
\end{equation}
\end{proposition}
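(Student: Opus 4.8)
The plan is to reduce the whole statement to one elementary observation about how pre\-processing interacts with $\state_0$-compatibility, and then to carry out the dimension bookkeeping separately for the two quantities. First I would record the incompatibility claim in contrapositive form: if $\D_1,\ldots,\D_n$ are compatible with joint device $\D$, then $\varrho\mapsto\D(\Lambda(\varrho))$ is a joint device for $\widetilde{\D}_1,\ldots,\widetilde{\D}_n$, because taking marginals (partial traces over the complementary output factors) commutes with pre\-composing by $\Lambda$, so its $j$-th marginal is $\varrho\mapsto\D_j(\Lambda(\varrho))=\widetilde{\D}_j(\varrho)$. Hence compatibility of the $\D_j$ forces compatibility of the $\widetilde{\D}_j$, and the assumed incompatibility of the $\widetilde{\D}_j$ forces incompatibility of the $\D_j$. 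The same computation upgrades to the lemma I will use twice: for every $\state_0\subseteq\state$, if the $\D_j$ are $\Lambda(\state_0)$-compatible then the $\widetilde{\D}_j$ are $\state_0$-compatible. Indeed, given compatible $\D_1',\ldots,\D_n'$ with $\D_j'=\D_j$ on $\Lambda(\state_0)$, the pre\-processed devices $\D_j'\circ\Lambda$ are compatible by the previous remark and satisfy $\D_j'(\Lambda(\varrho))=\D_j(\Lambda(\varrho))=\widetilde{\D}_j(\varrho)$ for all $\varrho\in\state_0$.

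For $\chi_{incomp}$ I would apply this lemma in contrapositive. Choosing a minimizing set $\state_0$, so that the $\widetilde{\D}_j$ are $\state_0$-incompatible with $\mathrm{dim}\mathit{aff}\,\state_0+1=\chi_{incomp}(\widetilde{\D}_1,\ldots,\widetilde{\D}_n)$, the contrapositive says the $\D_j$ are then $\Lambda(\state_0)$-incompatible, so $\Lambda(\state_0)$ is an admissible set for the $\D_j$ and $\chi_{incomp}(\D_1,\ldots,\D_n)\le \mathrm{dim}\mathit{aff}\,\Lambda(\state_0)+1$. Since $\Lambda$ restricts to an affine map on the trace\-one hyperplane it cannot raise affine dimension, $\mathrm{dim}\mathit{aff}\,\Lambda(\state_0)\le\mathrm{dim}\mathit{aff}\,\state_0$, and the first inequality drops out. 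I expect this part to be routine.

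For $\chi_{comp}$ I would instead use the lemma directly. Fix a maximizing set for the $\D_j$; replacing it by $\bar{\state}_0$ (which changes neither compatibility nor the counted dimension) I may assume it is $\bar{\state}_0$ with $\mathrm{dim}\mathit{aff}\,\bar{\state}_0+1=\chi_{comp}(\D_1,\ldots,\D_n)$. Setting $\state_1=\{\varrho\in\state\mid\Lambda(\varrho)\in\bar{\state}_0\}$ we have $\Lambda(\state_1)\subseteq\bar{\state}_0$, so the $\D_j$ are $\Lambda(\state_1)$-compatible and the lemma yields that the $\widetilde{\D}_j$ are $\state_1$-compatible; hence $\chi_{comp}(\widetilde{\D}_1,\ldots,\widetilde{\D}_n)\ge\mathrm{dim}\mathit{aff}\,\state_1+1$. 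Everything then reduces to the dimension estimate $\mathrm{dim}\mathit{aff}\,\state_1\ge\mathrm{dim}\mathit{aff}\,\bar{\state}_0$.

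That estimate is where I expect the real work. Writing $V_0$ for the linear hull of $\bar{\state}_0$ (so $\dim V_0=\mathrm{dim}\mathit{aff}\,\bar{\state}_0+1$) and $W=\Lambda^{-1}(V_0)$ for the linear preimage, rank--nullity together with $\dim(\mathrm{ran}\,\Lambda\cap V_0)\ge\dim\mathrm{ran}\,\Lambda+\dim V_0-d^2$ gives the clean subspace bound $\dim W=\dim\ker\Lambda+\dim(\mathrm{ran}\,\Lambda\cap V_0)\ge\dim V_0$. The delicate point, and the main obstacle, is that $\state_1$ need not linearly span all of $W$: I must guarantee that the affine preimage meets the state space in a set of the full dimension of $W$. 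Concretely this amounts to exhibiting a full\-rank state $\varrho_\ast$ with $\Lambda(\varrho_\ast)\in\bar{\state}_0$, i.e.\ showing that $\bar{\state}_0$ meets the relative interior of $\Lambda(\state)$. Given such a $\varrho_\ast$, the kernel of $\Lambda$ (whose elements are traceless since $\Lambda$ preserves the trace) supplies a full\-dimensional family $\varrho_\ast+tN$, $N\in\ker\Lambda$, of states in $\state_1$, and combining these fiber directions with preimages of an affine basis of $\bar{\state}_0\cap\Lambda(\state)$ turns, via a second intersection\-dimension count, into the required $\mathrm{dim}\mathit{aff}\,\state_1\ge\mathrm{dim}\mathit{aff}\,\bar{\state}_0$. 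I would therefore concentrate on the interior/transversality claim, trying to secure it by arguing that a maximizing $\bar{\state}_0$ can be chosen to contain a relative\-interior point of $\Lambda(\state)$, or, failing a clean argument, by a perturbation argument that nudges $\state_0$ into general position without lowering the counted dimension.
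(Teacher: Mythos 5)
Your pre-processing lemma --- if the $\D_j$ are $\Lambda(\state_0)$-compatible then the $\widetilde{\D}_j$ are $\state_0$-compatible --- is correct, and with it your treatment of the first inequality is complete: a minimizing $\state_0$ for the $\widetilde{\D}_j$ yields, by the contrapositive, $\Lambda(\state_0)$-incompatibility of the $\D_j$, and an affine map cannot raise affine dimension. This is the paper's intended argument made precise, and in fact \emph{more} precise than the paper's own text, which asserts that the marginals of the pre-processed joint device $\D'\circ\Lambda$ coincide with the $\widetilde{\D}_j$ on $\state_0$ itself. As literally written that step fails: agreement is inherited only on $\Lambda^{-1}(\state_0)$, exactly as in your lemma. (Take $\D_1=\A^{\vx}$, $\D_2=\A^{\vy}$ sharp, $\Lambda$ the unitary channel rotating the Bloch ball by $45^{\circ}$ about the $z$-axis, and $\state_0=\{\varrho\mid\tr{\varrho\A^{\vx}(+)}=\tfrac{1}{2}(1-\tfrac{1}{\sqrt{2}})\}$: the $\D_j$ are $\state_0$-compatible by Example~\ref{ex:fix}, yet $\state_0$ contains the two states of Example~\ref{ex:pq} for the rotated pair, so the $\widetilde{\D}_j$ are $\state_0$-incompatible.)

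The genuine gap is the one you flag yourself, in the $\chi_{comp}$ half. Your reduction to the estimate $\mathrm{dim}\mathit{aff}\,\state_1\geq\mathrm{dim}\mathit{aff}\,\bar{\state}_0$ for $\state_1=\Lambda^{-1}(\bar{\state}_0)\cap\state$ is fine, and so is the further reduction to exhibiting a full-rank state $\varrho_{\ast}$ with $\Lambda(\varrho_{\ast})\in\bar{\state}_0$; but that existence claim is never established, and it is not a removable technicality. For a given maximizer $\bar{\state}_0$ it can simply fail: $\bar{\state}_0$ may be disjoint from $\Lambda(\state)$ (then $\state_1$ is empty) or meet it only in its relative boundary, and neither the maximality of $\bar{\state}_0$ nor the assumed incompatibility of the $\widetilde{\D}_j$ visibly excludes this. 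The fallback of ``perturbing $\state_0$ into general position'' is not available either, because compatibility sets are dictated by the devices, not chosen: a displaced copy of $\bar{\state}_0$ need not be a compatibility set of the $\D_j$ at all, so whether \emph{some} maximal compatibility set meets the relative interior of $\Lambda(\state)$ is precisely the unproved point. Hence the second inequality is not established by the proposal. To be fair, the paper's proof does no better here --- after its (incorrect) one-line lemma it merely states that the inequalities ``then follow'' --- so you have isolated a real subtlety of this proposition rather than introduced one; but closing it requires an additional argument (for instance, producing a family of compatibility sets of the $\D_j$ of maximal dimension that sweeps out enough of $\state$ to force one member to meet $\Lambda(\mathrm{int}\,\state)$, in the spirit of the level-set foliation of Example~\ref{ex:fix}), and no such argument appears in the proposal.
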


\begin{proof}
Suppose that $\D_1, \ldots, \D_n$ are $\state_0$-compatible for some subset $\state_0$. 
Let $\D'$ be a device that gives  devices $\D'_1, \ldots, \D'_n$ as marginals and these marginals satisfy \eqref{eq:comp-same} in $\state_0$.
Then the pre-processing of $\D'$ with $\Lambda$ gives $\widetilde{\D}_1,\ldots,\widetilde{\D}_n$ as marginals in $\state_0$.
The claimed inequalities then follow.
\end{proof}

The post-processing map of a device $\D$ depends on type of the device. 
For instance, the output set of an observable is $P(X)$ and post-processing is then described as a stochastic matrix \cite{MaMu90a}.
We formulate and prove the following monotonicity property of $\chi_{incomp}$ and $\chi_{comp}$ under post-processing only for observables. The formulation is analogous for other types of devices.

\begin{proposition}
	\label{prop:post-processing}
Let $\widetilde{\A}_j$ be a post-processing of $\A_j$ (i.e. $\widetilde{\A}_j(x')=\sum_x \nu_j(x',x)\A_j(x)$ for some stochastic matrix $\nu_j$) for each $j=1,\ldots,n$.
If $\widetilde{\A}_j$'s are $\state_0$-incompatible, then also $\A_j$'s are $\state_0$-incompatible and  
\begin{equation}
\chi_{incomp}(\widetilde{\A}_1,\ldots,\widetilde{\A}_n) \geq \chi_{incomp}(\A_1,\ldots,\A_n) 
\end{equation}
and
\begin{equation}
\chi_{comp}(\widetilde{\A}_1,\ldots,\widetilde{\A}_n) \geq \chi_{comp}(\A_1,\ldots,\A_n) \, .
\end{equation}
\end{proposition}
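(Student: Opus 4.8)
The plan is to prove the contrapositive of the first assertion, namely that $\state_0$-compatibility of $\A_1,\ldots,\A_n$ forces $\state_0$-compatibility of the post-processed family $\widetilde{\A}_1,\ldots,\widetilde{\A}_n$; the two dimension inequalities will then follow from set-inclusion arguments in the same spirit as the proof of the preceding pre-processing proposition. So first I would assume that $\A_1,\ldots,\A_n$ are $\state_0$-compatible, fix compatible observables $\A'_1,\ldots,\A'_n$ that agree with the $\A_j$ on $\state_0$, and let $\G'$ be a joint observable for them, so that each $\A'_j$ is the $j$-th marginal of $\G'$.

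The central step is to build a joint observable for the post-processed family directly from $\G'$ by applying all the post-processings simultaneously. I would define
\[
\widetilde{\G}(x'_1,\ldots,x'_n) = \sum_{x_1,\ldots,x_n} \Big( \prod_{j=1}^n \nu_j(x'_j,x_j) \Big)\, \G'(x_1,\ldots,x_n),
\]
and then verify that this is a legitimate POVM whose marginals are exactly the post-processed observables $\widetilde{\A}'_j(x'_j)=\sum_{x_j}\nu_j(x'_j,x_j)\A'_j(x_j)$. Positivity is immediate since the coefficients are nonnegative, and normalization follows from $\sum_{x'_j}\nu_j(x'_j,x_j)=1$ for every $j$, which is also exactly what makes each factor except the $j$-th collapse to $1$ when I sum out the remaining outcome indices, leaving precisely $\widetilde{\A}'_j(x'_j)$. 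Hence $\widetilde{\A}'_1,\ldots,\widetilde{\A}'_n$ are compatible.

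It then remains to check that $\widetilde{\A}'_j$ and $\widetilde{\A}_j$ agree on $\state_0$, and this is where linearity of post-processing does the work: for $\varrho\in\state_0$ one has $\tr{\varrho\A'_j(x)}=\tr{\varrho\A_j(x)}$, and applying the same stochastic matrix $\nu_j$ to both sides yields $\tr{\varrho\widetilde{\A}'_j(x')}=\tr{\varrho\widetilde{\A}_j(x')}$. Thus the $\widetilde{\A}_j$ are $\state_0$-compatible, which establishes the contrapositive and hence the first assertion.

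Finally, for the dimension inequalities I would translate the implications into inclusions of the relevant families of subsets. Reading $\chi_{incomp}$ as a minimum over all $\state_0$ for which the family is $\state_0$-incompatible, the implication that $\state_0$-incompatibility of $\widetilde{\A}_j$'s entails $\state_0$-incompatibility of $\A_j$'s shows that the minimization domain for the post-processed family is contained in that for the original family, so the minimum can only increase. Dually, the implication that $\state_0$-compatibility of $\A_j$'s entails $\state_0$-compatibility of $\widetilde{\A}_j$'s enlarges the maximization domain defining $\chi_{comp}$ upon passing to the post-processed family, so the maximum can only increase. I do not expect a genuine obstacle here; the only point demanding care is keeping the outcome indices straight in the product of stochastic matrices defining $\widetilde{\G}$ and confirming that marginalizing over the blocks $i\neq j$ really does annihilate every factor but the $j$-th.
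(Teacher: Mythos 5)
Your proposal is correct and follows essentially the same route as the paper: both prove the contrapositive by applying the product of stochastic matrices $\prod_j \nu_j(x'_j,x_j)$ to a joint observable witnessing $\state_0$-compatibility of the $\A_j$'s, then check that the marginals of the resulting observable reproduce the $\widetilde{\A}_j$'s on $\state_0$, with the dimension inequalities following from the induced inclusions of the optimization domains. Your extra verifications (positivity, normalization, the explicit linearity step for agreement on $\state_0$) are details the paper leaves implicit, but the argument is the same.
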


\begin{proof}
Suppose that $\A_1, \ldots, \A_n$ are 
$\state_0$-compatible for some subset $\state_0$. 
This means that there exists an observable 
$\G$ satisfying for all $\varrho\in\state_0$, any $j$ and $x_j$, 
\begin{equation}\label{eq:pp-1}
\tr{\varrho \A_j(x_j)}
= \sum_{l\neq j} \sum_{x_l} 
\tr{\varrho \G(x_1, \ldots, x_n)} \, .
\end{equation}
We define an observable 
$\widetilde{\G}$ as 
$\widetilde{\G}(x'_1, \ldots, x'_n)
= \sum_{x_1, \ldots, x_n}
\nu(x'_1|x_1)\cdots \nu(x'_n|x_n)
\G(x_1, \ldots, x_n)$, 
and it then satisfies 
\begin{eqnarray}
\tr{\varrho \widetilde{\A}_j(x'_j)}
= \sum_{l \neq j} \sum_{x'_l}
\tr{\varrho \widetilde{\G}(x'_1, \ldots, x'_n)}
\end{eqnarray}
for all $\varrho\in\state_0$, any $j$ and $x'_j$.
This shows that $\widetilde{\A}_1,\ldots,\widetilde{\A}_n$ are $\state_0$-compatible.
The claimed inequalities then follow.
\end{proof}

We will now have some examples to demonstrate the values of $\chi_{incomp}$ and $\chi_{comp}$ in some standard cases.

\begin{example}\label{ex:identity}
Let us consider the identity channel $\mbox{id}: \mathcal{S}(\mathbb{C}^d)
\to \mathcal{S}(\mathbb{C}^d)$. It follows from the definitions that two identity channels are $\state_0$-compatible if and only if $\state_0$ is a broadcastable set. 
It is known that a subset of states is broadcastable only if the states commute with each other \cite{Barnumetal96}, and for this reason the pair of two identity channels is $\state_0$-incompatible whenever $\state_0$ contains two noncommuting states. 
Therefore, we have $\chi_{incomp}(\mbox{id}, \mbox{id}) =2$.
On the other hand, $\state_0$ consisting of 
distinguishable states makes the identity channels
$\state_0$-compatible. 
As $\state_0$ consisting of commutative states 
has at most $d$ affinely independent states, 
we conclude that $\chi_{comp}(\mbox{id}, \mbox{id})=d$. 
\end{example}

A comparison of the results of Example \ref{ex:identity} to the bounds \eqref{eq:incomp-bounds} and \eqref{eq:comp-bounds} shows that the pair of identity channels has the smallest possible incompatibility and compatibility dimensions. 
This is quite expectable as that pair is consider to be the most incompatible pair - any device can be post-processed from the identity channel. 
Perhaps surprisingly, the lower bound of $\chi_{incomp}$ can be attained already with a pair of dichotomic observables; this is shown in the next example.
 
\begin{example}\label{ex:pq}
Let $P$ and $Q$ be two noncommuting one-dimensional projections in a $d$-dimensional Hilbert space $\hi$.
We define two dichotomic observables $\A$ and $\B$
as 
\begin{align*}
\A(1)=P \, , \A(0)=\id-P \, , \quad \B(1)=Q \, , \B(0)=\id-Q \, .
\end{align*} 
Let us then consider a subset consisting of two states, 
$$
\state_0=\{\varrho^P, \varrho^Q\}:=
\{\tfrac{1}{d-1}(\id-P), \tfrac{1}{d-1}(\id-Q)\} \, .
$$
We find that the dichotomic observables $\A$ and $\B$ are $\state_0$-incompatible. 
To see this, let us make a counter assumption that  $\A$ and $\B$ are $\state_0$-compatible, in which case there exists $\G$ such that the marginal condition \eqref{eq:naive} holds for both observables and for all $\varrho\in\state_0$.
We have $\tr{\varrho^P \A(1)}=0$ and therefore
\begin{align*}
0=\tr{(\id-P) \G(1,1)} = \tr{(\id-P) \G(1,0)}.
\end{align*}
It follows that $\G(1,1)=\alpha P$ and $\G(1,0) =\beta P$.
Further, $\tr{P \A(1)} =1$ and hence $\alpha+\beta=1$.
In a similar way we obtain $\G(1,1)=\gamma Q$ and $\G(0,1) =\delta Q$ with $\gamma+\delta=1$.
It follows that $\alpha=\gamma=0$ and $\beta=\delta=1$.
But $\G(1,0) + \G(0,1) = P + Q$ contradicts $\G(1,0) + \G(0,1) \leq \id$.
Thus we conclude $\chi_{incomp}(\A, \B)
=2$. 
\end{example}
 
For two incompatible sharp qubit observables (Example \ref{ex:qubit-1}) the previous example gives a concrete subset of two states such that the observables are incompatible and proves that $\chi_{incomp}(\A^{\va},\A^{\vb})=2$ for such a pair.
The incompatibility dimension for unsharp qubit observables is more complicated and will be treated in Sec. \ref{sec:qubit}.

\begin{example}\label{ex:fix}
Let us consider two observables $\A$ and $\B$.
Fix a state $\varrho_0\in\state$ and define 
\begin{equation*}
\state_0 = \{ \varrho\in\state : \tr{\varrho\A(x)} = \tr{\varrho_0\A(x)} \  \forall x \} \, .
\end{equation*}
Then $\A$ and $\B$ are $\state_0$-compatible. 
To see this, we define an observable $\G$ as
\begin{equation*}
\G(x,y) =  \tr{\varrho_0\A(x)} \B(y) \, .
\end{equation*}
It is then straightforward to verify that \eqref{eq:naive} holds for all $\varrho \in \state_0$.

As a special instance of this construction, let $\A^{\va}$ be a qubit observable and $\va\neq 0$ (see Example \ref{ex:qubit-1}).
We choose $\state_0=\{ \varrho\in\state\mid \tr{\varrho\A^{\va}(+)}=\half \}$.
We then have $\state_0 = \{ \half(\id + \vr\cdot\vsigma)\mid \vr\cdot\va=0 \}$ and hence
$\mathrm{dim}\mathit{aff}\state_0=2$.
Based on the previous argument, $\A^{\va}$ is $\state_0$-compatible with any $\A^{\vb}$.
Therefore, $\chi_{comp}(\A^{\va},\A^{\vb})=3$ for all incompatible qubit observables $\A^{\va}$ and $\A^{\vb}$.
\end{example}

\subsection*{Remark on other formulations of incompatibility dimension}

The notion of $\state_0$-compatibility for quantum observables has  been introduced in \cite{GuQuAo19} and in that particular case (i.e. quantum observables) it is equivalent to Def. \ref{def:comp}. 
In the current investigation our focus is on the largest or smallest $\state_0$ on which devices $\D_1,\ldots,\D_n$ are compatible or incompatible, and this has some differences to the earlier approaches. In \cite{LoNe21}, the term ``compatibility dimension" was introduced and
for observables $\A_1,\ldots,\A_n$ on a $d$ dimensional Hilbert space $\hi=\complex^{d}$ it is
	\begin{align*}
		R(\A_1,\ldots,\A_n)=\max\{r\le d\mid
		\exists V\colon\complex^{r}\to\complex^{d}\ isometry\ \\
		s.t.\  V^{*}\A_1V, ,\ldots,V^{*}\A_n V\  are\  compatible
		\},
	\end{align*}
 Evaluations of $R(\A_1,\ldots,\A_n)$ in various cases such as $n=2$ and $\A_{1}$ and $\A_{2}$ are rank-1 were presented in \cite{LoNe21}.
	To describe it in our notions, let us denote $\complex^{r}$ by $\hik$, and define $\state_{\hi}$ and $\state_{\hik}$ as the set of all density operator on $\hi$ and $\hik$ respectively. 
	We also introduce $\state_{V\hik}$ as
	\begin{align*}
		\state_{V\hik}:=\{\varrho\in\state\mid \mathrm{supp}\varrho\subset V\hik\}=
		V\state_{\hik}V^{*}\subset\state_{\hi}.
	\end{align*}
	Then, we can see that the $\state_{\hik}$-compatibility of $V^{*}\A_1V, ,\ldots,V^{*}\A_n V$ is equivalent to the $\state_{V\hik}$-compatibility of $\A_1,\ldots,\A_n$.
	Therefore, if we focus only on sets of states such as $\state_{V\hik}$ (i.e. states with fixed support), then there is no essential difference between our compatibility dimension and the previous one: $R(\A_1,\ldots,\A_n)=r$ iff $\chi_{comp}(\A_1,\ldots,\A_n)=r^{2}$. 
	In \cite{LoNe21} also the concept of ``strong compatibility dimension" was defined as
	\begin{align*}
		\overline{R}(\A_1,\ldots,\A_n)=\max\{r\le d\mid
		\forall V\colon\complex^{r}\to\complex^{d}\ isometry\ \\
		s.t.\  V^{*}\A_1V, ,\ldots,V^{*}\A_n V\  are\  compatible
		\}.
	\end{align*}
	It is related to our notion of incompatibility dimension.
	In fact, if we only admit sets of states such as $\state_{V\hik}$, then $\overline{R}(\A_1,\ldots,\A_n)$ and $\chi_{incomp}(\A_1,\ldots,\A_n)$ are essentially the same: $\overline{R}(\A_1,\ldots,\A_n)=r$ iff $\chi_{incomp}(\A_1,\ldots,\A_n)=(r+1)^{2}$.

Similar notions have been introduced and investigated also in \cite{Kiukas20,UoKrDeMiTaPeGuBr21}.
	As in \cite{LoNe21}, these works focus on quantum observables and on subsets of states that are lower dimensional subspaces of the original state space.
	Therefore, the notions are not directly applicable in GPTs.
In \cite{UoKrDeMiTaPeGuBr21} incompatibility is classified into three types.
They are explained exactly in terms of \cite{LoNe21} notion as\\
(i) incompressive incompatibility: $(\A_1,\ldots,\A_n)$ are $\state_{V\hik}$-compatible for all $\hik$ and $V$\\
(ii) fully compressive incompatibility: $(\A_1,\ldots,\A_n)$ are $\state_{V\hik}$-incompatible for all nontrivial $\hik$ and $V$\\
(iii) partly compressive incompatibility: there is a $V$ and $\hik$ such that $(\A_1,\ldots,\A_n)$ are $\state_{V\hik}$-compatible, and some $V'$ and $\hik'$ such that $(\A_1,\ldots,\A_n)$ are $\state_{V'\hik'}$-incompatible.\\
In \cite{UoKrDeMiTaPeGuBr21} concrete constructions of these three types of incompatible observables were given.

\section{Relation between incompatibility dimension and incompatibility witness for observables}

In this section we show how the notion of incompatibility dimension is related to the notion of incompatibility witness.
An \emph{incompatibility witness} is an affine functional $\xi$ defined on $n$-tuples of observables such that $\xi$ takes non-negative values on all compatible $n$-tuples and a negative value at least for some incompatible $n$-tuple \cite{Jencova18,CaHeTo19,CaHeMiTo19JMP}.
Every incompatibility witness $\xi$ is of the form 
\begin{align}
	\label{eq:standard-form0}
	\xi(\oplus_{j=1}^{n} \A_j)=\delta - 
	f(\oplus_{j=1}^{n} \A_j),
\end{align}
where $\delta\in \mathbb{R}$ and $f$ is a linear functional on $\oplus_{j=1}^n \mathcal{L}_s(\hi)^{m_j}$ with $\mathcal{L}_s(\hi)$ being the set of all 
self-adjoint operators on $\hi$ and $m_j$ the number of outcomes of $\A_j$.
It can be written also in the form
\begin{equation}
	\label{eq:standard-form}
	\xi(\A_1,\ldots,\A_n) = \delta - \sum_{j=1}^n \sum_{x_j=1}^{m_j} c_{j,x_j} \tr{\varrho_{j,x_j}\A_j(x_j)},
\end{equation}
where $c_{j,x_j}$'s are real numbers, and $\varrho_{j,x_j}$'s are states.
This result has been proven in \cite{CaHeTo19} for incompatibility witnesses acting on pairs of observables and the generalization to $n$-tuples is straightforward.
A witness $\xi$ \emph{detects} the incompatibility of observables $\A_1,\ldots,\A_n$ if $\xi(\A_1,\ldots,\A_n) <0$.
The following proposition gives a simple relation between incompatibility dimension and incompatibility witness.
\begin{proposition}
\label{prop:dim-witness}
Assume that an incompatibility witness $\xi$ has the form \eqref{eq:standard-form} and it detects the incompatibility of observables $\A_1,\ldots,\A_n$.
Then $\A_1,\ldots,\A_n$ are $\state_0$-incompatible for $\state_0 = \{ \varrho_{j,x_j}\mid j=1,\ldots,n, x_j=1,\ldots,m_j\}$.
\end{proposition}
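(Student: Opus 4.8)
The plan is to argue by contradiction, exploiting the crucial feature of the standard form \eqref{eq:standard-form}: the witness probes the tuple $\A_1,\ldots,\A_n$ only through the finitely many numbers $\tr{\varrho_{j,x_j}\A_j(x_j)}$, and by construction every state $\varrho_{j,x_j}$ appearing there belongs to $\state_0$. Thus $\xi$ cannot distinguish two tuples that happen to agree on $\state_0$, and this is exactly what drives the argument.

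First I would suppose, toward a contradiction, that $\A_1,\ldots,\A_n$ are $\state_0$-compatible. By Definition \ref{def:comp} this produces compatible observables $\A'_1,\ldots,\A'_n$ of the same type (in particular with the same outcome sets) satisfying $\tr{\varrho\A'_j(x_j)}=\tr{\varrho\A_j(x_j)}$ for all $j$, all outcomes $x_j$, and all $\varrho\in\state_0$. The key step is then immediate: since each $\varrho_{j,x_j}\in\state_0$, this agreement gives $\tr{\varrho_{j,x_j}\A'_j(x_j)}=\tr{\varrho_{j,x_j}\A_j(x_j)}$ for every $j,x_j$, and substituting into \eqref{eq:standard-form} yields $\xi(\A'_1,\ldots,\A'_n)=\xi(\A_1,\ldots,\A_n)$. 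Finally I would close the loop: by hypothesis $\xi(\A_1,\ldots,\A_n)<0$, hence $\xi(\A'_1,\ldots,\A'_n)<0$; but $\A'_1,\ldots,\A'_n$ are compatible and $\xi$ is an incompatibility witness, so $\xi$ must be non-negative on them. This contradiction shows that no such compatible $\A'_j$ can exist, i.e. $\A_1,\ldots,\A_n$ are $\state_0$-incompatible.

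I do not expect a genuine obstacle here; the entire proof rests on the single observation that the witness sees the observables only through their values on $\state_0$, which is transparent from \eqref{eq:standard-form}. The one point deserving a line of care is that $\xi$ can be evaluated on the primed tuple at all: this requires $\A'_j$ to carry the same outcome label set as $\A_j$, which is guaranteed by the ``same type'' clause of Definition \ref{def:comp}. Once that is noted, the substitution and the defining property of a witness finish the argument in essentially two lines.
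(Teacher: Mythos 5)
Your proposal is correct and follows exactly the paper's own argument: assume $\state_0$-compatibility, use the fact that the witness of form \eqref{eq:standard-form} depends on the observables only through the traces against the states $\varrho_{j,x_j}\in\state_0$ to conclude $\xi(\A'_1,\ldots,\A'_n)=\xi(\A_1,\ldots,\A_n)<0$, and contradict non-negativity of $\xi$ on compatible tuples. The only difference is presentational — you spell out the substitution step and the matching of outcome sets, which the paper leaves implicit.
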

\begin{proof}
Let $\A_1,\ldots,\A_n$ be $\state_0$-compatible.
Then we would have compatible observables $\widetilde{\A}_1,\ldots,\widetilde{\A}_n$ such that $\tr{\varrho\A_j(x_j)}=\tr{\varrho\widetilde{\A}_j(x_j)}$ for all $\varrho\in\state_0$.
This would imply that 
\begin{equation*}
	\xi(\A_1,\ldots,\A_n) = \xi(\widetilde{\A}_1,\ldots,\widetilde{\A}_n) \geq 0 \, , 
\end{equation*}
which contradicts the assumption that $\xi$ detects the incompatibility of observables $\A_1,\ldots,\A_n$.
\end{proof}
It has been shown in \cite{CaHeTo19} that any incompatible pair of observables is detected by some incompatibility witness of the form \eqref{eq:standard-form}.
The proof is straightforward to generalize to $n$-tuples of observables, and thus, together with Proposition \ref{prop:dim-witness}, we can obtain
\begin{equation}
	\label{eq:upper bound}
	\chi_{incomp}(\A_1,\ldots,\A_n) \leq m_1 + \cdots + m_n.
\end{equation}
That is, the incompatibility dimension of $\A_1,\ldots,\A_n$ can be evaluated via their incompatibility witness (we will derive a better upper bound later in this section).
We can further prove the following proposition.

\begin{proposition}
	\label{prop:dim-witness0}
	The statements (i) and (ii) for a set of incompatible observables $\{\A_1, \ldots, \A_n\}$ are equivalent:
	\begin{itemize}
		\item[(i)]$\chi_{incomp}(\A_1, \ldots, \A_n)\leq N$ 
		\item[(ii)]There exist a family of linearly independent 
		states $\{\varrho_1, \ldots, \varrho_N\}$ and real numbers $\delta$ and $\{c_{l,j,x_j}\}_{l, j, x_{j}}$ 
		$(l=1,\ldots, N, j=1,\ldots,n, x_j=1,\ldots,m_j)$ such that the incompatibility witness $\xi$ defined by
		\begin{align*}
			\xi(\B_1, \ldots, \B_n)
			=\delta- \sum_{l=1}^{N}\sum_{j=1}^{n} \sum_{x_j=1}^{m_j} 
			c_{l,j,x_j}\mbox{tr}[\varrho_l \B_j(x_j)]
		\end{align*}
	 detects the incompatibility of $\{\A_1, \ldots, \A_n\}$.
	\end{itemize}
\end{proposition}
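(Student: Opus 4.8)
The plan is to prove the two implications separately, with the direction (ii)$\Rightarrow$(i) being essentially a restatement of Proposition \ref{prop:dim-witness} and the direction (i)$\Rightarrow$(ii) resting on a separating hyperplane argument carried out in the finite-dimensional space of measurement statistics available on $\state_0$.

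For (ii)$\Rightarrow$(i), I would first observe that the witness $\xi$ in (ii), although written with the extra index $l$, is a functional of the form \eqref{eq:standard-form} in which the only states that appear are $\varrho_1,\ldots,\varrho_N$: regrouping the triple sum shows that $\xi(\B_1,\ldots,\B_n)=\delta-\sum_{l,j,x_j}c_{l,j,x_j}\tr{\varrho_l\B_j(x_j)}$ depends on $(\B_1,\ldots,\B_n)$ only through the numbers $\tr{\varrho_l\B_j(x_j)}$. Applying Proposition \ref{prop:dim-witness} with $\state_0=\{\varrho_1,\ldots,\varrho_N\}$ then shows that $\A_1,\ldots,\A_n$ are $\state_0$-incompatible. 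Since the $\varrho_l$ are linearly independent states they are affinely independent and their linear hull has dimension $N$, so $\mathrm{dim}\mathit{aff}\state_0+1=N$ and hence $\chi_{incomp}(\A_1,\ldots,\A_n)\leq N$.

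For (i)$\Rightarrow$(ii), assume $\chi_{incomp}(\A_1,\ldots,\A_n)\leq N$, so there is a subset on which the observables are incompatible and whose linear hull has dimension $M\leq N$. Choosing $M$ linearly independent states spanning that hull and padding with further linearly independent states (possible as long as $N\leq d^2$, which is the only meaningful range since $\chi_{incomp}\leq d^2$ always), and using that enlarging $\state_0$ preserves $\state_0$-incompatibility, I may assume $\state_0=\{\varrho_1,\ldots,\varrho_N\}$ with the $\varrho_l$ linearly independent. The central idea is then to consider the linear ``evaluation'' map $\Phi$ that sends an $n$-tuple $(\B_1,\ldots,\B_n)$ to the vector of all statistics $(\tr{\varrho_l\B_j(x_j)})_{l,j,x_j}$ in a finite-dimensional real vector space. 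By Definition \ref{def:comp}, $\state_0$-compatibility of $(\A_1,\ldots,\A_n)$ is exactly the statement that $\Phi(\A_1,\ldots,\A_n)$ equals $\Phi$ of some compatible tuple; hence $\state_0$-incompatibility means $\Phi(\A_1,\ldots,\A_n)\notin K$, where $K$ is the image under $\Phi$ of the set of all compatible $n$-tuples with the fixed outcome numbers $m_j$. That set of compatible tuples is compact, being the continuous marginal image of the compact set of joint observables, and convex, so $K$ is compact and convex. The separating hyperplane theorem applied to the external point $\Phi(\A_1,\ldots,\A_n)$ and the set $K$ then yields a linear functional whose coordinates are the desired coefficients $c_{l,j,x_j}$ together with a threshold $\delta$; choosing the orientation so that the statistics of $\A$ lie strictly on the negative side produces a witness $\xi$ of the required form that is nonnegative on all compatible tuples and strictly negative at $(\A_1,\ldots,\A_n)$.

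I expect the main obstacle to be the bookkeeping that turns the geometric separation into an honest incompatibility witness: one must verify that $K$ is genuinely compact and convex so that \emph{strict} separation of an external point is available, fix the sign of the separating functional so that $\xi\geq 0$ holds on every compatible tuple rather than merely on $K$, and confirm that the resulting affine functional is a witness in the sense defined in the paper. A secondary technical point is the padding step in (i)$\Rightarrow$(ii), namely arranging for the state family to have exactly $N$ linearly independent members while preserving incompatibility, which uses monotonicity of $\state_0$-incompatibility under enlarging $\state_0$ and the bound $\chi_{incomp}\leq d^2$.
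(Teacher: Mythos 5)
Your proof is correct, and while the direction (ii)$\Rightarrow$(i) coincides with the paper's (both run the argument of Proposition \ref{prop:dim-witness}; note only that the witness in (ii) is not literally of the one-state-per-outcome form \eqref{eq:standard-form}, but, as you observe, the proof needs only that $\xi$ depends on a tuple through its statistics on $\state_0$), your (i)$\Rightarrow$(ii) takes a genuinely different route. The paper works in the operator space $\mathcal{L}=\oplus_{j=1}^n\mathcal{L}_s(\hi)^{m_j}$: it forms the affine set $K$ of all tuples whose statistics on $\{\varrho_1,\ldots,\varrho_N\}$ match those of $\A=\oplus_j\A_j$, notes that $\state_0$-incompatibility means $K\cap C=\emptyset$ with $C$ the set of compatible tuples, separates $K$ from $C$ by Rockafellar's theorem, and then---this is the technical core there---extends the embedded states $\hat{\varrho}_l^{j,x_j}$ to a basis of $\mathcal{L}$ and uses the dual basis to show that the normal vector of the separating hyperplane may be chosen in $\mathrm{span}\{\hat{\varrho}_l^{j,x_j}\}$, which is exactly what forces the witness to involve only the states $\varrho_l$; a final translation and displacement (using compactness of $C$) upgrades this to strict separation. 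You instead push everything forward by the evaluation map $\Phi(\B_1,\ldots,\B_n)=(\tr{\varrho_l\B_j(x_j)})_{l,j,x_j}$ into $\real^{N\sum_j m_j}$, where $\state_0$-incompatibility is precisely $\Phi(\A_1,\ldots,\A_n)\notin\Phi(C)$, the set $\Phi(C)$ is compact and convex (continuous linear image of the compact convex $C$), and the elementary strict separation of a point from a compact convex set not containing it finishes the proof. What your route buys is that every linear functional on the statistics space automatically has the required form $\sum_{l,j,x_j}c_{l,j,x_j}\tr{\varrho_l\,\B_j(x_j)}$, so the paper's dual-basis bookkeeping (its way of showing that the separating functional factors through $\Phi$) is obtained for free; the two arguments are otherwise mathematically equivalent, and both genuinely use compactness of $C$ to get strict rather than merely proper separation. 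Two further points where you are in fact more careful than the paper: the paper's proof assumes $\chi_{incomp}=N$ exactly, whereas (i) only asserts $\chi_{incomp}\le N$, and your padding step (enlarging to exactly $N$ linearly independent states, using that $\state_0$-incompatibility is preserved under enlarging $\state_0$) closes this small gap; and you flag explicitly that (ii) is only satisfiable when $N\le d^2$, which is the implicit range of validity of the proposition.
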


The claim $\mathit{(i)}\Rightarrow\mathit{(ii)}$ may be regarded as the converse of the previous argument to obtain \eqref{eq:upper bound}.
It manifests that we can find an incompatibility witness detecting the incompatibility of $\{\A_1, \ldots, \A_n\}$ reflecting their incompatibility dimension.
\begin{proof}
$\mathit{(ii)}\Rightarrow\mathit{(i)}$ can be proved in the same way as Proposition \ref{prop:dim-witness}.
Thus we focus on proving $\mathit{(i)}\Rightarrow\mathit{(ii)}$.

Suppose that a family of observable $\{\A_1, \ldots, \A_n\}$ satisfies $\chi_{incomp}(\A_1, \ldots, \A_n)= N$.
Then there exists a family of linearly independent states $\{\varrho_1, \varrho_2, \ldots, \varrho_N\}$ in $\mathcal{L}_s(\hi)$ on which $\{\A_1, \ldots, \A_n\}$ are incompatible.
We can regard the family $\{\A_1, \ldots, \A_n\}$ as an element of a vector space $\mathcal{L}$ defined as $\mathcal{L}:=\oplus_{j=1}^n \mathcal{L}_s(\hi)^{m_j}$, that is, $\A:=\oplus_{j=1}^n\A_j\in \mathcal{L}$.
For each $l=1,\ldots,N$, $j=1, \ldots, n$, and $x_j=1, \ldots, m_j$, let us define a subset $K(\A, \varrho_l, j,x_j)$ of 
$\mathcal{L}$ as  
\begin{align}
		\label{eq:def of K}
\begin{aligned}
	&K(\A, \varrho_l, j,x_j)\\
	&\qquad:=\{\B\in\mathcal{L}\mid\langle \varrho_l | \B_j(x_j)\rangle_{HS}
	= \langle \varrho_l | \A_j(x_j)\rangle_{HS}\}, 
\end{aligned} 
\end{align}
where $\langle \varrho_l | \A_j(x_j)\rangle_{HS}
:=\mbox{tr}[\varrho_l\A_j(x_j)]$ is the Hilbert-Schmidt inner 
product on $\mathcal{L}_s(\hi)$.
Note that this inner product can be naturally extended to an inner product $\langle\langle\cdot | \cdot\rangle\rangle$ on $\mathcal{L}$: 
\begin{align*}
\langle\langle\A | \B\rangle\rangle=\sum_{j=1}^{n}\sum_{x_{j}=1}^{m_j}\langle \A_j(x_j)|\B_j(x_j)\rangle_{HS} \, .
\end{align*}
Embedding $\varrho_{l}$ into $\mathcal{L}$ by $\hat{\varrho}_l^{j,x_j}=
\oplus_{i=1}^{n} \oplus_{y=1}^{m_{i}} \delta_{ij} \delta_{yx_j} 
\varrho_l$ for each $j,x_j$ and $l$, we obtain another representation 
of \eqref{eq:def of K} as 
\begin{align}
	\label{eq:def of K2}
	K(\A, \varrho_l, j,x_j)
	=\{\B\mid\langle \langle \hat{\varrho}_l^{j,x}| \B\rangle\rangle
	= \langle \langle \hat{\varrho}_l^{j,x_j}|\A\rangle\rangle\} \, .
\end{align}
Thus this set is a hyperplane in $\mathcal{L}$. 
Note that $\{\hat{\varrho}_l^{j,x}\}_{l,j,x_j}$ is a linearly independent set in $\mathcal{L}$.
Consider an affine set 
$K:=\cap_{l=1}^N \cap_{j=1}^{n}\cap_{x_{j}=1}^{m_j} K(\A,\varrho_l,j,x_j)$. 
Because $\{\A_1, \ldots, \A_n\}$ is incompatible in $\{\varrho_{1},\cdots,\varrho_N\}$, it satisfies 
\begin{align}
	\label{eq:KC-empty} 
	K\cap C=\emptyset,
\end{align}
where $C:=\{\C\in\mathcal{L}\mid\mbox{$\{\C_{1},\ldots\C_n\}$ is compatible} \}$.
Thus, by virtue of the separating hyperplane theorem \cite{CA97}, there exists a hyperplane in $\mathcal{L}$ which separates strongly the (closed) convex sets $K$ and $C$.
In the following, we will show that one of those separating hyperplanes can be constructed from $\{\hat{\varrho}_l^{j,x}\}_{l,j,x_j}$.

Let us extend a family of linearly 
independent vectors 
$\{\hat{\varrho}_l^{j,x_j}\}_{l, j, x_j}$ to form a basis of $\mathcal{L}$.  
That is, we introduce a 
basis 
$\{v_b\}_{b=1,\ldots, 
	\dim \mathcal{L}}$  of $\mathcal{L}$ satisfying  
$\{v_a\}_{a=1,\ldots, 
	N (\sum_j m_j)}= \{\hat{\varrho}_l^{j,x_j}\}_{l, j, x_j}$. 
We introduce its dual basis $\{w_b\}_{
	b=1,2,\ldots, \dim \mathcal{L}}$
satisfying
$\langle \langle v_a|
w_b\rangle \rangle=\delta_{ab}$. 
Because $K$ can be written as
\begin{align*}
K=\{\B\mid
\langle \langle
\hat{\varrho}_l^{j,x_j} |(\B-\A)\rangle\rangle
=0, \forall l, j,x_j \},
\end{align*}
it is represented in terms this (dual) basis as 
\begin{align*}
K=\A+K_0, 
\end{align*}
where $K_0$ is an affine set defined by 
\begin{align}
	K_0:&=
	\{ \sum_{a=N(\sum_j m_j)+1}^{\dim\mathcal{L}}c_a w_a \mid c_a\in\real\}
	\label{eq:expresison_of_K_0}
\end{align}
Now we can construct a hyperplane separating $K$ and $C$.
To do this, let us focus on the convex sets $K_0$ and $C':=C-\A$ instead of $K$ and $C$, which satisfy
$K_0\cap C'=\emptyset$ because of \eqref{eq:KC-empty}.
We can apply the separating hyperplane theorem (Theorem 11.2 in \cite{CA97} for the affine set $K_0$ and convex set $C'$.
There exists a hyperplane $H_0$ in $\mathcal{L}$ such that $K_0$ and $C'$ are contained by $H_0$ and one of its associating open half-spaces respectively.
That is, there exists $h\in\mathcal{L}$ satisfying
\begin{align*}
	\label{eq:hyperplane2}
	H_0=\{\B\in \mathcal{L}\mid\langle \langle
	\B |h\rangle\rangle=0
	\}
\end{align*}
with $K_0\subset H_0$, and $\langle \langle
\C' |h\rangle\rangle<0$ for all $\C'\in C'$.
Let us examine the vector $h$.
It satisfies 
\[
\langle \langle
w_{a} |h\rangle\rangle=0\ \ \mbox{for all $a=N(\sum_j m_j)+1,\ldots, \dim\mathcal{L}$}
\]
because $K_0\subset H_0$ (see \eqref{eq:expresison_of_K_0}).
Thus, if we write $h$ as $h=\sum_{a=1}^{\dim\mathcal{L}}c_a v_{a}$, then we can find that $c_a=0$ holds for all $a=N(\sum_j m_j)+1,\ldots, \dim\mathcal{L}$.
It follows that
\begin{align*}
	h=\sum_{a=1}^{N(\sum_j m_j)}c_a v_a=\sum_l\sum_j\sum_{x_j}c_{l, j, x_j}\hat{\varrho}_l^{j,x_j}
\end{align*}
holds, and the hyperplane $H_0$ can be written as
\begin{align*}
	\label{eq:hyperplane}
	H_0=\{\B\in \mathcal{L}\mid\sum_l\sum_j\sum_{x_j}c_{l, j, x_j}\tr{\varrho_l \B_j(x_j)}=0
	\}.
\end{align*}
Then, the hyperplane $H':=\A+H_0$, a translation of $H_0$, of the form
\[
H'=\{\B\in \mathcal{L}\mid\sum_l\sum_j\sum_{x_j}c_{l, j, x_j}\tr{\varrho_l \B_j(x_j)}=\delta'
\}
\]
contains the original sets $K$, and satisfy
\[
\sum_l\sum_j\sum_{x_j}c_{l, j, x_j}\tr{\varrho_l \C_j(x_j)}<\delta'
\] 
for all $\C\in C$.
We can displace $H'$ slightly in the direction of $C$ to obtain a hyperplane $H$ defined as
\[
H=\{\B\in \mathcal{L}\mid\sum_l\sum_j\sum_{x_j}c_{l, j, x_j}\tr{\varrho_l \B_j(x_j)}=\delta
\},
\]
which (strongly) separates $H'$ (in particular $K$) and $C$ because $H'$ is closed and $C$ is compact (see Corollary 11.4.2 in \cite{CA97}).
The claim now follows as $\A\in K$.
\end{proof}

\subsection*{An upper bound on the incompatibility dimension of observables via incompatibility witness}

We can give a better upper bound than \eqref{eq:upper bound} for the incompatibiliy dimension by slightly modifing the previous argument in \cite{CaHeTo19} on incompatibility witness.
\begin{proposition}
	\label{prop:upper-bound}
	Let $\A_1,\ldots,\A_n$ be incompatible observables with $m_1,\ldots,m_n$ outcomes, respectively. 
	Then
	\begin{align*}
		\chi_{incomp}(\A_1,\ldots,\A_n) \leq 
		\sum_{j=1}^n m_j -n+1.
	\end{align*}
\end{proposition}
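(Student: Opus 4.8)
The plan is to refine the derivation of the standard-form witness \eqref{eq:standard-form} from \cite{CaHeTo19} so that the number of states it involves drops from $\sum_{j=1}^n m_j$ to $\sum_{j=1}^n m_j - n + 1$, and then to invoke Proposition \ref{prop:dim-witness}. First I would take any incompatibility witness $\xi(\B_1,\ldots,\B_n) = \delta - \sum_{j=1}^n\sum_{x_j=1}^{m_j} c_{j,x_j}\tr{\varrho_{j,x_j}\B_j(x_j)}$ in standard form that detects $\A_1,\ldots,\A_n$; its existence for $n$-tuples is the generalization of \cite{CaHeTo19} already used to obtain \eqref{eq:upper bound}. Writing $M_{j,x_j} := c_{j,x_j}\varrho_{j,x_j}$, the linear part is the functional $f(\B) = \sum_{j,x_j}\tr{M_{j,x_j}\B_j(x_j)}$.

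The key observation is that every genuine $n$-tuple of observables---both the given incompatible one and every compatible competitor against which the witness is tested---lies in the affine subspace of $\mathcal{L}$ cut out by the normalizations $\sum_{x_j}\B_j(x_j) = \id$, $j=1,\ldots,n$. Hence I may add to $f$ any combination $\sum_j \tr{N_j(\sum_{x_j}\B_j(x_j) - \id)}$ of these constraints without changing the value of $\xi$ on any observable tuple, the resulting constant $\sum_j\tr{N_j}$ being absorbed into $\delta$; in particular the modified functional is still an incompatibility witness detecting $\A_1,\ldots,\A_n$. Choosing $N_j = M_{j,m_j}$ for each $j$ annihilates the term for the outcome $x_j = m_j$ and replaces the remaining coefficients by $M'_{j,x_j} := M_{j,x_j} - M_{j,m_j}$ for $x_j = 1,\ldots,m_j-1$. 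The witness now depends on $\B$ only through these $\sum_{j}(m_j-1) = \sum_j m_j - n$ self-adjoint operators, which span a subspace $W \subseteq \mathcal{L}_s(\hi)$ of dimension $D \leq \sum_j m_j - n$.

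It remains to rewrite this functional in the form required by Proposition \ref{prop:dim-witness0}(ii), i.e. with the operators replaced by real combinations of actual states. Taking a basis $B_1,\ldots,B_D$ of $W$, I would set $\varrho_0 = \id/d$ and $\varrho_i = \id/d + \epsilon_i (B_i - d^{-1}\tr{B_i}\,\id)$ with $\epsilon_i>0$ small enough that each $\varrho_i$ is a density operator; these $D+1$ states have linear span containing $\id$ and every $B_i$, hence all of $W$. Substituting $M'_{j,x_j} = \sum_l c_{l,j,x_j}\varrho_l$ turns $\xi$ into a witness of exactly the shape in Proposition \ref{prop:dim-witness0}(ii) built from the $N = D+1 \leq \sum_j m_j - n + 1$ states $\{\varrho_l\}$. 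By the argument of Proposition \ref{prop:dim-witness} (equivalently the direction (ii)$\Rightarrow$(i) of Proposition \ref{prop:dim-witness0}) the observables are then $\state_0$-incompatible for $\state_0 = \{\varrho_l\}$, and since $\dim\mathit{aff}\,\state_0 + 1 \leq |\state_0| \leq \sum_j m_j - n + 1$, the claimed bound follows.

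The step I expect to be the main obstacle is the last realization of the operator coefficients by genuine states: one must pass from arbitrary self-adjoint coefficients $M'_{j,x_j}$ to a small family of density operators, and the unavoidable cost of accommodating their traces and possible non-positivity is precisely the single extra state $\varrho_0 = \id/d$, which accounts for the ``$+1$'' in the bound. The preceding normalization trick---legitimate only because both the compatible set and the tested tuple sit in the same affine normalization subspace---is what removes one outcome per observable and produces the $-n$.
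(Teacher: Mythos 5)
Your proof is correct, and while it shares the paper's skeleton---exploit the freedom of adding to the witness any terms that vanish on the normalization constraints $\sum_{x_j}\B_j(x_j)=\id$, then conclude via Proposition \ref{prop:dim-witness}---the way you extract a small state set is genuinely different. The paper keeps one coefficient operator per outcome: it shifts the Riesz representatives $F_j(x_j)$ by multiples of $\id$ and by traceless operators $T_j$ until $\sum_{x_j}F''_j(x_j)=0$, adds $\alpha_j\id$ to gain positivity, and takes the test states to be the normalized coefficients $G_j(x_j)/\tr{G_j(x_j)}$ themselves; the $-n$ then appears a posteriori, because the relations $\sum_{x_j}\tr{G_j(x_j)}\varrho_j(x_j)=m_j\alpha_j\id$ (one per observable) make the centered states linearly dependent, giving $\mathrm{dim}\,\mathit{aff}\,\state_0\le\sum_j m_j-n$. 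You instead spend the same gauge freedom up front to annihilate the coefficient of one outcome per observable (minor slip: the required choice is $N_j=-M_{j,m_j}$, harmless since arbitrary multiples of the constraint may be added), and then manufacture states synthetically: an anchor $\id/d$ plus small perturbations along a basis of the span $W$ of the surviving coefficients. Your route avoids the positivity shift entirely and makes the bookkeeping transparent (the $-n$ is the deleted outcomes, the $+1$ is the anchor state), at the cost of choosing a basis and of using states that are no longer intrinsic to the witness. One point to tighten: if $\id\in W$, your states $\varrho_0,\ldots,\varrho_D$ need not be linearly independent as Proposition \ref{prop:dim-witness0}(ii) literally requires; this is immaterial, since you in fact only need the argument of Proposition \ref{prop:dim-witness}, which requires no independence, and the estimate $\mathrm{dim}\,\mathit{aff}\,\state_0+1\le D+1\le\sum_j m_j-n+1$ holds regardless.
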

\begin{proof}
We continue following the same notations as the proof of Proposition \ref{prop:dim-witness0}.
Let us assume the incompatibility of $\A_1,\ldots,\A_n$ are detected by an incompatibility witness $\xi$.
The functional $\xi$ is of the form 
\begin{align*}
		\xi(\A)=\delta - 
		f(\A)
\end{align*}
with a real number $\delta$ and a functional $f$ on $\mathcal{L}$ (see \eqref{eq:standard-form0}).
Then, Riesz representation theorem shows that 
the functional $f$ can be represented as 
\begin{align*}
f(\A) 
&=\sum_{j=1}^n \sum_{x_j}^{m_j}
\langle F_j(x_j) | \A_j(x_j)\rangle_{HS}
\end{align*}
with some $F_j(x_j) \in \mathcal{L}_s(\hi)$ $(j=1, \ldots, n,\ x_j=1, \ldots, m_{j})$.
If we define $F'_j(x_j) = F_j(x_j) +\epsilon_j \id$, then we find 
\begin{align*}
	\xi(\A)=\delta + d \sum_j \epsilon_j 
	- \sum_{j=1}^n \sum_{x_j=1}^{m_j} 
	\langle F'_j(x_j) | \A_j(x_j)\rangle_{HS}.  
\end{align*}
We choose $\epsilon_j$ so that 
$$
\sum_{x_j}\mbox{tr}[F'_j(x_j)] =\sum_{x_j} \langle F'_j (x_j)| \id\rangle_{HS}=0
$$
holds.  
The choice of $\{F'_j(x_j)\}_{j, x_j}$ has still some 
freedom. 
Each $F'_j(x_j)$ can be replaced with 
$F''_j(x_j)=F'_j(x_j) + T_j$, where $T_j
\in \mathcal{L}_s(\hi)$ satisfies $\mbox{tr}[T_j]
=\langle T_j |\id\rangle_{HS}=0$. 
In fact, it holds that
\begin{eqnarray*}
	&&
	\sum_{x_j} \langle F''_j(x_j)|\A_j(x_j)\rangle_{HS} 
	\\
	&&
	=
	\sum_{x_j} \langle F'_j (x_j) |\A_j(x_j)\rangle_{HS} 
	+ \sum_{x_j} \langle T_j|\A_j(x_j)\rangle_{HS} 
	\\
	&&
	= \sum_{x_j} \langle F'_j(x_j) | \A_j(x_j)\rangle_{HS} 
	+ \langle T_j |\id\rangle_{HS} 
	\\
	&&
	=\sum_{x_j} \langle F'_j(x_j) | \A_j(x_j)\rangle_{HS}.  
\end{eqnarray*}
We choose $T_j$ as $m_j T_j = - \sum_{x_j=1}^{m_j} F'_j(x_j)$ which indeed satisfies 
$m_j \langle T_j | \id\rangle_{HS}= -\sum_{x_j=1}^{m_j} \langle F'_j(x_j)|\id\rangle_{HS}=0$, 
i.e. $\tr{T_{j}}=0$,
to obtain 
\begin{eqnarray*}
	\sum_{x_j} F''_j(x_j) =0. 
\end{eqnarray*}
We further choose 
large numbers $\alpha_j \geq 0$ 
so that $G_j(x_j):=F''_j(x_j) + \alpha_j 
\id \geq 0$ for all $j$ and $x_j$.
Now we obtain a representation of 
the witness which is equivalent to $\xi$ for $n$-tuples of observables
as 
\begin{multline*}
	\xi^*(\A)
	= \delta + d \sum_j (\epsilon_j +\alpha_j)  
	\\
	- \sum_j \sum_{x_j} \langle G_j(x_j) |\A_j(x_j)\rangle_{HS}, 
\end{multline*} 
where positive operators 
$G_j(x_j)$'s satisfy $\sum_{x_j} 
G_j (x_j) = m_j \alpha_j \id$. 
Defining density operators 
$\varrho_j(x_j)$ by 
$\varrho_j(x_j) = \frac{G_j(x_j)}{\mbox{tr}[G_j(x_j)]}$, 
we obtain yet another representation 
\begin{multline*}
	\xi^*(\A) 
	= 
	\delta + d\sum_j (\epsilon_j + \alpha_j)
	\\
	- \sum_j \sum_{x_j} 
	\mbox{tr}[G_j(x_j)] \mbox{tr}[\varrho_j(x_j)
	\A_j(x_j)]
\end{multline*} 
with $\varrho_j(x_j)$'s satisfying 
constraints  
\begin{eqnarray}
	\label{eq:witness constraint}
	\sum_{x_j} \mbox{tr}[G_j(x_j)] \varrho_j(x_j)
	= m_j \alpha_j \id. 
\end{eqnarray}
Thus, according to Proposition \ref{prop:dim-witness}, $\A_1,\ldots,\A_n$ are $\state_{0}$-incompatible with $\state_{0}=\{\varrho_j(x_j)\}_{j, x_{j}}$.
To evaluate $\mathrm{dim}\mathit{aff}\state_{0}$, we focus on the condition \eqref{eq:witness constraint}.
Introducing parameters $p_j(x_{j}):=\mbox{tr}[G_j(x_j)] /dm_j \alpha_j$ such that $\sum_{x_j}p_j(x_{j})=1$, we obtain
\[
\sum_{x_j} p_j(x_{j}) \varrho_j(x_j)
= \frac{1}{d}\id,
\]
or
\[
\sum_{x_j} p_j(x_{j}) \tilde{\varrho_j}(x_j)=0,
\]
where $\tilde{\varrho_j}(x_j):=\varrho_j(x_j)-\frac{1}{d}\id$.
It follows that $\{\tilde{\varrho_j}(x_j)\}_{x_{j}}$  are linearly dependent, and thus
\[
\mathrm{dim}\mathit{span}\{\tilde{\varrho_j}(x_j)\}_{x_{j}}\le m_{1}-1.
\]
Similar arguments for the other $j$'s result in
\[
\mathrm{dim}\mathit{span}\{\tilde{\varrho_j}(x_j)\}_{j, x_{j}}\le \sum_{j}(m_{j}-1)=\sum_{j}m_{j}-n.
\]
Considering that 
\[
\mathrm{dim}\mathit{span}\{\tilde{\varrho_j}(x_j)\}_{j, x_{j}} =\mathrm{dim}\mathit{aff}\{\varrho_j(x_j)\}_{j, x_{j}}
\]
holds, we can obtain the claim of the proposition. 
\end{proof}

The bound in Proposition \ref{prop:upper-bound} is not tight in general since the right-hand side of the inequality can exceed the bound obtained in \eqref{eq:incomp-bounds}.  
However, for small $n$ and $m_j$'s, the bound can be tight. 
In fact, while for $n=2$ and $m_1=m_2=2$ it gives $\chi_{incomp}(\A_1, \A_2)\leq 3$, we will construct an example which attains this upper bound in the next section.  

\section{Mutually unbiased qubit observables}\label{sec:qubit}

In this section we study the incompatibility dimension of pairs of unbiased qubit observables introduced in Example \ref{ex:qubit-1}.
We concentrate on pairs that are mutually unbiased, i.e., $\tr{\A^\va(\pm) \A^\vb(\pm)} = 1/2$.
(This terminology originates from the fact that if the observables are sharp, then the respective orthonormal bases are mutually unbiased. In the previously written form the definition makes sense also for unsharp observables \cite{BeBuBuCaHeTo13}.)
The condition of mutual unbiasedness is invariant under a global unitary transformation, hence it is enough to fix the basis $\vx=(1,0,0)$, $\vy=(0,1,0)$, $\vz=(0,0,1)$ in $\real^3$ and choose two of these unit vectors. 
We will study the observables $\A^{t\vx}$ and $\A^{t\vy}$, where $0\leq t \leq 1$.
The observables are written explicitly as
\begin{align*}
	\A^{t\vx}(\pm)=
	\frac{1}{2}(\id \pm t \sigma_1) \, , \quad 
	\A^{t\vy}(\pm)=
	\frac{1}{2}(\id \pm t \sigma_2).
\end{align*}
The condition \eqref{eq:paul} shows that $\A^{t\vx}$ and $\A^{t\vy}$ 
are incompatible if and only if $1/\sqrt{2} < t \leq 1$. 
The choice of having mutually unbiased observables as well as using a single noise parameter instead of two is to simplify the calculations. 

We have seen in Example \ref{ex:fix} that $\chi_{comp}(\A^{t\vx},\A^{t\vy})=3$ for all values $t$ for which the pair is incompatible.
We have further seen (discussion after Example \ref{ex:pq}) that $\chi_{incomp}(\A^{\vx},\A^{\vy})=2$, 
and from Prop. \ref{prop:upper-bound} follows that $\chi_{incomp}(\A^{t\vx},\A^{t\vy}) \leq 3$ for all $1/\sqrt{2} < t \leq 1$.
The remaining question is then about the exact value of $\chi_{incomp}(\A^{t\vx},\A^{t\vy})$, which can depend on the noise parameter $t$ and will be in our focus in this section (see Table \ref{table_incomp}). 

\begin{table}[h]
	\centering
	\begin{tabular}{c||c|c}
		& $\chi_{incomp}(\A^{t\vx},\A^{t\vy})$ & $\chi_{comp}(\A^{t\vx},\A^{t\vy})$ \\ \hline
		$t\le\frac{1}{\sqrt{2}}$ & - & - \\ \hline
		$\frac{1}{\sqrt{2}}<t<1$ 
		& \begin{tabular}{c}
			2 or 3\\
			(Proposition \ref{prop:qubit-threshold})
		\end{tabular}	 
		& \multirow{2}{*}{
			\begin{tabular}{c}
				3\\
				(Example \ref{ex:fix})
			\end{tabular}	
		} \\ \cline{1-2}
		$t=1$ 
		& \begin{tabular}{c}
			2\\
			(Example \ref{ex:pq})
		\end{tabular}	 
		&  \\ \hline
	\end{tabular}
	\caption{$\chi_{incomp}$ and $\chi_{comp}$ for $(\A^{t\vx},\A^{t\vy})$ with $0\le t\le1$. For $t\le1/\sqrt{2}$ the observables $\A^{t\vx}$ and $\A^{t\vy}$ are compatible and $\chi_{incomp}$ and $\chi_{comp}$ are not defined.}
	\label{table_incomp}
\end{table}

Let us first make a simple observation that follows from Prop. \ref{prop:post-processing}. 
Considering that $\A^{s\vx}$ is obtained as a post-processing of $\A^{t\vx}$ if and only if $s\leq t$, we conclude that
\begin{align*}
	\chi_{incomp}(\A^{s\vx},\A^{s\vy})=2\qquad\qquad\qquad\qquad\qquad\qquad\\
	\Rightarrow\chi_{incomp}(\A^{t\vx},\A^{t\vy})=2 \quad \textrm{for $\frac{1}{\sqrt{2}}<s \leq t$} \, , 
\end{align*}
and 
\begin{align*}
	\chi_{incomp}(\A^{s' \vx},\A^{s' \vy})=3\qquad\qquad\qquad\qquad\qquad\qquad\\
	\Rightarrow\chi_{incomp}(\A^{t'\vx},\A^{t'\vy})=3 \quad \textrm{for $s'\geq t'>\frac{1}{\sqrt{2}}$} \, .
\end{align*}
Interestingly, there is a threshold value $t_0$ where the value of $\chi_{incomp}(\A^{t\vx},\A^{t\vy})$ changes; this is the content of the following proposition.

\begin{proposition}\label{prop:qubit-threshold}
	There exists $1/\sqrt{2}<t_0<1$ such that $\chi_{incomp}(\A^{t\vx},\A^{t\vy})=3$ for $1/\sqrt{2}<t\le t_0$ and $\chi_{incomp}(\A^{t\vx},\A^{t\vy})=2$ for $t_0<t \leq 1$.
\end{proposition}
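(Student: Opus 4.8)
The plan is to reduce everything to two endpoint facts and then let monotonicity do the rest. We already know $\chi_{incomp}(\A^{t\vx},\A^{t\vy})\in\{2,3\}$ for $1/\sqrt2<t\le1$ (by \eqref{eq:incomp-bounds} and Prop.~\ref{prop:upper-bound}), and that $t\mapsto\chi_{incomp}(\A^{t\vx},\A^{t\vy})$ is non-increasing (the consequence of Prop.~\ref{prop:post-processing} recorded just before the statement). A non-increasing $\{2,3\}$-valued function automatically has the asserted shape, with $t_0:=\sup\{t\mid\chi_{incomp}(\A^{t\vx},\A^{t\vy})=3\}$. Hence it suffices to prove (A) that the value is $3$ for some $t$ near $1/\sqrt2$, giving $t_0>1/\sqrt2$, and (B) that the value is $2$ for some $t<1$, giving $t_0<1$; the boundary value $\chi_{incomp}=3$ at $t_0$ itself I would settle at the end by checking that the condition $\chi_{incomp}=2$ is open in $t$ (a strict two-state witness persists under small perturbations of $t$).

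For (B) I would perturb the sharp case. At $t=1$, Example~\ref{ex:pq} exhibits $\state_0$-incompatibility for $\state_0=\{\tfrac12(\id-\sigma_1),\tfrac12(\id-\sigma_2)\}$, whose Bloch vectors are $-\vx,-\vy$. Rerunning that argument for $t<1$, the marginal constraints on a hypothetical joint observable $\G$ read $\tr{\varrho_1\A'(+)}=\tr{\varrho_2\B'(+)}=(1-t)/2$ and $\tr{\varrho_2\A'(+)}=\tr{\varrho_1\B'(+)}=1/2$, forcing $\tr{\varrho_1\G(+,\cdot)}$ and $\tr{\varrho_2\G(\cdot,+)}$ to be of order $1-t$. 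As $t\to1$ this drives $\G(+,+)\to0$ and $\A'(+)+\B'(+)\to\kb{+_x}{+_x}+\kb{+_y}{+_y}$, an operator whose largest eigenvalue is $1+1/\sqrt2>1$, contradicting $\G(-,-)=\id-\A'(+)-\B'(+)+\G(+,+)\ge0$. Since the violation is strict at $t=1$, I expect it to survive on a left-neighborhood of $1$; to make this quantitative I would package it as an explicit two-state incompatibility witness (Prop.~\ref{prop:dim-witness}) on $\varrho_1,\varrho_2$ whose value remains negative for $t\in(1-\epsilon,1]$. This yields $\chi_{incomp}=2$ there and hence $t_0<1$.

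Direction (A) is the main obstacle because it is a universal statement: for $t$ just above $1/\sqrt2$ I must rule out every two-state set. For $\state_0=\{\varrho_1,\varrho_2\}$ with Bloch vectors $\vr_1,\vr_2$, $\state_0$-compatibility amounts to replacing $\A^{t\vx}(+)$ by $\A^{t\vx}(+)+W_A$ and $\A^{t\vy}(+)$ by $\A^{t\vy}(+)+W_B$ with $W_A,W_B$ in the two-dimensional Hilbert--Schmidt orthocomplement of $\mathrm{span}\{\varrho_1,\varrho_2\}$, and then asking for joint measurability of the modified effects; in Bloch language this lets me move each Bloch vector in the plane orthogonal to $\vr_1-\vr_2$ while re-biasing the effect. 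I would split on the unit normal $\mathbf{n}$ of the plane through $\vr_1,\vr_2$. When $\mathbf{n}$ is bounded away from $\pm\vz$, an \emph{unbiased} shift along $\vr_1\times\vr_2$ already pushes the pair strictly inside the Busch region \eqref{eq:paul}: a short computation gives the optimized quantity $t\bigl(\sqrt{2-(n_1+n_2)^2}+\sqrt{2-(n_1-n_2)^2}\bigr)$, which is $<2$ for $t$ close enough to $1/\sqrt2$, uniformly away from $\mathbf{n}=\pm\vz$. The delicate regime is $\mathbf{n}\approx\pm\vz$, i.e. sets in or near the $xy$-plane---exactly the family containing the sharp witness $\{-\vx,-\vy\}$---where unbiased shifts are useless and genuinely \emph{biased} approximating observables are needed.

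For that regime I would invoke the closed-form coexistence criterion for two biased qubit effects and reduce joint measurability to a feasibility problem in the remaining free parameters (the out-of-plane Bloch components and the two biases). Exploiting the residual $\vx\leftrightarrow\vy$ and $\vz\to-\vz$ symmetries of $(\A^{t\vx},\A^{t\vy})$, I would restrict attention to the symmetric $xy$-plane sets and solve the resulting low-dimensional feasibility, checking that a compatible biased pair exists precisely for $t$ in an interval $(1/\sqrt2,t_0]$ and fails above it. The step I expect to be genuinely laborious---and the crux of the whole argument---is proving that this symmetric family is the worst case, so that controlling it controls all two-state sets at once; once that is in hand the same threshold simultaneously gives $t_0>1/\sqrt2$ and matches $t_0$ to the onset of $\chi_{incomp}=2$, completing the proof.
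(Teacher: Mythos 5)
Your skeleton coincides with the paper's: the value lies in $\{2,3\}$ by \eqref{eq:incomp-bounds} and Proposition \ref{prop:upper-bound}, it is non-increasing in $t$ by Proposition \ref{prop:post-processing}, so everything reduces to (A) value $3$ near $t=1/\sqrt{2}$, (B) value $2$ strictly below $t=1$, and (C) deciding which side the threshold point falls on. Your (B) and (C) are correct and take a genuinely different route from the paper: you obtain openness of $\{t\mid\chi_{incomp}(\A^{t\vx},\A^{t\vy})=2\}$ from the witness equivalence (Propositions \ref{prop:dim-witness} and \ref{prop:dim-witness0} with $N=2$) together with the fact that $t\mapsto\A^{t\vx}(\pm)$ is affine, so a strictly negative two-state witness value persists under perturbations of $t$; the paper instead proves left-openness of this set in its Part 2 through a compactness argument (closedness of the constraint sets $X^{t}\times Y^{t}$, compactness of the set $JM(2,2)$ of compatible pairs, and Urysohn's lemma). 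Your route is shorter and is fully supported by results established before the proposition. Your unbiased-shift computation for two-state sets whose plane normal $\mathbf{n}$ stays away from $\pm\vz$ is also correct: the optimized functional in \eqref{eq:paul} is $t\bigl(\sqrt{2-(n_1+n_2)^2}+\sqrt{2-(n_1-n_2)^2}\bigr)$, the shifted Bloch vectors remain admissible, and the bound is uniform on that region for $t$ close to $1/\sqrt{2}$.

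The genuine gap is exactly where you flag it: direction (A) for two-state sets in or near the $xy$-plane, which is the entire content of the paper's Part 1, is left as a plan, and the plan's pivotal step --- ``the symmetric family is the worst case'' --- is not only unproven but unlikely to be workable in the pointwise form you state, because there is no worst two-state set. In the paper's parametrization of chords of the equatorial disk by $(\varphi_{0},\psi_{0})$, the obstruction at $t=1/\sqrt{2}$ is controlled by $\widehat{\xi}_{1}^{min}+\widehat{\xi}_{2}^{min}$, which is strictly below $-\pi/2$ for every actual chord (Lemmas \ref{lem:decrease} and \ref{lem:Xi_bounds}), but whose supremum over all chords equals $-\pi/2$, approached only in the degenerate limit $\psi_{0}\to 0$, $\varphi_{0}\to\pi/4$, in which the two states merge; the symmetric configurations you propose to solve are precisely the ones along which this degeneration happens. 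Consequently, ``solve the symmetric family and dominate everything by it'' cannot close the argument: since the extremum is not attained, one needs estimates that remain uniform along the degeneration, and supplying them is what the paper's Part 1 actually does --- Proposition \ref{prop:xi1_detail_1} gives a uniform constant $C<-\pi/2$ for $\psi_{0}\in[\pi/4,\pi/2)$, while for $\psi_{0}\in(0,\pi/4)$ the antipodal-shift idea is abandoned in favour of the full biased coexistence criterion (the function $Z$ in \eqref{eq:def of Z}), with lower bounds on $w_1(\xi_1^{min})$ and $w_2(\xi_2^{min})$ that survive the limit. A further, smaller, unaddressed point: for each fixed $t>1/\sqrt{2}$ your unbiased-shift bound leaves a whole neighborhood of $\mathbf{n}=\pm\vz$ uncovered (genuinely three-dimensional configurations, not just in-plane ones), so your feasibility analysis must include the out-of-plane degrees of freedom as well; the paper avoids this by its projection lemmas reducing all lines in the ball to lines in the disk. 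Until uniform statements of this kind are proven, $t_0>1/\sqrt{2}$ --- the heart of Proposition \ref{prop:qubit-threshold} --- remains unestablished.
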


The main line of the lengthy proof of Prop. \ref{prop:qubit-threshold} is the following. 
Defining two subsets $L$ and $M$ of $(\frac{1}{\sqrt{2}}, 1]$ as
\begin{align}
	\label{eq: sets of xi=2,3}
	\begin{aligned}
		L&:=\{t\mid\chi_{incomp}(\A^{t\vx},\A^{t\vy})=2\},\\
		M&:=\{t\mid\chi_{incomp}(\A^{t\vx},\A^{t\vy})=3\},
	\end{aligned}
\end{align}
we see that
\begin{align}
	\inf L=\sup M(=:t_{0}')
\end{align}
holds unless $L$ and $M$ are empty.
By its definition, the number $t_{0}'$ satisfies
\begin{align*}
	&\chi_{incomp}(\A^{t\vx},\A^{t\vy})=2\ \ \mbox{for\  $t>t_{0}'$},\\
	&\chi_{incomp}(\A^{t\vx},\A^{t\vy})=3\ \ \mbox{for\  $t<t_{0}'$}.
\end{align*}
Based on the considerations above, the proof of Proposition \ref{prop:qubit-threshold} proceeds as follows.
First, in Part 1, we prove that $M$ is nonempty while $L$ has already been shown to be nonempty as $t=1\in L$.
It will be found that $\chi_{incomp}(\A^{t\vx},\A^{t\vy})=3$ for $t$ sufficiently close to $\frac{1}{\sqrt{2}}$, and thus $t_{0}'$ introduced above can be defined successfully.
Then, we demonstrate in Part 2 that $\sup M=\max M$, i.e. $t_{0}'$ equals to $t_{0}$ in the claim of Prop. \ref{prop:qubit-threshold}.

\begin{remark}
	In \cite{GuQuAo19} a similar problem to ours was considered.
	While in that work the focus was on several affine sets, and a threshold value $t_{0}$ was given for each of them by means of their semidefinite programs where observables $\{\A^{t\vx}, \A^{t\vy}, \A^{t\vz}\}$ become compatible, we are considereding \emph{all} affine sets with dimension 2.
\end{remark}

\subsection*{Proof of Proposition \ref{prop:qubit-threshold}: Part 1}

In order to prove that
$M$ is nonempty, let us introduce some relevant notions: 
\begin{align*}
	&D:= \{\mathbf{v}\mid |\mathbf{v}| \leq 1,\ v_z=0\}\subset B:= \{\mathbf{v}\mid|\mathbf{v}| \leq 1\},\\
	&\state_D:= \{\varrho^{\mathbf{v}}\mid
	\mathbf{v} \in D\}\subset\state=\{\varrho^{\mathbf{v}}\mid
	\mathbf{v} \in B\},
\end{align*}
where $\mathbf{v}=v_{\x}\vx+v_{\y}\vy+v_{\z}\vz\in\real^3$, and $\varrho^{\mathbf{v}}:=\frac{1}{2}(\id+\mathbf{v}\cdot\sigma)$.
Since $\state_D$ is a convex set, we can treat $\state_D$ almost like a quantum system.
In the following, we will do it without giving precise definitions because they are obvious. 
For an observable $\E$ on $\mathcal{S}$ with effects $\{\E(x)\}_{x}$, 
we write its restriction 
to $\state_D$ as $\E|_D$ with effects $\{\E(x)|_D\}_{x}$, which is an observable on $\state_D$.
It is easy to obtain the following Lemma. 
\begin{lemma}
	The followings are equivalent:
	\begin{itemize}
		\item[(i)]$\A^{t\vx}$ and $\A^{t\vy}$ are 
		incompatible (thus $\frac{1}{\sqrt{2}}<t\leq 1$).
		\item[(ii)] 
		$\A^{t\vx}$ and $\A^{t\vy}$
		are $\state_D$-incompatible.
		\item[(iii)]
		$\A^{t\vx}|_D$ and $\A^{t\vy}|_D$ 
		are incompatible as observables on $\state_D$. 
	\end{itemize}
\end{lemma}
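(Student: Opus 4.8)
The plan is to prove the three conditions equivalent through the cycle $(i)\Rightarrow(iii)\Rightarrow(ii)\Rightarrow(i)$. The organizing observation is that the effects $\A^{t\vx}(\pm)=\tfrac12(\id\pm t\sigma_1)$ and $\A^{t\vy}(\pm)=\tfrac12(\id\pm t\sigma_2)$, as well as every state $\varrho^{\mathbf v}\in\state_D$ (those with $v_z=0$), are built only from $\id,\sigma_1,\sigma_2$; the direction $\sigma_3$ never appears. Thus $\state_D$, equipped with its affine functionals, behaves exactly like the ``$\sigma_1\sigma_2$-plane'' of the qubit, and I expect the entire statement to follow from this confinement to two Bloch directions.

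Two of the three implications are routine. For $(ii)\Rightarrow(i)$ I take contrapositives: if $\A^{t\vx},\A^{t\vy}$ are compatible on the full state space, they serve as their own compatible approximants on any subset, so in particular they are $\state_D$-compatible. For $(iii)\Rightarrow(ii)$, again by contraposition, suppose $\A^{t\vx},\A^{t\vy}$ are $\state_D$-compatible, witnessed by a compatible pair with quantum joint observable $\G$ agreeing with them on $\state_D$; then the restriction $\G|_D$ is a joint observable of $\A^{t\vx}|_D,\A^{t\vy}|_D$ in the GPT $\state_D$, because restricting to $\state_D$ commutes with forming marginals. Hence GPT-incompatibility of the restrictions already forces $\state_D$-incompatibility.

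The one step carrying content is $(i)\Rightarrow(iii)$, whose contrapositive asserts that a GPT joint observable of $\A^{t\vx}|_D,\A^{t\vy}|_D$ can be \emph{lifted} to a genuine quantum joint observable. Here I would argue directly. A GPT effect on $\state_D$ is an affine map $\varrho^{\mathbf v}\mapsto\tfrac12(g_0+g_xv_x+g_yv_y)$, and its validity over the whole disc is precisely $g_0\ge\sqrt{g_x^2+g_y^2}$ (together with $g_0+\sqrt{g_x^2+g_y^2}\le2$). I lift such an effect to the operator $\tfrac12(g_0\id+g_x\sigma_1+g_y\sigma_2)$, that is, I set its $\sigma_3$-component to zero; its eigenvalues are $\tfrac12(g_0\pm\sqrt{g_x^2+g_y^2})$, so operator positivity coincides exactly with GPT validity. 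Applying this lift to the four effects of a joint observable gives four positive operators summing to $\id$, and since $\A^{t\vx},\A^{t\vy}$ carry no $\sigma_3$-component, matching the coefficients of $\id,\sigma_1,\sigma_2$ shows that the marginals of the lifted observable equal $\A^{t\vx}$ and $\A^{t\vy}$ on the full space. Therefore the pair is genuinely compatible, closing the cycle.

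The main obstacle, modest as it is, sits in this lifting step: one must be sure that a GPT joint observable on the disc does not secretly require a $\sigma_3$-component that no quantum effect could supply, which would make compatibility on $\state_D$ strictly weaker than full compatibility. The confinement observation dispels this, since the zero-$\sigma_3$ lift always works and positivity is preserved identically; the remaining checks (normalization and the two marginal conditions) are pure bookkeeping. As a cross-check I would note that an arbitrary full joint observable can be brought to this zero-$\sigma_3$ form by averaging under the state-space symmetry $X\mapsto\sigma_1 X^{T}\sigma_1$, which fixes $\sigma_1,\sigma_2$ and reverses $\sigma_3$, confirming that nothing is lost by restricting attention to the plane.
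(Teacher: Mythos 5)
Your proposal is correct and takes essentially the same route as the paper: the same cycle of contrapositive implications, with the key step $(i)\Rightarrow(iii)$ handled by exactly the paper's device of extending a joint observable on $\state_D$ by a zero $\sigma_3$-component (the paper phrases this as regarding each effect $c_0\id+c_1\sigma_1+c_2\sigma_2$ as an effect on $\state$). Your explicit check that operator positivity of the lift coincides with GPT validity on the disc, and the symmetrization cross-check, merely fill in details the paper leaves implicit.
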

\begin{proof}
	(i) $\Rightarrow$ (iii). 
	Suppose that $\A^{t\vx}|_D$ and $\A^{t\vy}|_D$ are compatible in $\state_D$. 
	There exists an observable $\M$ on $\state_D$ 
	whose marginals coincide with 
	$\A^{t\vx}|_D$ and $\A^{t\vy}|_D$. 
	One can extend this $\M$ to 
	the whole $\state$ so that it does not 
	depend on $\z$ (for example, one can simply regard its effect $c_{0}\id+c_{1}\sigma_{1}+c_{2}\sigma_{2}$ as an effect on $\state$). Since both $\A^{t\vx}|_D$ and $\A^{t\vy}|_D$ also do not depend on $\z$, the extension of $\M$ gives a joint observable of $\A^{t\vx}$ and $\A^{t\vx}$. 
	\\
	(iii) $\Rightarrow $ (ii). 
	Suppose that $\A^{t\vx}$ and 
	$\A^{t\vy}$ are $\state_D$-compatible. 
	There exists an observable $\M$ on $\state$ 
	whose marginals coincide with $\A^{t\vx}$ 
	and $\A^{t\vy}$ in $\state_D$. 
	The restriction of $\M$ on $\state_D$ 
	proves that (iii) is false. 
	\\
	(ii) $\Rightarrow $ (i). 
	Suppose that $\A^{t\vx}$ and 
	$\A^{t\vy}$ are compatible,
	then they are $\state_D$-compatible. 
\end{proof}
This lemma demonstrates that the incompatibility of $\A^{t\vx}$ and $\A^{t\vy}$ means the incompatibility of $\A^{t\vx}|_{D}$ and $\A^{t\vy}|_{D}$.
We can present further observations.
\begin{lemma}
	Let us consider two pure states 
	$\varrho^{\mathbf{r}_1}$ and $\varrho^{\mathbf{r}_2}$ 
	($\mathbf{r}_1, \mathbf{r}_2 \in \partial B$, $\mathbf{r}_1\neq\mathbf{r}_2$), 
	and a convex subset $\mathcal{S}_0$ of $\state$ generated by them:
	$\mathcal{S}_0:= \{p\varrho^{\mathbf{r}_1}
	+ (1-p)\varrho^{\mathbf{r}_2}\mid0 \leq p \leq 1\}$.
	We also introduce an affine projection $P$ by $P\varrho^{\mathbf{v}}=\varrho_{\P \mathbf{v}}$, where $\varrho^{\mathbf{v}} \in \state$ with $\mathbf{v}=v_x \vx
	+ v_y \vy + v_z \vz$ and $\P\mathbf{v}=v_x \vx
	+ v_y \vy$, and extend it affinely. 
	The affine hull of $\mathcal{S}_0$ is projected to $\state_{D}$ as
	\begin{eqnarray}
		P\state_0:= 
		\{\lambda P \varrho^{\mathbf{r}_{1}} + 
		(1-\lambda) P \varrho^{\mathbf{r}_{2}}\mid\lambda \in \mathbf{R}\}
		\cap \state_D. 
	\end{eqnarray}
	If $\A^{t\vx}$ and $\A^{t\vy}$ are $\mathcal{S}_0$-incompatible, 
	then their restrictions
	$\A^{t\vx}|_D$ and $\A^{t\vy}|_D$ are 
	$P\mathcal{S}_0$-incompatible.  
\end{lemma}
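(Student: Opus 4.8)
The plan is to prove the contrapositive: assuming the restricted observables $\A^{t\vx}|_D$ and $\A^{t\vy}|_D$ are $P\state_0$-compatible, I will construct a joint observable of $\A^{t\vx}$ and $\A^{t\vy}$ that works on the segment $\state_0$, thereby showing these are $\state_0$-compatible. The whole argument rests on one structural feature of this particular pair: their effects $\frac12(\id\pm t\sigma_1)$ and $\frac12(\id\pm t\sigma_2)$ are independent of the $\z$-direction, so that for every state $\varrho^{\mathbf{v}}$ one has $\tr{\varrho^{\mathbf{v}}\A^{t\vx}(\pm)}=\frac12(1\pm t v_x)=\tr{(P\varrho^{\mathbf{v}})\A^{t\vx}(\pm)}$, and likewise for $\A^{t\vy}$. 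In words, the statistics of these observables on $\varrho$ and on its projection $P\varrho$ coincide, so the $\z$-direction carries no information for them.

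First I would record the inclusion $P(\state_0)\subseteq P\state_0$. Writing a generic element of $\state_0$ as $\varrho=p\varrho^{\mathbf{r}_1}+(1-p)\varrho^{\mathbf{r}_2}$ with $p\in[0,1]$ and using affinity of $P$, its image $P\varrho=pP\varrho^{\mathbf{r}_1}+(1-p)P\varrho^{\mathbf{r}_2}$ lies on the projected line (take $\lambda=p$) and is a genuine state in $\state_D$, since $|\P\mathbf{v}|\le|\mathbf{v}|\le1$ means the projection does not increase the Bloch norm. Hence every projected test state coming from $\state_0$ already sits inside $P\state_0$.

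Next I would invoke $P\state_0$-compatibility to obtain a joint observable $\G_D$ of $\A^{t\vx}|_D$ and $\A^{t\vy}|_D$ on $\state_D$ whose marginals match the restrictions on all of $P\state_0$. Each effect $\G_D(x,y)$ is a disk effect, i.e. of the form $c_0\id+c_1\sigma_1+c_2\sigma_2$, and—exactly as in the extension argument of the previous lemma—I would regard the same operator as an effect on the full state space, obtaining a $\z$-independent observable $\G$ on $\state$. Positivity and normalization are inherited, because the disk conditions $0\le c_0\pm\sqrt{c_1^2+c_2^2}\le1$ that make $c_0\id+c_1\sigma_1+c_2\sigma_2$ a valid effect on $\state_D$ are precisely the conditions that make it a valid effect on $\complex^2$; and since extension is linear, the marginals of $\G$ are the extensions of the marginals of $\G_D$.

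Finally I would verify the marginal conditions for $\G$ on $\state_0$. For $\varrho\in\state_0$, the $\z$-independence of $\G(x,y)$ gives $\tr{\varrho\,\G(x,y)}=\tr{(P\varrho)\,\G_D(x,y)}$; summing over $y$ and using $P\varrho\in P\state_0$, where the marginal relation for $\G_D$ holds, yields $\sum_y\tr{\varrho\,\G(x,y)}=\tr{(P\varrho)\A^{t\vx}(x)}=\tr{\varrho\,\A^{t\vx}(x)}$, the last equality again by $\z$-independence, and the $\A^{t\vy}$ marginal is identical. Thus $\G$ witnesses $\state_0$-compatibility, completing the contrapositive. The conceptual engine is the $\z$-independence, while the only point demanding genuine care is the set bookkeeping: one must first push $\state_0$ forward into $P\state_0$ and read the marginal identities off there, so the preliminary inclusion $P(\state_0)\subseteq P\state_0$ is the step most easily overlooked, whereas the extension is routine given the previous lemma.
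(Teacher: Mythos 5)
Your proof is correct and follows essentially the same route as the paper's: both hinge on extending the joint observable from $\state_D$ to $\state$ in a $\z$-independent way and on the fact that the $\sigma_3$-independence of $\A^{t\vx}(\pm)$ and $\A^{t\vy}(\pm)$ makes their statistics invariant under $P$ (the paper argues by contradiction and checks the marginal identities only at the endpoints $\varrho^{\mathbf{r}_1},\varrho^{\mathbf{r}_2}$, while you argue the contrapositive and verify them on all of $\state_0$ via the inclusion $P(\state_0)\subseteq P\state_0$, which is the same content). Your explicit check that the disk-effect conditions coincide with operator positivity on $\complex^2$ is a welcome detail the paper leaves implicit.
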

\begin{proof}
	Suppose that $\A^{t\vx}$ and $\A^{t\vy}$ are $\mathcal{S}_0$-incompatible. 
	It implies $\P \mathbf{r}_{1}\neq\P \mathbf{r}_{2}$ i.e. $P \varrho^{\mathbf{r}_{1}}\neq P \varrho^{\mathbf{r}_{2}}$ (see Example \ref{ex:fix}), and thus $P\state_0$ is a segment in $\state_{D}$.

	If $\A^{t\vx}|_D$ and $\A^{t\vy}|_D$ 
	are $P\state_0$-compatible, then there exists a joint observable $\M$ 
	on $\state_D$ such that 
	its marginals coincide with $\A^{t\vx}|_D$ 
	and $\A^{t\vy}|_D$
	on $P\state_0 \subset \state_D$. 
	This $\M$ can be extended to an observable on
	$\state$ so that the extension does not depend on 
	$\z$.
	Because 
	\begin{align*}
		&\tr{\A^{t\vx}(\pm)P\varrho^{\mathbf{r}_{1}}}=\tr{\A^{t\vx}(\pm)\varrho^{\mathbf{r}_{1}}},\\
		&\tr{\A^{t\vx}(\pm)P\varrho^{\mathbf{r}_{2}}}=\tr{\A^{t\vx}(\pm)\varrho^{\mathbf{r}_{2}}}
	\end{align*}
	(and their $\vy$-counterparts) hold due to the independence of $\A^{t\vx}(\pm)$ from $\sigma_3$, the marginals of $\M$ coincide with $\A^{t\vx}$ and $\A^{t\vy}$ on $\state_0$.
	It results in the $\mathcal{S}_0$-compatibility of $\A^{t\vx}$ 
	and $\A^{t\vy}$,
	which is a contradiction.
\end{proof}
It follows from this lemma that $\chi_{incomp}(\A^{t\vx}|_{D},\A^{t\vy}|_{D})$ is two when $\chi_{incomp}(\A^{t\vx},\A^{t\vy})$ is two, equivalently $\chi_{incomp}(\A^{t\vx},\A^{t\vy})$ is three when $\chi_{incomp}(\A^{t\vx}|_{D},\A^{t\vy}|_{D})$ is three (remember that $\chi_{incomp}(\A^{t\vx},\A^{t\vy})\le3$).
In fact, the converse also holds.
\begin{lemma}
	$\chi_{incomp}(\A^{t\vx}|_{D},\A^{t\vy}|_{D})$ is three
	when $\chi_{incomp}(\A^{t\vx}, \A^{t\vy})$ is three.
\end{lemma}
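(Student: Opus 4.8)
The plan is to prove the contrapositive. Both quantities involved take values in $\{2,3\}$: for the full observables this is the two–sided bound recalled above (the lower bound from \eqref{eq:incomp-bounds} together with the upper bound $\chi_{incomp}(\A^{t\vx},\A^{t\vy})\le 3$ coming from Proposition \ref{prop:upper-bound}), while for the restricted observables $\A^{t\vx}|_D$ and $\A^{t\vy}|_D$ the affine dimension of $\state_D$ equals $2$, so at most $3$ affinely independent states are available and the same range $\{2,3\}$ applies. Consequently it suffices to show that $\chi_{incomp}(\A^{t\vx}|_D,\A^{t\vy}|_D)=2$ forces $\chi_{incomp}(\A^{t\vx},\A^{t\vy})=2$.

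First I would fix a witnessing set for the restricted value $2$: there is a set $\mathcal{T}_0\subset\state_D$ with $\mathrm{dim}\mathit{aff}\mathcal{T}_{0}=1$ (two affinely independent states) on which $\A^{t\vx}|_D$ and $\A^{t\vy}|_D$ are incompatible as observables on $\state_D$. The essential simplification compared with the preceding projection lemma is that $\mathcal{T}_0$ already lies inside $\state_D\subset\state$, so no projection $P$ is needed to bring the witness into the disk; the very same set $\mathcal{T}_0$ will serve for the full system.

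The key step is to lift this incompatibility to the full system, and I would argue exactly as in the implication (iii)$\Rightarrow$(ii) of the first lemma of this section. Assume, for contradiction, that $\A^{t\vx}$ and $\A^{t\vy}$ are $\mathcal{T}_0$-compatible. Then there is a genuine joint POVM $\G$ on $\complex^2$ whose two marginals agree with $\A^{t\vx}$ and $\A^{t\vy}$ on $\mathcal{T}_0$. Restricting $\G$ to $\state_D$ yields $\G|_D$, whose effects are legitimate effects of the convex set $\state_D$ because they are the restrictions of operators satisfying $0\le\G(x,y)\le\id$ (positivity on the disk is weaker than positivity on the whole Bloch ball, so nothing can fail). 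Its marginals coincide with $\A^{t\vx}|_D$ and $\A^{t\vy}|_D$ on $\mathcal{T}_0$, making them $\mathcal{T}_0$-compatible and contradicting the choice of $\mathcal{T}_0$. Hence $\A^{t\vx}$ and $\A^{t\vy}$ are $\mathcal{T}_0$-incompatible, and since $\mathrm{dim}\mathit{aff}\mathcal{T}_{0}+1=2$ we conclude $\chi_{incomp}(\A^{t\vx},\A^{t\vy})\le 2$, hence $=2$.

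I do not expect a genuine obstacle here; the only subtlety worth flagging is the asymmetry between restriction and extension. Restricting a full-space POVM to $\state_D$ is always legitimate, which is precisely what makes this direction (full value $3\Rightarrow$ restricted value $3$) immediate, whereas an observable built intrinsically on $\state_D$ need not extend to a full POVM on $\complex^2$ — that asymmetry is exactly why the reverse implication required the projection argument of the preceding lemma. Combining the two lemmas then yields the equality $\chi_{incomp}(\A^{t\vx},\A^{t\vy})=\chi_{incomp}(\A^{t\vx}|_D,\A^{t\vy}|_D)$ for all $\tfrac{1}{\sqrt 2}<t\le 1$.
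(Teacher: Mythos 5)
Your proof is correct and is essentially the paper's own argument run in contrapositive form: the paper shows directly that $\chi_{incomp}(\A^{t\vx},\A^{t\vy})=3$ forces $S'$-compatibility of the restricted observables for every line $S'\subset\state_D$ by restricting a full-space joint observable to the disk, which is exactly the same restriction step you use to turn a witnessing line $\mathcal{T}_0$ for the restricted pair into one for the full pair. The only cosmetic difference is that you spell out the bounds confining both quantities to $\{2,3\}$, which the paper invokes implicitly.
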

\begin{proof}
	Let $\chi_{incomp}(\A^{t\vx},\A^{t\vy})=3$.
	It follows that for any line $S\subset\state$, $\A^{t\vx}$ and $\A^{t\vy}$ are $S$-compatible.
	In particular, $\A^{t\vx}$ and $\A^{t\vy}$ are $S'$-compatible for any line $S'$ in $\state_{D}$, and thus there is an observable $\M$ such that its marginals coincide with $\A^{t\vx}$ and $\A^{t\vy}$ on $S'$.
	It is easy to see that the marginals of $\M|_{D}$ coincide with $\A^{t\vx}|_{D}$ and $\A^{t\vy}|_{D}$ on $S'$, which results in the $S'$-compatibility of $\A^{t\vx}|_{D}$ and $\A^{t\vy}|_{D}$.
	Because $S'$ is arbitrary, we can conclude $\chi_{incomp}(\A^{t\vx}|_{D},\A^{t\vy}|_{D})=3$.
\end{proof}

The lemmas above manifest that if $\A^{t\vx}$ and $\A^{t\vy}$ are incompatible, then $\A^{t\vx}|_{D}$ and $\A^{t\vy}|_{D}$ are also incompatible and 
\[
\chi_{incomp}(\A^{t\vx},\A^{t\vy})=\chi_{incomp}(\A^{t\vx}|_{D},\A^{t\vy}|_{D}).
\]
Therefore, in the following, we denote $\A^{t\vx}|_{D}$ and $\A^{t\vy}|_{D}$ simply by $\A^{t\vx}_{D}$ and $\A^{t\vy}_{D}$ respectively, and focus on the quantity $\chi_{incomp}(\A^{t\vx}_{D},\A^{t\vy}_{D})$ instead of $\chi_{incomp}(\A^{t\vx}, \A^{t\vy})$.
Before proceeding to the next step, let us confirm our strategy of this part.
It is composed by further two parts: \textbf{(a)} and \textbf{(b)}. 
In \textbf{(a)}, we will consider a line (segment) $\state_{1}$ in $\state_{D}$, and consider for $0<t<1$ all pairs of observables $(\widetilde{\A}_{1}^{t}, \widetilde{\A}_{2}^{t})$ on $\state_{D}$ which coincide with $(\A^{t\vx}_{D}, \A^{t\vy}_{D})$ on $\state_{1}$.
Then, in \textbf{(b)}, we will investigate the (in)compatibility of those $\widetilde{\A}_{1}^{t}$ and $\widetilde{\A}_{2}^{t}$ in order to obtain $\chi_{incomp}(\A^{t\vx}_{D},\A^{t\vy}_{D})$. 
It will be shown that when $t$ is sufficiently small, there exists a compatible pair $(\widetilde{\A}_{1}^{t}, \widetilde{\A}_{2}^{t})$ for any $\state_{1}$, that is, $\A^{t\vx}_{D}$ and $\A^{t\vy}_{D}$ are $\state_{1}$-compatible for any line $\state_{1}$.
It results in $\chi_{incomp}(\A^{t\vx}_{D},\A^{t\vy}_{D})=3$, and thus $M\neq \emptyset$.

\textbf{(a)}\quad
Let us consider two pure states 
$\varrho^{\mathbf{r}_1}$ and $\varrho^{\mathbf{r}_{2}}$ 
with $\mathbf{r}_1, \mathbf{r}_2 \in \partial D$ ($\mathbf{r}_{1}\neq\mathbf{r}_{2}$), 
and a convex set 
$\state_1:= \{p \varrho^{\mathbf{r}_1}
+ (1-p) \varrho^{\mathbf{r}_2}\mid 0 \leq p \leq 1\}$. 
We set parameters $\varphi_1$ and $\varphi_2$ 
as 
\begin{align}
	&\mathbf{r}_1= \cos \varphi_1 \vx
	+ \sin \varphi_1 \vy, \\
	&\mathbf{r}_2 = \cos \varphi_2 \vx 
	+ \sin \varphi_2 \vy, 
\end{align}
where $-\pi\le\varphi_1<\varphi_2<\pi$.
By exchanging $\pm$ properly, without loss of 
generality
we can assume 
the line connecting $\mathbf{r}_1$ and 
$\mathbf{r}_2$ passes through above the origin
(instead of below).
In this case, from geometric consideration, we have 
\begin{equation}
	\label{eq:bound for varphi 0}
	\begin{aligned}
		&0<\varphi_2 - \varphi_1 \leq \pi, \\
		&0 \le \frac{\varphi_1 + \varphi_2}{2} \le \frac{\pi}{2}. 
	\end{aligned}
\end{equation}
Note that when $\varphi_2 - \varphi_1=\pi$, $\varrho^{\mathbf{r}_{1}}$ and $\varrho^{\mathbf{r}_{2}}$ are perfectly distinguishable, which results in the $\state_{1}$-compatibility of $\A^{t\vx}_{D}$ and $\A^{t\vy}_{D}$ (see Example \ref{ex:distinguishable}).
On the other hand, when $\frac{\varphi_1 + \varphi_2}{2}=0$ or $\frac{\pi}{2}$,  $\tr{\varrho\A^{t\vx}_{D}
	(+)}$ or $\tr{\varrho\A^{t\vy}_{D}
	(+)}$ is constant for $\varrho\in\state_1$ respectively, so $\A^{t\vx}_D$ and $\A^{t\vy}_D$ are $\state_1$-compatible (see Example \ref{ex:fix}). 
Thus, instead of \eqref{eq:bound for varphi 0}, we hereafter assume 
\begin{equation}
	\label{eq:bound for varphi }
	\begin{aligned}
		&0<\frac{\varphi_2 - \varphi_1}{2} < \frac{\pi}{2}, \\
		&0 < \frac{\varphi_1 + \varphi_2}{2} < \frac{\pi}{2}.
	\end{aligned}
\end{equation}
Next, we consider a binary observable $\widetilde{\A}_{1}^{t}$
on $\state_D$ which coincides with $\A^{t\vx}_{D}$ on $\state_1\subset\state_{D}$.
There are many possible $\widetilde{\A}_{1}^{t}$, and each $\widetilde{\A}_{1}^{t}$ is determined completely by its effect $\widetilde{\A}_{1}^{t}(+)$ corresponding to the outcome `+' because it is binary.
The effect $\widetilde{\A}_{1}^{t}(+)$ is associated with a vector $\mathbf{v}_1\in D$ defined as
\begin{eqnarray}
	\mathbf{v}_1:=
	argmax_{\mathbf{v}\in D} \mbox{tr}[\varrho_{\mathbf{v}} 
	\widetilde{\A}_{1}^{t}(+)].
\end{eqnarray} 
Let us introduce a parameter $\xi_1\in[-\pi, \pi)$ by
\begin{eqnarray}
	\label{eq:E1_v1}
	\mathbf{v}_1 = \cos \xi_{1}
	\vx + \sin \xi_{1} \vy,
\end{eqnarray}
and express $\widetilde{\A}_{1}^{t}(+)$ as
\begin{eqnarray}
	\label{eq:E1_Bloch}
	\widetilde{\A}_{1}^{t}(+) 
	= \frac{1}{2}\left(
	(1+w(\xi_{1})) \id + \mathbf{m}_1(\xi_1)
	\cdot \mathbf{\sigma}\right),
\end{eqnarray}
where we set
\begin{equation}
	\label{eq:E1_A1}
	\mathbf{m}_1(\xi_1) = C_1(\xi_1) \mathbf{v}_1\quad\mbox{with}\quad0\le C_1(\xi_1)\le1.
\end{equation}
Because 
\begin{align*}
	&\tr{\varrho^{\mathbf{r}_{1}}\A^{t\vx}_{D} (+)}=\tr{\varrho^{\mathbf{r}_{1}}\widetilde{\A}_{1}^{t}(+)},\\
	&\tr{\varrho^{\mathbf{r}_{2}}\A^{t\vy}_{D} (+)}=\tr{\varrho^{\mathbf{r}_{2}}\widetilde{\A}_{1}^{t}(+)},
\end{align*}
namely
\begin{equation}
	\label{eq:E1}
	\begin{aligned}
		&\frac{1}{2}+\frac{t}{2}\cos\varphi_{1}=\frac{1+w_1(\xi_1)}{2}+\frac{C_{1}(\xi_{1})}{2}\cos(\varphi_{1}-\xi_{1}),\\
		&\frac{1}{2}+\frac{t}{2}\cos\varphi_{2}=\frac{1+w_1(\xi_1)}{2}+\frac{C_{1}(\xi_{1})}{2}\cos(\varphi_{2}-\xi_{1}),
	\end{aligned}
\end{equation}
hold, we can obtain
\begin{align}
	\label{eq:A_1explicit}
	C_1(\xi_1)
	&= \frac{t(\cos \varphi_1 
		- \cos \varphi_2)}{\cos (\varphi_1 - \xi_1)
		-\cos (\varphi_2 - \xi_1)}\notag\\
	&= \frac{t\sin\varphi_{0}}{
		\sin (\varphi_{0}- \xi_1)},\\
	\label{eq:x_1explicit}
	w_1(\xi_1) 
	&= - t \left(
	\frac{\sin (\varphi_1 - \varphi_2)}{
		2 \sin(\frac{\varphi_1 - \varphi_2}{2})}\right)
	\times \left(\frac{\sin \xi_1}{\sin 
		(\varphi_{0} -\xi_1)} \right)\notag\\
	&= 
	\frac{-t\cos\psi_{0}\sin \xi_1}{\sin 
		(\varphi_{0} -\xi_1)},
\end{align}
where we set $\varphi_{0}:=\frac{\varphi_{1}+\varphi_{2}}{2}$ and $\psi_{0}:=\frac{\varphi_{2}-\varphi_{1}}{2}$ ($0<\varphi_{0}<\frac{\pi}{2}$, $0<\psi_{0}<\frac{\pi}{2}$).
Note that if $\sin (\varphi_{0}- \xi_1)=0$ or $\cos (\varphi_1 - \xi_1)
-\cos (\varphi_2 - \xi_1)=0$ holds, then $\cos \varphi_1 
- \cos \varphi_2=0$ holds (see \eqref{eq:A_1explicit}). It means $\varphi_{0}=0$, which is a contradiction, and thus $\sin (\varphi_{0}- \xi_1)\neq0$ (that is, $C_{1}(\xi_{1})$ and $w_{1}(\xi_{1})$ in \eqref{eq:A_1explicit}, \eqref{eq:x_1explicit} are well-defined).
Moreover, because $C_{1}(\xi_{1})\ge0$, we can see from \eqref{eq:A_1explicit} that $\sin (\varphi_{0}- \xi_1)>0$ holds, which results in 
\begin{align}
	0\le\xi_{1}<\varphi_{0}\label{eq:xi_1max},
\end{align}
or
\begin{align}
	-\pi+\varphi_{0}<\xi_{1}\le0\label{eq:xi_1min}.
\end{align}
In addition, $\xi_1$ is restricted also by the condition that $\widetilde{\A}_{1}^{t}(\pm)$ are positive.
Since the eigenvalues of $\widetilde{\A}_{1}^{t}(\pm)$ are $\frac{1}{2}((1+w_1(\xi_1)) \pm C_1(\xi_1))$, the restriction comes from both 
\begin{equation}
	\begin{aligned}
		&1+w_1(\xi_1) +C_1(\xi_1) \leq 2,
		\\
		&1+w_1(\xi_1) - C_1(\xi_1) \geq 0, 
	\end{aligned}
\end{equation}
equivalently
\begin{align}
	&1-w_1(\xi_1) \geq C_1(\xi_1)\label{condition:min}, 
	\\
	&1+w_1(\xi_1) \geq C_1(\xi_1)\label{condition:max}.
\end{align}
When \eqref{eq:xi_1min} (i.e. $\sin\xi_{1}\le0$) holds, $w_{1}(\xi_{1})\ge0$ holds, and thus \eqref{condition:min} is sufficient. 
It is written explicitly as
\begin{align*}
	\sin\left(\varphi_{0}-\xi_{1}\right)+t\sin\xi_{1}\cos\psi_{0}
	\ge t\sin\varphi_{0},
\end{align*}
or
\begin{align}
	\label{eq:E1_innerproduct1}
	\frac{1}{t}\cos\xi_{1}+	\frac{1}{t\sin\varphi_{0}}\left(t\cos\psi_{0}-\cos\varphi_{0}\right)\sin\xi_{1}
	\ge
	1.
\end{align}
In order to investigate \eqref{eq:E1_innerproduct1}, we adopt a geometric method here while it can be solved in an analytic way.
Let us define 
\begin{equation}
	\label{eq:def of h1}
	h_{1}(t, \varphi_{0}, \psi_{0})=\frac{1}{t\sin\varphi_{0}}\left(t\cos\psi_{0}-\cos\varphi_{0}\right).
\end{equation}
Then, we can rewrite \eqref{eq:E1_innerproduct1} as
\begin{equation}
	\label{eq:E1_innerproduct2}
	(\cos\xi_{1}, \sin\xi_{1})
	\cdot
	\left[\left(\frac{1}{t}, h_{1}\right)-(\cos\xi_{1}, \sin\xi_{1})\right]\ge0.
\end{equation}
In fact, it can be verified easily that $\left(\frac{1}{t}, h_{1}\right)$ is the intersection of the line $l_{1}:=\{\lambda\mathbf{r}_1+(1-\lambda)\mathbf{r}_2\mid \lambda\in\real\}$ and the line $x=\frac{1}{t}$ in $\real^{2}$.
Considering this fact, we can find that $\xi_{1}$ satisfies \eqref{eq:E1_innerproduct2} if and only if
\begin{equation}
	\xi_{1}^{min}(t, \varphi_{0}, \psi_{0})\le\xi_{1}\le0,
\end{equation}
where $\xi_{1}^{min}(t, \varphi_{0}, \psi_{0})$ is determined by the condition 
\begin{align}
	\left[\left(\frac{1}{t}, h_{1}\right)-(\cos\xi_{1}^{min}, \sin\xi_{1}^{min})\right]\perp(\cos\xi_{1}^{min}, \sin\xi_{1}^{min})
\end{align}
(see FIG. \ref{Fig:xi_min}). 
\begin{figure}[h]
		\includegraphics[bb=0.000000 0.000000 786.000000 630.000000, scale=0.3]{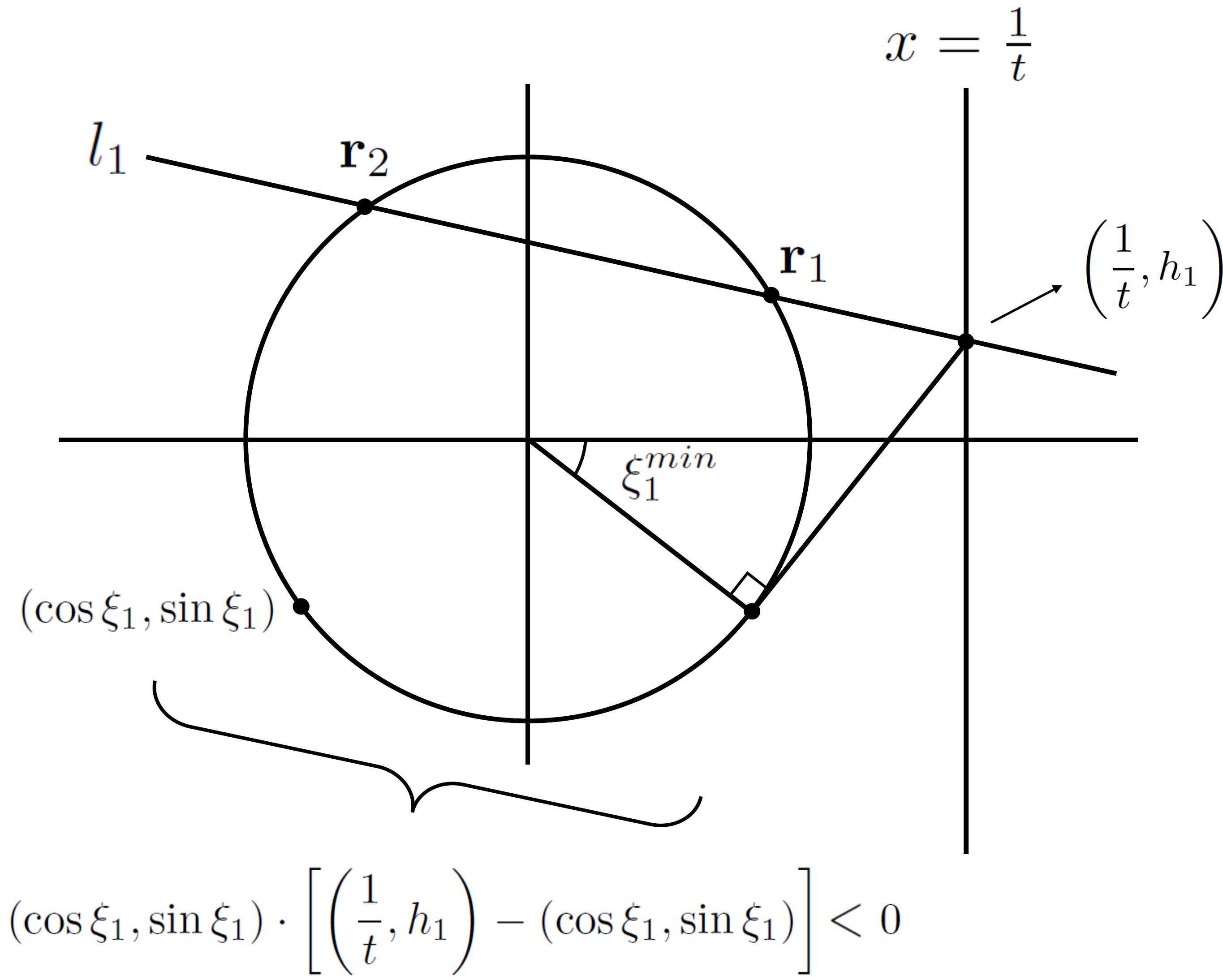}
	\caption{Geometric description of determining $\xi_{1}^{min}$.}
	\label{Fig:xi_min}
\end{figure}
Analytically, it corresponds to the case when the equality of \eqref{eq:E1_innerproduct1} holds:
\begin{align}
	\label{eq:xi_1min_identity}
	\frac{1}{t}\cos\xi_{1}^{min}+	\frac{1}{t\sin\varphi_{0}}\left(t\cos\psi_{0}-\cos\varphi_{0}\right)\sin\xi_{1}^{min}
	=
	1,
\end{align}
or
\[
1-w_{1}(\xi_{1}^{min})=C_{1}(\xi_{1}^{min}).
\]
It can be represented explicitly as
\begin{equation}
	\label{eq:xi1_equation}
	\begin{aligned}
		\left( 
		t^2 \cos^2\psi_{0}
		-2t \cos \varphi_{0}\cos \psi_{0}
		+1\right)
		\sin^2\xi_1^{min}\\
		- 2t \sin\varphi_{0}
		\left( t \cos\psi_{0}
		-\cos \varphi_{0}\right) \sin \xi_1^{min}\\ 
		+(t^2 -1) \sin^2 \varphi_{0}=0,
	\end{aligned}
\end{equation}
and $\sin\xi_{1}^{min}$ is obtained as its negative solution.
Note that since the coefficient $(t^2 \cos^2\psi_{0}-
2t \cos \varphi_{0}\cos \psi_{0}
+1)$ is strictly positive, the solutions do not show any singular behavior.
In summary, we have obtained
\begin{equation}
	\xi_{1}^{min}(t, \varphi_{0}, \psi_{0})\le\xi_{1}\le0
\end{equation}
with $\xi_{1}^{min}(t, \varphi_{0}, \psi_{0})$ uniquely determined for $t$, $\varphi_{0}$, and $\psi_{0}$ by 
\begin{align}
	\left\{
	\begin{aligned}
		&-\pi+\varphi_{0}<\xi_{1}^{min}(t, \varphi_{0}, \psi_{0})\le0,\\
		&1-w_{1}(\xi_{1}^{min}(t, \varphi_{0}, \psi_{0}))=C_{1}(\xi_{1}^{min}(t, \varphi_{0}, \psi_{0})).
	\end{aligned}
	\right.
	\label{eq:xi_1min_detailed}
\end{align}
On the other hand, when \eqref{eq:xi_1max} (i.e. $\sin\xi_{1}\ge0$) holds, \eqref{condition:max} is sufficient. 
It results in a tight condition for $\xi_{1}$:
\begin{align}
	0\le\xi_{1}\le\xi_{1}^{max}(t, \varphi_{0}, \psi_{0}),
\end{align}
where $\xi_{1}^{max}(t, \varphi_{0}, \psi_{0})$ is a constant uniquely determined for $\varphi_{0}$ and $\psi_{0}$ by 
\begin{align}
	\left\{
	\begin{aligned}
		&0\le\xi_{1}^{max}(t, \varphi_{0}, \psi_{0})<\varphi_{0}\\
		&1+w_{1}(\xi_{1}^{max}(t, \varphi_{0}, \psi_{0}))=C_{1}(\xi_{1}^{max}(t, \varphi_{0}, \psi_{0})).
	\end{aligned}
	\right.
	\label{eq:xi_1max_detailed}
\end{align}
We remark that this can be obtained by a similar geometric method to the previous case: consider the intersection of the line $l_{1}$ and the line $x=-\frac{1}{t}$ in turn (see FIG. \ref{Fig:xi_max}).
\begin{figure}[h]
	\includegraphics[bb=0.000000 0.000000 660.000000 507.000000, scale=0.31]{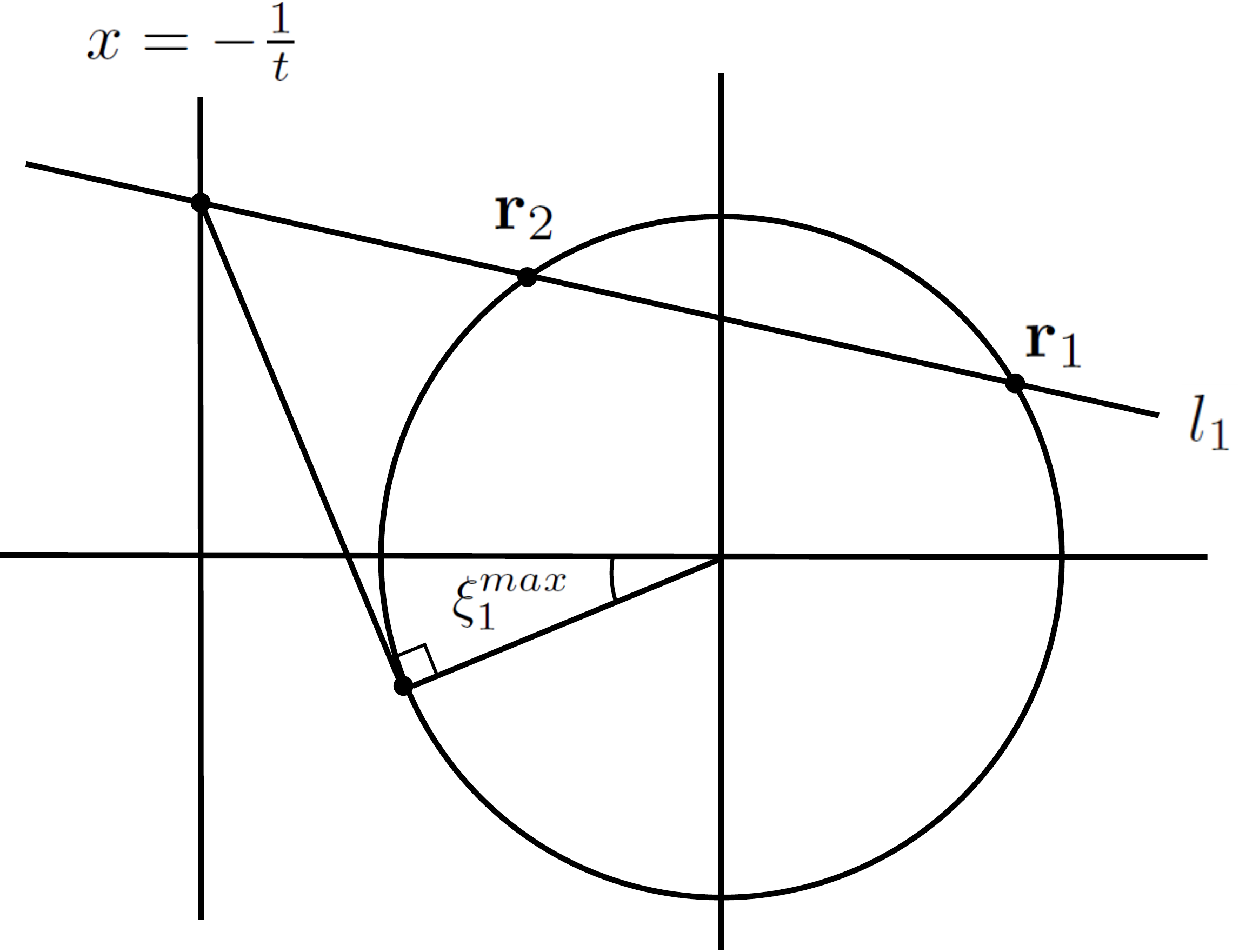}
	\caption{Geometric description of determining $\xi_{1}^{max}$.}
	\label{Fig:xi_max}
\end{figure}
Overall, we have demonstrated that $\xi_{1}$ for $\widetilde{\A}_{1}^{t}$ satisfies
\begin{align}
	\xi_{1}^{min}(t, \varphi_{0}, \psi_{0})\le\xi_{1}\le\xi_{1}^{max}(t, \varphi_{0}, \psi_{0}),
\end{align}
where $\xi_{1}^{min}(t, \varphi_{0}, \psi_{0})$ and $\xi_{1}^{max}(t, \varphi_{0}, \psi_{0})$ are obtained thorough \eqref{eq:xi_1min_detailed} and \eqref{eq:xi_1max_detailed} respectively.
Note that $\xi_{1}^{min}(t, \varphi_{0}, \psi_{0})$ and $\xi_{1}^{max}(t, \varphi_{0}, \psi_{0})$ depend
continuously on $t$ (and $\varphi_1, \varphi_2$ through 
$\varphi_0$ and $\psi_{0}$).

Similarly, we consider a binary observable $\widetilde{\A}_{2}^{t}$ on $\state_D$ which coincides with $\A_{D}^{t\vy}$ in $\state_1$, and focus on its effect $\widetilde{\A}_{2}^{t}(+)$.
We define parameters $\mathbf{v}_2\in D$ and $\xi_2\in[-\pi, \pi)$ as
\begin{equation}
	\begin{aligned}
		\mathbf{v}_2 
		&= \sin \xi_2 \vx +\cos \xi_2 \vy \\
		&= argmax_{\mathbf{v}\in D} \mbox{tr}
		[\widetilde{\A}_{2}^{t} (+) \varrho_{\mathbf{v}}]. 
	\end{aligned}
\end{equation}
$\widetilde{\A}_{2}^{t}(+)$ is represented as
\begin{eqnarray}
	\label{eq:E2_Bloch}
	\widetilde{\A}_{2}^{t}(+) = \frac{1}{2}
	\left( (1+w_2(\xi_2))\id + \mathbf{m}_2(\xi_2)\right)
\end{eqnarray}
with
\[
\mathbf{m}_2(\xi_2) = C_2(\xi_2) \mathbf{v}_2\quad(0\le C_2(\xi_2)\le1).
\]
\eqref{eq:E1} becomes
\begin{equation}
	\begin{aligned}
		&\frac{1}{2}+\frac{t}{2}\cos\left(\frac{\pi}{2}-\varphi_{1}\right)\\
		&\qquad=\frac{1+x_2(\xi_2)}{2}+\frac{C_{2}(\xi_{2})}{2}\cos\left(\frac{\pi}{2}-\varphi_{1}-\xi_{1}\right),\\
		&\frac{1}{2}+\frac{t}{2}\cos\left(\frac{\pi}{2}-\varphi_{2}\right)\\
		&\qquad=\frac{1+x_2(\xi_2)}{2}+\frac{C_{2}(\xi_{2})}{2}\cos\left(\frac{\pi}{2}-\varphi_{2}-\xi_{1}\right),
	\end{aligned}
\end{equation}
so defining $\overline{\varphi_{1}}:=\frac{\pi}{2}-\varphi_{1}$ and $\overline{\varphi_{2}}:=\frac{\pi}{2}-\varphi_{2}$, we can obtain similarly to \eqref{eq:A_1explicit} and \eqref{eq:x_1explicit}
\begin{align}
	\label{eq:A_2explicit}
	&C_2(\xi_2)
	= \frac{t\sin\overline{\varphi_{0}}}{
		\sin (\overline{\varphi_{0}}- \xi_1)}
	, \\
	\label{eq:x_2explicit}
	&w_2(\xi_2) = \frac{-t\cos\psi_{0}\sin \xi_2}{\sin 
		(\overline{\varphi_{0}}-\xi_2)},
\end{align}
where $\overline{\varphi_{0}}:=\frac{\overline{\varphi_{1}}+\overline{\varphi_{2}}}{2}=\frac{\pi}{2}-\varphi_{0}$.
It follows that properties of $\widetilde{\A}_{2}^{t}$ can be obtained just by replacing $\xi_{1}$ and $\varphi_{0}$ exhibited in the argument for $\widetilde{\A}_{1}^{t}$ by $\xi_{2}$ and $\overline{\varphi_{0}}$ respectively.  
Remark that $0<\overline{\varphi_{0}}<\frac{\pi}{2}$ holds similarly to $\varphi_{0}$, and that the change $\psi_{0}\rightarrow\overline{\psi_{0}}:=\frac{\overline{\varphi_{2}}-\overline{\varphi_{1}}}{2}=-\psi_{0}$ does not affect the equations above, so we dismiss it.
From \eqref{eq:A_2explicit} and \eqref{eq:x_2explicit}, we have 
\begin{align}
	\xi_{2}^{min}(t, \varphi_{0}, \psi_{0})
	\le
	\xi_{2}
	\le
	\xi_{2}^{max}(t, \varphi_{0}, \psi_{0}),
\end{align}
where 
\begin{equation}
	\begin{aligned}
		\xi_{2}^{min}(t, \varphi_{0}, \psi_{0})
		&=\xi_{1}^{min}(t, \overline{\varphi_{0}}, \psi_{0})\\
		&=\xi_{1}^{min}\left(t, \frac{\pi}{2}-\varphi_{0}, \psi_{0}\right),
	\end{aligned}
\end{equation}
and
\begin{equation}
	\begin{aligned}
		\xi_{2}^{max}(t, \varphi_{0}, \psi_{0})
		&=\xi_{1}^{max}(t, \overline{\varphi_{0}}, \psi_{0})\\
		&=\xi_{1}^{max}\left(t, \frac{\pi}{2}-\varphi_{0}, \psi_{0}\right),
	\end{aligned}
\end{equation}
which satisfy
\begin{align}
	\left\{
	\begin{aligned}
		&-\frac{\pi}{2}+\varphi_{0}<\xi_{2}^{min}(t, \varphi_{0}, \psi_{0})
		\le0\\
		&1-w_{2}(\xi_{2}^{min}(t, \varphi_{0}, \psi_{0}))=C_{2}(\xi_{2}^{min}(t, \varphi_{0}, \psi_{0}))
	\end{aligned}
	\right.
	\label{eq:xi_2min_detailed}
\end{align}
and
\begin{align}
	\left\{
	\begin{aligned}
		&0\le
		\xi_{2}^{max}(t, \varphi_{0}, \psi_{0})<
		\frac{\pi}{2}-\varphi_{0}\\
		&1+w_{2}(\xi_{2}^{max}(t, \varphi_{0}, \psi_{0}))=C_{2}(\xi_{2}^{max}(t, \varphi_{0}, \psi_{0}))
	\end{aligned}
	\right.
	\label{eq:xi_2max_detailed}
\end{align}
respectively.\\

\textbf{(b)}\quad
In this part, we shall consider the (in)compatibility of the observables $\widetilde{\A}_{1}^{t}$ and $\widetilde{\A}_{2}^{t}$ defined in $\textbf{(a)}$ for $t$ close to $\frac{1}{\sqrt{2}}$ ($t\sim\frac{1}{\sqrt{2}}$).
It is related directly with the $\state_{1}$-(in)compatibility of $\A^{t\vx}_{D}$ and $\A^{t\vy}_{D}$ as we have shown in the beginning of this section.

Let us examine the behavior of $\xi_{1}^{min}(t, \varphi_{0}, \psi_{0})$ for $t\sim\frac{1}{\sqrt{2}}$.
We denote $\xi_{1}^{min}(t=\frac{1}{\sqrt{2}}, \varphi_{0}, \psi_{0})$ and $h_{1}(t=\frac{1}{\sqrt{2}}, \varphi_{0}, \psi_{0})$ simply by $\widehat{\xi}_{1}^{min}(\varphi_{0}, \psi_{0})$ and $\widehat{h}_{1}(\varphi_{0}, \psi_{0})$ respectively.
The following lemma is useful.
\begin{lemma}
	\label{lem:decrease}
	With $\varphi_{0}$ fixed, $\widehat{\xi}_{1}^{min}$ is a strictly decreasing function of $\psi_{0}$.
\end{lemma}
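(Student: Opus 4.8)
The plan is to treat the defining relation for $\widehat{\xi}_1^{min}$, namely the $t=\tfrac{1}{\sqrt 2}$ specialization of \eqref{eq:xi_1min_identity}, as an implicit equation and apply the implicit function theorem with $\varphi_0$ held fixed. Concretely, set
\[
F(\xi,\psi_0):=\tfrac{1}{t}\cos\xi+h_1(t,\varphi_0,\psi_0)\sin\xi-1,\qquad t=\tfrac{1}{\sqrt 2},
\]
so that $\widehat{\xi}_1^{min}(\varphi_0,\psi_0)$ is the root of $F(\cdot,\psi_0)=0$ lying on the branch $-\pi+\varphi_0<\xi\le0$. I would then show $\dfrac{d\widehat{\xi}_1^{min}}{d\psi_0}=-\dfrac{\partial_{\psi_0}F}{\partial_\xi F}<0$ by determining the sign of each partial derivative separately.

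Since only $h_1$ carries the $\psi_0$-dependence and $\partial_{\psi_0}h_1=-\sin\psi_0/\sin\varphi_0$, I get $\partial_{\psi_0}F=-\tfrac{\sin\psi_0}{\sin\varphi_0}\sin\widehat{\xi}_1^{min}$. On the relevant branch $\sin\widehat{\xi}_1^{min}\le0$, and in fact strictly negative: evaluating at the other endpoint gives $F(0,\psi_0)=\tfrac1t-1=\sqrt2-1>0$, so $\widehat{\xi}_1^{min}\ne0$, hence $\sin\widehat{\xi}_1^{min}<0$. Together with $0<\psi_0,\varphi_0<\tfrac{\pi}{2}$ this forces $\partial_{\psi_0}F>0$. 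This is the easy half, relying only on the sign bookkeeping fixed by the feasibility range of $\xi_1^{min}$.

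The crux, and the step I expect to be the main obstacle, is the sign of the denominator $\partial_\xi F=-\tfrac1t\sin\xi+h_1\cos\xi$ at $\xi=\widehat{\xi}_1^{min}$: the implicit function theorem applies only where this is nonzero, and the direction of monotonicity hinges on its sign. I would pin it down by combining two observations. Analytically, the positivity constraint \eqref{eq:E1_innerproduct1} says $F(\cdot,\psi_0)\ge0$ throughout the feasibility interval $[\widehat{\xi}_1^{min},0]$ with equality exactly at the left endpoint, which forces $\partial_\xi F(\widehat{\xi}_1^{min},\psi_0)\ge0$. Geometrically, writing $u(\xi)=(\cos\xi,\sin\xi)$ and $P=(\tfrac1t,h_1)$, the characterization of $\xi_1^{min}$ is $(P-u)\perp u$, whence $P-u=c\,u'$ and $\partial_\xi F=u'\cdot P=c$; moreover $|P|^2=\tfrac{1}{t^2}+h_1^2=2+h_1^2>1$ gives $c^2=|P|^2-1>0$, so $c\ne0$. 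Thus $P$ lies strictly outside the unit circle precisely because $\tfrac1t=\sqrt2>1$, which rules out the degenerate tangency $\partial_\xi F=0$ and, combined with the feasibility inequality, yields $\partial_\xi F(\widehat{\xi}_1^{min},\psi_0)=c>0$ strictly. Hence $d\widehat{\xi}_1^{min}/d\psi_0=-(+)/(+)<0$, and $\widehat{\xi}_1^{min}$ is strictly decreasing in $\psi_0$, as claimed.
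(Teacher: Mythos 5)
Your proof is correct, and it shares its backbone with the paper's analytic argument: both differentiate the defining identity \eqref{eq:xi_1min_identity} (specialized to $t=1/\sqrt2$, i.e.\ \eqref{eq:xi1_hat}) implicitly, and both exploit the fact that $\psi_0$ enters only through $h_1$, with $\partial_{\psi_0}h_1=-\sin\psi_0/\sin\varphi_0<0$. Where you genuinely differ is in how the crucial sign in the $\xi$-direction is obtained. The paper solves \eqref{eq:xi1_hat} for $\widehat h_1$ as an explicit function of $\widehat\xi_1^{min}$, namely $\widehat h_1=(1-\sqrt2\cos\widehat\xi_1^{min})/\sin\widehat\xi_1^{min}$, computes $d\widehat h_1/d\widehat\xi_1^{min}=(\sqrt2-\cos\widehat\xi_1^{min})/\sin^2\widehat\xi_1^{min}>0$ by hand, and concludes by composing the two monotone maps $\psi_0\mapsto\widehat h_1\mapsto\widehat\xi_1^{min}$; you instead pin down the sign of $\partial_\xi F$ at the root without computing it, getting nonnegativity from first-order optimality at the left endpoint of the feasible interval $[\widehat\xi_1^{min},0]$ (where $F\ge0$ by \eqref{eq:E1_innerproduct1} and $F(\widehat\xi_1^{min})=0$), and nonvanishing from $(\partial_\xi F)^2=|P|^2-1=1+h_1^2>0$ via the perpendicularity characterization of $\xi_1^{min}$. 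The two routes are in fact equivalent---writing $F=\sin\xi\,(h_1-g(\xi))$ with $g$ the paper's explicit function gives $\partial_\xi F=-\sin\xi\,g'(\xi)$ at the root---but yours is more structural: it uses only $1/t>1$, so it applies verbatim for every $t\in(1/\sqrt2,1]$, which is exactly the extension the paper later invokes (``we can observe similarly to Lemma \ref{lem:decrease} that $\xi_1^{min}$ is a strictly decreasing function of $\psi_0$'') without writing it out, whereas the paper's explicit computation is shorter for the single value $t=1/\sqrt2$. The one point you should state explicitly is the passage from the local implicit-function branch to the globally defined $\widehat\xi_1^{min}$: this requires uniqueness of the root of $F(\cdot,\psi_0)$ in the branch $(-\pi+\varphi_0,0]$ (asserted by the paper in \eqref{eq:xi_1min_detailed}, and immediate here since $F(0,\psi_0)=\sqrt2-1>0$ places the sinusoid's other zero at positive $\xi$); with that, your pointwise negative derivative yields strict monotonicity on the whole range of $\psi_0$.
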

\begin{proof}
	The claim can be observed to hold by a geometric consideration in terms of FIG. \ref{Fig:xi_min}. 
	In fact, increasing $\psi_0$ with $\varphi_0$ fixed corresponds to moving the line $l_1$ down with its inclination fixed. 
	The movement makes $h_1$ (or $\widehat{h}_{1}$) and hence $\xi_1^{min}$ (or $\widehat{\xi}_{1}^{min}$) smaller, which proves the claim.
	Here we show an analytic proof of this fact.
	We can see from \eqref{eq:def of h1} and \eqref{eq:xi_1min_identity} that 
	\begin{align}
		\label{eq:xi1_hat}
		\sqrt{2}\cos\widehat{\xi}_{1}^{min}+	\widehat{h}_{1}\sin\widehat{\xi}_{1}^{min}
		=
		1,
	\end{align}
	i.e.
	\[
	\widehat{h}_{1}=\frac{1}{\sin\widehat{\xi}_{1}^{min}}\left(1-\sqrt{2}\cos\widehat{\xi}_{1}^{min}\right)
	\]
	holds  
	(note that $\sin\widehat{\xi}_{1}^{min}\neq0$ because $\sin\widehat{\xi}_{1}^{min}=0$ contradicts \eqref{eq:xi1_hat}). 	Because
	\[
	\frac{d\widehat{h}_{1}}{d\widehat{\xi}_{1}^{min}}=\frac{1}{(\sin\widehat{\xi}_{1}^{min})^{2}}\left(\sqrt{2}-\cos\widehat{\xi}_{1}^{min}\right)>0,
	\]
	and $\widehat{h}_{1}=\frac{1}{\sin\varphi_{0}}\left(\cos\psi_{0}-\sqrt{2}\cos\varphi_{0}\right)$ is a decreasing function of $\psi_{0}$, the claim follows.
\end{proof}
From this lemma, it follows that 
\begin{align}
	\label{eq:Xi1_def}
	\widehat{\xi}_{1}^{min}(\varphi_{0}, \psi_{0})<
	\lim_{\psi_{0}\rightarrow +0}\widehat{\xi}_{1}^{min}(\varphi_{0}, \psi_{0})=:{\Xi}_{1}^{min}(\varphi_{0}),
\end{align} 
and
\begin{align}
	\label{eq:Xi2_def}
	\widehat{\xi}_{2}^{min}(\varphi_{0}, \psi_{0})<{\Xi}_{2}^{min}(\varphi_{0})
\end{align}
hold for all $\varphi_{0}\in (0, \frac{\pi}{2})$ and $\psi_{0}\in (0, \frac{\pi}{2})$, where
\begin{equation}
	\label{eq:Xi2_def0}
	\begin{aligned}
		&\widehat{\xi}_{2}^{min}(\varphi_{0}, \psi_{0}):=\xi_{2}^{min}\left(t=\frac{1}{\sqrt{2}},\varphi_{0}, \psi_{0}\right)\\
		&\qquad\qquad\quad\ \left(=\widehat{\xi}_{1}^{min}\left(\frac{\pi}{2}-\varphi_{0}, \psi_{0}\right)\right),\\
		&{\Xi}_{2}^{min}(\varphi_{0}):=\lim_{\psi_{0}\rightarrow +0}\widehat{\xi}_{2}^{min}(\varphi_{0}, \psi_{0})\\
		&\qquad\qquad\left(=\Xi_{1}^{min}\left(\frac{\pi}{2}-\varphi_{0}\right)\right).
	\end{aligned}
\end{equation}
We can prove the following lemma.
\begin{lemma}
	\label{lem:Xi_bounds}
	\[
	\Xi_{1}^{min}(\varphi_{0})+\Xi_{2}^{min}(\varphi_{0})\le-\frac{\pi}{2}
	\]
	holds for all $0<\varphi_{0}<\frac{\pi}{2}$.
\end{lemma}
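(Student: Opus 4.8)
The plan is to first exploit the symmetry recorded in \eqref{eq:Xi2_def0}, namely $\Xi_{2}^{min}(\varphi_0)=\Xi_{1}^{min}(\tfrac{\pi}{2}-\varphi_0)$, so that writing $a:=\Xi_{1}^{min}(\varphi_0)$ and $b:=\Xi_{2}^{min}(\varphi_0)$ the target becomes $a+b\le-\tfrac{\pi}{2}$, with both numbers governed by a single defining relation. Passing to the limit $\psi_0\to+0$ in \eqref{eq:xi1_hat}, where $\widehat h_1\to\frac{1-\sqrt2\cos\varphi_0}{\sin\varphi_0}$, I would record that $a$ solves $\sqrt2\cos a+\frac{1-\sqrt2\cos\varphi_0}{\sin\varphi_0}\sin a=1$ with $a\in(-\pi+\varphi_0,0]$ (from \eqref{eq:xi_1min_detailed}), and that $b$ solves the same equation with $\varphi_0$ replaced by $\tfrac{\pi}{2}-\varphi_0$, i.e. with $\cos\varphi_0\leftrightarrow\sin\varphi_0$.

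Next I would clear denominators and apply the sum-to-product identities to rewrite the equation for $a$ as $\sqrt2\sin(\varphi_0-a)=\sin\varphi_0-\sin a$, and then, dividing by $\sin\frac{\varphi_0-a}{2}\ne0$, as the clean relation $\sqrt2\cos\frac{\varphi_0-a}{2}=\cos\frac{\varphi_0+a}{2}$. The decisive move is to introduce the auxiliary angles $q:=\frac{\varphi_0+a}{2}$ and $p:=\frac{\varphi_0-a}{2}$ (and $q',p'$ defined analogously from $\tfrac{\pi}{2}-\varphi_0$ and $b$); substituting $p=\varphi_0-q$ turns the relation into $\sqrt2\cos(\varphi_0-q)=\cos q$, which expands to the explicit value $\tan q=\frac{1-\sqrt2\cos\varphi_0}{\sqrt2\sin\varphi_0}$, and likewise $\tan q'=\frac{1-\sqrt2\sin\varphi_0}{\sqrt2\cos\varphi_0}$. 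Since $p+q=\varphi_0$ and $p'+q'=\tfrac{\pi}{2}-\varphi_0$ sum to $\tfrac{\pi}{2}$, and $a=q-p$, $b=q'-p'$, a one-line computation yields $a+b=2(q+q')-\tfrac{\pi}{2}$, so the whole lemma reduces to the single inequality $q+q'\le0$.

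Finally I would establish $q+q'\le0$ from the addition formula for the tangent. Writing $c=\cos\varphi_0$, $s=\sin\varphi_0$ and using $c^2+s^2=1$, the numerator $\tan q+\tan q'$ collapses to $\frac{(c+s)-\sqrt2}{\sqrt2\,sc}$ and the denominator $1-\tan q\tan q'$ to $\frac{\sqrt2(c+s)-1}{2sc}$, whence $\tan(q+q')=\frac{2(c+s)-2\sqrt2}{2(c+s)-\sqrt2}$. Setting $\sigma:=c+s=\sqrt2\sin(\varphi_0+\tfrac{\pi}{4})\in(1,\sqrt2]$, the numerator $2(\sigma-\sqrt2)$ is $\le0$ (vanishing only at $\varphi_0=\tfrac{\pi}{4}$) while the denominator $2\sigma-\sqrt2\ge2-\sqrt2>0$, so $\tan(q+q')\le0$. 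The only genuinely delicate point is then the branch issue: I must deduce $q+q'\le0$ rather than $q+q'\le-\pi$. This is controlled by the range bounds $a\in(-\pi+\varphi_0,0]$ and $b\in(-\tfrac{\pi}{2}-\varphi_0,0]$, which force $q\in(\varphi_0-\tfrac{\pi}{2},\tfrac{\varphi_0}{2}]$ and $q'\in(-\varphi_0,\tfrac{\pi}{4}-\tfrac{\varphi_0}{2}]$, hence $q+q'\in(-\tfrac{\pi}{2},\tfrac{\pi}{4}]\subset(-\tfrac{\pi}{2},\tfrac{\pi}{2})$; since the computed $\tan(q+q')$ is finite we also have $q+q'\ne-\tfrac{\pi}{2}$, and on this interval $\tan$ is increasing through $0$. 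Thus $q+q'\le0$, giving $a+b\le-\tfrac{\pi}{2}$, with equality exactly at $\varphi_0=\tfrac{\pi}{4}$.
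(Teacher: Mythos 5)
Your proof is correct, but it follows a genuinely different route from the paper's. Both arguments start from the same two ingredients --- the symmetry $\Xi_{2}^{min}(\varphi_{0})=\Xi_{1}^{min}(\tfrac{\pi}{2}-\varphi_{0})$ and the defining relation $\sqrt{2}\cos\Xi_{1}^{min}+H_{1}\sin\Xi_{1}^{min}=1$ of \eqref{eq:Xi1_eq} --- but diverge immediately afterwards. The paper solves that relation explicitly to get the closed form \eqref{eq:Xi_arccos}, $\Xi_{1}^{min}(\varphi_{0})=-\arccos\bigl(\tfrac{1}{\sqrt{2}}\cdot\tfrac{4-3\sqrt{2}\cos\varphi_{0}}{3-2\sqrt{2}\cos\varphi_{0}}\bigr)$, differentiates twice to show $\Xi_{1}^{min}$ is concave in $\varphi_{0}$, and concludes by midpoint concavity together with $\Xi_{1}^{min}(\tfrac{\pi}{4})=-\tfrac{\pi}{4}$. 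You instead stay with the implicit equation, convert it by sum-to-product into $\sqrt{2}\cos\tfrac{\varphi_{0}-a}{2}=\cos\tfrac{\varphi_{0}+a}{2}$ (the division by $\sin\tfrac{\varphi_{0}-a}{2}$ is legitimate since $0<\tfrac{\varphi_{0}-a}{2}<\tfrac{\pi}{2}$), and use the half-angle variables $q,q'$ to reduce the lemma to the sign statement $q+q'\le 0$, settled by an exact evaluation of $\tan(q+q')=\tfrac{2(\sigma-\sqrt{2})}{2\sigma-\sqrt{2}}$ with $\sigma=\cos\varphi_{0}+\sin\varphi_{0}\in(1,\sqrt{2}\,]$. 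Your computations check out, including the identity $a+b=2(q+q')-\tfrac{\pi}{2}$ and --- the one genuinely delicate point --- the branch control: the range bounds force $q+q'\in(-\tfrac{\pi}{2},\tfrac{\pi}{4}]$, and even if one worries that the limit $\psi_{0}\to+0$ might turn the strict bound $a>-\pi+\varphi_{0}$ into a non-strict one, the nonvanishing of $1-\tan q\tan q'=\tfrac{\sqrt{2}\,\sigma-1}{2\sin\varphi_{0}\cos\varphi_{0}}$ rules out $q+q'=-\tfrac{\pi}{2}$, so the monotonicity of the tangent on this branch gives $q+q'\le 0$. As for what each approach buys: the paper's route requires calculus but produces concavity and monotonicity information about $\Xi_{1}^{min}$ of the kind reused in the surrounding argument (cf. the monotone relations \eqref{eq:monotones}); your route is calculus-free, purely trigonometric, and yields strictly sharper information for this lemma, namely the exact value $\Xi_{1}^{min}(\varphi_{0})+\Xi_{2}^{min}(\varphi_{0})=2\arctan\bigl(\tfrac{2(\sigma-\sqrt{2})}{2\sigma-\sqrt{2}}\bigr)-\tfrac{\pi}{2}$, which in particular exhibits equality exactly at $\varphi_{0}=\tfrac{\pi}{4}$ --- the same equality point the paper's strict-concavity argument identifies.
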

\begin{proof}
	Let us define 
	\begin{align*}
		H_{1}(\varphi_{0}):&=\lim_{\psi_{0}\rightarrow +0}\widehat{h}_{1}(\varphi_{0}, \psi_{0})\\
		&=\lim_{\psi_{0}\rightarrow +0}h_{1}\left(t=\frac{1}{\sqrt{2}}, \varphi_{0}, \psi_{0}\right)\\
		&=\frac{1}{\sin\varphi_{0}}\left(1-\sqrt{2}\cos\varphi_{0}\right).
	\end{align*}
	It holds similarly to \eqref{eq:xi1_hat} that
	\begin{align}
		\label{eq:Xi1_eq}
		\sqrt{2}\cos\Xi_{1}^{min}+	H_{1}\sin\Xi_{1}^{min}
		=
		1.
	\end{align}
	Hence, together with $\sin^{2}\Xi_{1}^{min}+\cos^{2}\Xi_{1}^{min}=1$, we can obtain 
	\begin{align}
		\label{eq:Xi1=H1}
		\cos\Xi_{1}^{min}=\frac{1}{\sqrt{2}}\cdot\frac{2+H_{1}\sqrt{2H_{1}^{2}+2}}{H_{1}^{2}+2},
	\end{align}
	or its more explicit form 
	\begin{align}
		\cos\Xi_{1}^{min}=\frac{1}{\sqrt{2}}\cdot\frac{4-3\sqrt{2}\cos\varphi_{0}}{3-2\sqrt{2}\cos\varphi_{0}}.
	\end{align}
	It results in
	\begin{align}
		\label{eq:Xi_arccos}
		\Xi_{1}^{min}(\varphi_{0})=-\arccos\left(\frac{1}{\sqrt{2}}\cdot\frac{4-3\sqrt{2}\cos\varphi_{0}}{3-2\sqrt{2}\cos\varphi_{0}}\right),
	\end{align}
	where we follow the convention that $\arccos\colon[-1, 1]\to[0, \pi]$, and thus ${\Xi}_{1}^{min}\in(-\pi+\varphi_{0}, 0]$ is obtained through $-\arccos\colon[-1, 1]\to[-\pi, 0]$.
	Because
	\[
	\frac{d}{d\varphi_{0}}\left(\frac{1}{\sqrt{2}}\cdot\frac{4-3\sqrt{2}\cos\varphi_{0}}{3-2\sqrt{2}\cos\varphi_{0}}\right)
	=\frac{\sin\varphi_{0}}{(3-2\sqrt{2}\cos\varphi_{0})^{2}},
	\]
	and
	\begin{align*}
		\sqrt{1-\left(\frac{1}{\sqrt{2}}\cdot\frac{4-3\sqrt{2}\cos\varphi_{0}}{3-2\sqrt{2}\cos\varphi_{0}}\right)^{2}}
		&=\sqrt{\left(\frac{\sin\varphi_{0}}{3-2\sqrt{2}\cos\varphi_{0}}\right)^{2}}\\
		&=\frac{\sin\varphi_{0}}{3-2\sqrt{2}\cos\varphi_{0}},
	\end{align*}
	we can observe that
	\begin{align*}
		\frac{d\Xi_{1}^{min}}{d\varphi_{0}}
		&=\left(\frac{\sin\varphi_{0}}{3-2\sqrt{2}\cos\varphi_{0}}\right)^{-1}\cdot\frac{\sin\varphi_{0}}{(3-2\sqrt{2}\cos\varphi_{0})^{2}}\\
		&=\frac{1}{3-2\sqrt{2}\cos\varphi_{0}},
	\end{align*}
	and
	\begin{align}
		\frac{d^{2}\Xi_{1}^{min}}{d\varphi_{0}^{2}}=\frac{-2\sqrt{2}\sin\varphi_{0}}{(3-2\sqrt{2}\cos\varphi_{0})^{2}}<0,
	\end{align}
	which means $\Xi_{1}^{min}$ is concave.
	Therefore, for any $\varphi_{0}\in(0, \frac{\pi}{2})$, the concavity results in 
	\begin{align*}
		\frac{1}{2}\Xi_{1}^{min}(\varphi_{0})+\frac{1}{2}&\Xi_{2}^{min}(\varphi_{0})\\
		&=\frac{1}{2}\Xi_{1}^{min}(\varphi_{0})+\frac{1}{2}\Xi_{1}^{min}\left(\frac{\pi}{2}-\varphi_{0}\right)\\
		&\le\Xi_{1}^{min}\left(\frac{1}{2}\varphi_{0}+\frac{1}{2}\left(\frac{\pi}{2}-\varphi_{0}\right)\right)\\
		&=\Xi_{1}^{min}\left(\frac{\pi}{4}\right).
	\end{align*}
	Since we can see form \eqref{eq:Xi_arccos} that $\Xi_{1}^{min}\left(\frac{\pi}{4}\right)=-\frac{\pi}{4}$,
	\[
	\Xi_{1}^{min}(\varphi_{0})+\Xi_{2}^{min}(\varphi_{0})\le-\frac{\pi}{2}
	\]
	holds for any $\varphi_{0}\in(0, \frac{\pi}{2})$.
\end{proof}
According to Lemma \ref{lem:decrease} and Lemma \ref{lem:Xi_bounds},
\begin{align*}
	\widehat{\xi}_{1}^{min}(\varphi_{0}, \psi_{0})+\widehat{\xi}_{2}^{min}(\varphi_{0}, \psi_{0})
	&<\Xi_{1}^{min}(\varphi_{0})+\Xi_{2}^{min}(\varphi_{0})\\
	&\le-\frac{\pi}{2},
\end{align*}
that is,
\begin{align*}
	\xi_{1}^{min}\left(\hspace{-0.15mm}t\hspace{-0.15mm}=\hspace{-0.15mm}\frac{1}{\sqrt{2}}, \varphi_{0}, \psi_{0}\right)+\xi_{2}^{min}\left(\hspace{-0.15mm}t\hspace{-0.15mm}=\hspace{-0.15mm}\frac{1}{\sqrt{2}}, \varphi_{0}, \psi_{0}\right)
	<-\frac{\pi}{2}
\end{align*}
holds for any $\varphi_{0}$ and $\psi_{0}$ (i.e. for any $\varphi_{1}$ and $\varphi_{2}$).
However, we cannot conclude that 
\begin{align}
	\label{eq:suff small t}
	\xi_{1}^{min}\left(t, \varphi_{0}, \psi_{0}\right)+\xi_{2}^{min}\left(t, \varphi_{0}, \psi_{0}\right)
	\le-\frac{\pi}{2}
\end{align}
holds for $t\sim\frac{1}{\sqrt{2}}$: it may fail when 
\begin{align*}
	\sup_{\varphi_{0}, \psi_{0}}	\left[\xi_{1}^{min}\left(\hspace{-0.15mm}t\hspace{-0.15mm}=\hspace{-0.15mm}\frac{1}{\sqrt{2}}, \varphi_{0}, \psi_{0}\right)\right.\qquad\qquad\qquad\qquad\\
	\left.+\xi_{2}^{min}\left(\hspace{-0.15mm}t\hspace{-0.15mm}=\hspace{-0.15mm}\frac{1}{\sqrt{2}}, \varphi_{0}, \psi_{0}\right)\right]=-\frac{\pi}{2}.
\end{align*}
On the other hand, because we can observe similarly to Lemma \ref{lem:decrease} that $\xi_{1}^{min}$ is a strictly decreasing function of $\psi_{0}$, 
it is anticipated that \eqref{eq:suff small t} holds for $t\sim\frac{1}{\sqrt{2}}$ and for $\psi_{0}$ sufficiently close to $\frac{\pi}{2}$.
In fact, for $\psi_{0}\in[\frac{\pi}{4}, \frac{\pi}{2})$, we can prove the following proposition.

\begin{proposition}
	\label{prop:xi1_detail_1}
	There exists a constant $C<-\frac{\pi}{2}$ such that 
	\begin{align*}
		\widehat{\xi}_{1}^{min}(\varphi_{0}, \psi_{0})+\widehat{\xi}_{2}^{min}(\varphi_{0}, \psi_{0})
		<C,
	\end{align*}
	i.e.
	\begin{align*}
		\xi_{1}^{min}\left(t=\frac{1}{\sqrt{2}}, \varphi_{0}, \psi_{0}\right)+\xi_{2}^{min}\left(t=\frac{1}{\sqrt{2}}, \varphi_{0}, \psi_{0}\right)
		<C,
	\end{align*}
	holds for all $\psi_{0}\in[\frac{\pi}{4}, \frac{\pi}{2})$ and $\varphi_{0}\in(0, \frac{\pi}{2})$.
\end{proposition}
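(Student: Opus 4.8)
The plan is to first collapse the dependence on $\psi_{0}$ by monotonicity, and then reduce the remaining one--variable estimate to results already established for the limit $\psi_{0}\to+0$. For the first step I would use that, by \eqref{eq:Xi2_def0}, $\widehat{\xi}_{2}^{min}(\varphi_{0},\psi_{0})=\widehat{\xi}_{1}^{min}(\frac{\pi}{2}-\varphi_{0},\psi_{0})$, and that Lemma \ref{lem:decrease} (applied at both $\varphi_{0}$ and $\frac{\pi}{2}-\varphi_{0}$) makes both $\widehat{\xi}_{1}^{min}$ and $\widehat{\xi}_{2}^{min}$ strictly decreasing in $\psi_{0}$. Hence on $\psi_{0}\in[\frac{\pi}{4},\frac{\pi}{2})$ the sum is bounded above by its value at $\psi_{0}=\frac{\pi}{4}$, so it suffices to find a constant $C<-\frac{\pi}{2}$ with
\[
G(\varphi_{0}):=\widehat{\xi}_{1}^{min}\!\left(\varphi_{0},\tfrac{\pi}{4}\right)+\widehat{\xi}_{1}^{min}\!\left(\tfrac{\pi}{2}-\varphi_{0},\tfrac{\pi}{4}\right)\le C\qquad(0<\varphi_{0}<\tfrac{\pi}{2}).
\]

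Next I would combine the strict inequalities \eqref{eq:Xi1_def} and \eqref{eq:Xi2_def}, namely $\widehat{\xi}_{i}^{min}(\varphi_{0},\frac{\pi}{4})<\Xi_{i}^{min}(\varphi_{0})$, with Lemma \ref{lem:Xi_bounds}, which already yields for every $\varphi_{0}\in(0,\frac{\pi}{2})$ the pointwise strict bound $G(\varphi_{0})<\Xi_{1}^{min}(\varphi_{0})+\Xi_{2}^{min}(\varphi_{0})\le-\frac{\pi}{2}$. To upgrade this to a \emph{uniform} constant I would use continuity and compactness. The defining relation for $\widehat{\xi}_{1}^{min}(\cdot,\frac{\pi}{4})$ is still $\sqrt{2}\cos\widehat{\xi}_{1}^{min}+\widehat{h}_{1}\sin\widehat{\xi}_{1}^{min}=1$ with $\widehat{h}_{1}=(\frac{1}{\sqrt{2}}-\sqrt{2}\cos\varphi_{0})/\sin\varphi_{0}$, so the closed form analogous to \eqref{eq:Xi1=H1} shows that $G$ extends continuously to the compact interval $[0,\frac{\pi}{2}]$. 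As $\varphi_{0}\to0^{+}$ one has $\widehat{h}_{1}\to-\infty$, whence $\widehat{\xi}_{1}^{min}(\varphi_{0},\frac{\pi}{4})\to-\pi$, while $\widehat{\xi}_{1}^{min}(\frac{\pi}{2}-\varphi_{0},\frac{\pi}{4})\to\widehat{\xi}_{1}^{min}(\frac{\pi}{2},\frac{\pi}{4})$ stays finite and negative; by symmetry $G(0)=G(\frac{\pi}{2})=-\pi+\widehat{\xi}_{1}^{min}(\frac{\pi}{2},\frac{\pi}{4})<-\frac{\pi}{2}$. Thus $G$ is continuous on $[0,\frac{\pi}{2}]$ and strictly below $-\frac{\pi}{2}$ at every point, so its maximum $C:=\max_{[0,\pi/2]}G$ satisfies $C<-\frac{\pi}{2}$, and any value in $(C,-\frac{\pi}{2})$ proves the claim.

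The one point that genuinely needs care — and the main obstacle — is the uniformity of this gap, since a pointwise bound on an open interval does not by itself rule out a sequence $\varphi_{0}^{(k)}$ along which $G\to-\frac{\pi}{2}$. The only interior point where the majorant $\Xi_{1}^{min}+\Xi_{2}^{min}$ actually reaches $-\frac{\pi}{2}$ is $\varphi_{0}=\frac{\pi}{4}$ (equality in the Jensen step of Lemma \ref{lem:Xi_bounds}, strict by the concavity recorded in \eqref{eq:Xi_arccos}). There, however, $G(\frac{\pi}{4})=2\,\widehat{\xi}_{1}^{min}(\frac{\pi}{4},\frac{\pi}{4})$, and evaluating at $\widehat{h}_{1}=1-\sqrt{2}$ gives
\[
\widehat{\xi}_{1}^{min}\!\left(\tfrac{\pi}{4},\tfrac{\pi}{4}\right)=-\arccos\!\left(\frac{\sqrt{2}+(1-\sqrt{2})\sqrt{4-2\sqrt{2}}}{5-2\sqrt{2}}\right)\approx-1.11<-\tfrac{\pi}{4},
\]
so $G(\frac{\pi}{4})<-\frac{\pi}{2}$ with a definite margin. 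Continuity at $\frac{\pi}{4}$ then prevents any approach to $-\frac{\pi}{2}$ near the critical point, and compactness finishes the argument.

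As a more robust but computationally heavier alternative, one could instead prove directly that $\varphi_{0}\mapsto\widehat{\xi}_{1}^{min}(\varphi_{0},\frac{\pi}{4})$ is concave on $(0,\frac{\pi}{2})$ — mirroring the second-derivative computation behind \eqref{eq:Xi_arccos}, but now with $\cos\psi_{0}=\frac{1}{\sqrt{2}}$ in place of $1$ — and then conclude $G(\varphi_{0})\le2\,\widehat{\xi}_{1}^{min}(\frac{\pi}{4},\frac{\pi}{4})<-\frac{\pi}{2}$ by the symmetric midpoint inequality. The heavier algebra in checking the sign of that second derivative is precisely the cost of dispensing with the boundary bookkeeping; I would therefore favour the compactness route above, keeping the concavity computation in reserve.
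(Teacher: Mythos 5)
Your proof is correct, but it reaches the uniform gap by a genuinely different route than the paper. The opening step is the same in both: use the symmetry \eqref{eq:Xi2_def0} and the strict monotonicity of Lemma \ref{lem:decrease} to reduce everything to $\psi_{0}=\frac{\pi}{4}$, i.e.\ to the function $G(\varphi_{0})=\widehat{\xi}_{1}^{min}(\varphi_{0},\frac{\pi}{4})+\widehat{\xi}_{1}^{min}(\frac{\pi}{2}-\varphi_{0},\frac{\pi}{4})$. From there the paper does \emph{not} invoke Lemma \ref{lem:Xi_bounds} again: it specializes the closed form \eqref{eq:xi1_hat_pi/4}, establishes the monotone relations \eqref{eq:monotones}, and splits the range (after the symmetry reduction to $0<\varphi_{0}\le\frac{\pi}{4}$) at $\varphi^{*}=\arccos\frac{2+\sqrt{10}}{6}$, the point where $\widetilde{H}_{1}=-1$ and hence $\widetilde{\Xi}_{1}^{min}=-\frac{\pi}{2}$; each summand is then bounded by explicit values at $\varphi^{*}$, $\frac{\pi}{4}$, $\frac{\pi}{2}-\varphi^{*}$, $\frac{\pi}{2}$, producing the concrete constants $C_{1}$, $C_{2}$ at the cost of small numerical comparisons such as $\varphi^{*}>\frac{\pi}{6}$. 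You instead reuse Lemma \ref{lem:Xi_bounds} together with the strict inequalities \eqref{eq:Xi1_def}--\eqref{eq:Xi2_def} to get the pointwise bound $G(\varphi_{0})<-\frac{\pi}{2}$ on the open interval, and convert pointwise into uniform by continuity on the compactified interval $[0,\frac{\pi}{2}]$ plus the extreme value theorem. This is structurally cleaner (no case split, no numerics) and correctly isolates the real issue, namely a possible sequence along which $G\to-\frac{\pi}{2}$. What it costs is exactly the boundary bookkeeping the paper's route avoids: interior continuity of $\xi_{1}^{min}$ in $\varphi_{0}$ is already noted in the paper, but you additionally need the endpoint limits, in particular $\widehat{\xi}_{1}^{min}(\varphi_{0},\frac{\pi}{4})\to-\pi$ as $\varphi_{0}\to0^{+}$; this does hold, since $\widetilde{H}_{1}\to-\infty$ forces the argument of the arccosine in \eqref{eq:xi1_hat_pi/4} to tend to $-1$ (as $\widetilde{H}_{1}\sqrt{2\widetilde{H}_{1}^{2}+2}\sim-\sqrt{2}\,\widetilde{H}_{1}^{2}$), but that computation should be written out rather than asserted. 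Also your constant $\max_{[0,\pi/2]}G$ is non-explicit, whereas the paper's $\max\{C_{1},C_{2}\}$ is concrete. Finally, note that your explicit evaluation of $G(\frac{\pi}{4})$ is logically redundant, since the interior pointwise bound already covers $\varphi_{0}=\frac{\pi}{4}$; it serves only as a quantitative sanity check at the unique point where the majorant $\Xi_{1}^{min}+\Xi_{2}^{min}$ attains $-\frac{\pi}{2}$, and the compactness argument does not need it.
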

\begin{proof}
	Because 
	\begin{align*}
		&\widehat{\xi}_{1}^{min}(\varphi_{0}, \psi_{0})+\widehat{\xi}_{2}^{min}(\varphi_{0}, \psi_{0})\\
		&\qquad\qquad=
		\widehat{\xi}_{1}^{min}(\varphi_{0}, \psi_{0})+\widehat{\xi}_{1}^{min}\left(\frac{\pi}{2}-\varphi_{0}, \psi_{0}\right),
	\end{align*}
	we can assume without loss of generality that $0<\varphi_{0}\le\frac{\pi}{4}$.
	Due to Lemma \ref{lem:decrease}, it holds for any $\psi_{0}\in[\frac{\pi}{4}, \frac{\pi}{2})$ that
	\begin{equation}
		\begin{aligned}
			&\widehat{\xi}_{1}^{min}(\varphi_{0}, \psi_{0})
			\le\widehat{\xi}_{1}^{min}\left(\varphi_{0}, \psi_{0}=\frac{\pi}{4}\right),\\
			&\widehat{\xi}_{1}^{min}\left(\frac{\pi}{2}-\varphi_{0}, \psi_{0}\right)
			\le\widehat{\xi}_{1}^{min}\left(\frac{\pi}{2}-\varphi_{0}, \psi_{0}=\frac{\pi}{4}\right).
		\end{aligned}
	\end{equation}
	Let us denote $\widehat{\xi}_{1}^{min}\left(\varphi_{0}, \psi_{0}=\frac{\pi}{4}\right)$ 
	simply by $\widetilde{{\Xi}}_{1}^{min}\left(\varphi_{0}\right)$. 
		In order to investigate $\widetilde{{\Xi}}_{1}^{min}\left(\varphi_{0}\right)$ and $\widetilde{{\Xi}}_{1}^{min}\left(\frac{\pi}{2}-\varphi_{0}\right)$, we have to recall \eqref{eq:xi1_hat}. 
	Similarly to \eqref{eq:Xi1_eq} and \eqref{eq:Xi1=H1} in the proof of Lemma \ref{lem:Xi_bounds}, it results in
	\begin{equation}
		\label{eq:xi1_hat_pi/4}
		\begin{aligned}
			\cos\widetilde{{\Xi}}_{1}^{min}
			=\frac{1}{\sqrt{2}}\cdot\frac{2+\widetilde{H}_{1}\sqrt{2\widetilde{H}_{1}^{2}+2}}{\widetilde{H}_{1}^{2}+2},
		\end{aligned}
	\end{equation}
	where
	\begin{equation}
		\label{eq:h1_hat_pi/4}
		\begin{aligned}
			\widetilde{H}_{1}\left(\varphi_{0}\right)
			&=\widehat{h}_{1}\left(\varphi_{0}, \psi_{0}=\frac{\pi}{4}
			\right)\\
			&=\frac{1}{\sin\varphi_{0}}\left(\frac{1}{\sqrt{2}}-\sqrt{2}\cos\varphi_{0}\right).
		\end{aligned}
	\end{equation}
	Note that  in this case we cannot apply a similar method to the one in Lemma \ref{lem:Xi_bounds} because $\widetilde{\Xi}_{1}^{min}$ does not have a clear form like \eqref{eq:Xi_arccos}.
	Alternatively, we focus on the following monotone relations 
	between $\widetilde{\Xi}_{1}^{min}$, $\widetilde{H}_{1}$, and $\varphi_{0}$ (referring to the proof of Lemma \ref{lem:decrease} may be helpful):
	\begin{align}
		\label{eq:monotones}
		\frac{d\widetilde{\Xi}_{1}^{min}}{d\widetilde{H}_{1}}>0,
		\quad
		\frac{d\widetilde{H}_{1}}{d\varphi_{0}}>0
		\quad
		\left(\mbox{thus}\ \ \frac{d\widetilde{\Xi}_{1}^{min}}{d\varphi_{0}}>0\right).
	\end{align}
	From these relations, it can be seen that our restriction $0<\varphi_{0}\le\frac{\pi}{4}$ is equivalent to the condition $\widetilde{H}_{1}\le1-\sqrt{2}$ since $\widetilde{H}_{1}\left(0\right)=-\infty$ and $\widetilde{H}_{1}\left(\frac{\pi}{4}\right)=1-\sqrt{2}$.
	The claim of the proposition can be shown easily when $\widetilde{H}_{1}\le-1$ (or $0<\varphi_{0}\le\varphi^{*}:=\arccos\frac{2+\sqrt{10}}{6}$, where $\widetilde{H}_{1}(\varphi^{*})=-1$).
	\begin{figure}[h]
		\includegraphics[bb=0.000000 0.000000 702.000000 519.000000, scale=0.315]{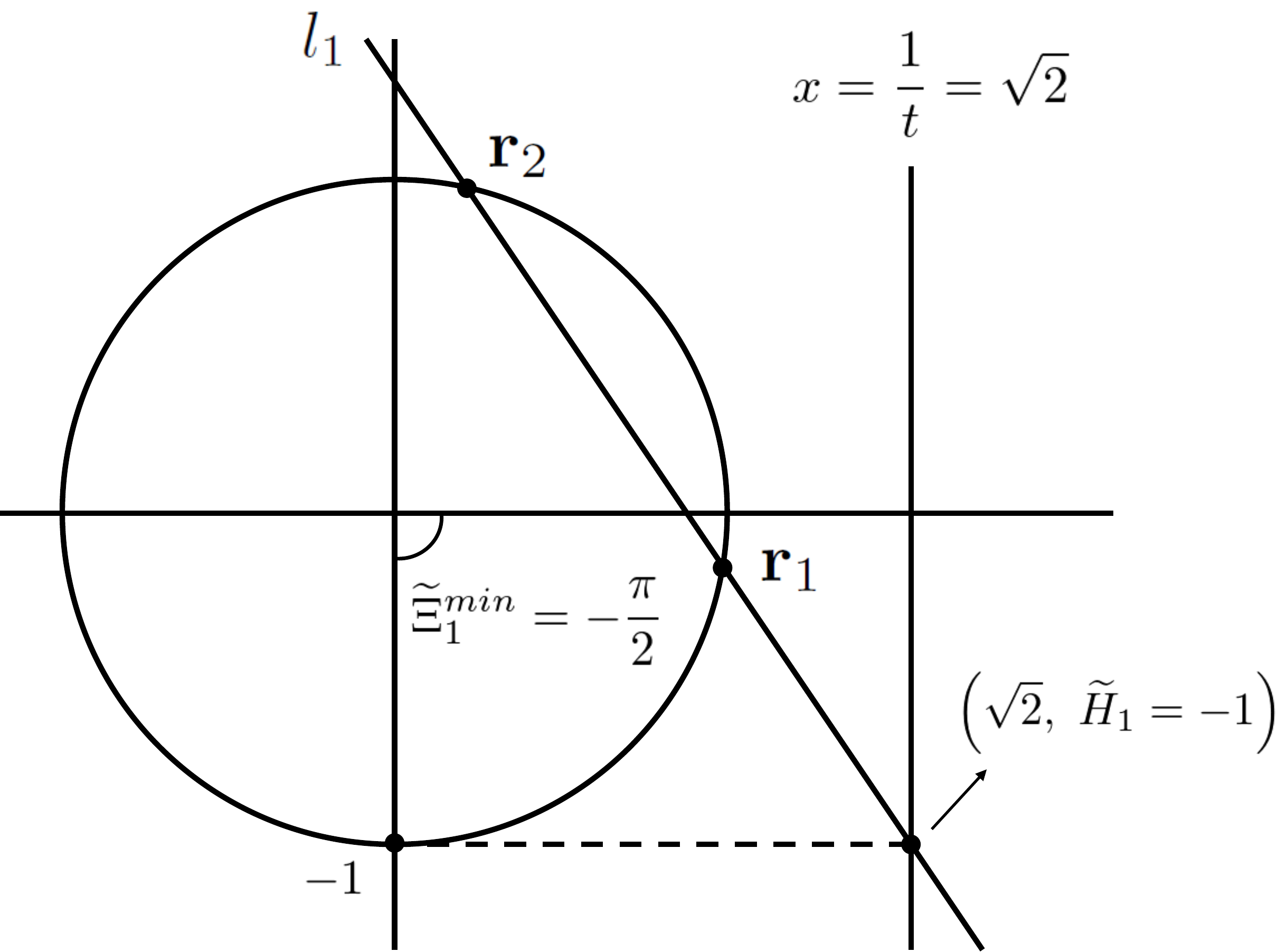}
		\caption{Geometric description of $\widetilde{\Xi}_{1}^{min}$.
			It can be observed that $\widetilde{\Xi}_{1}^{min}=-\frac{\pi}{2}$ when $\widetilde{H}_{1}=-1$.}
		\label{Fig:Xi=-pi/2}
	\end{figure}
	In fact, 
	\begin{align*}
		\widetilde{\Xi}_{1}^{min}(\varphi_{0})\le\widetilde{\Xi}_{1}^{min}\left(\varphi^{*}\right)=-\frac{\pi}{2}
	\end{align*}
	and
	\begin{align*}
		\widetilde{\Xi}_{1}^{min}\left(\frac{\pi}{2}-\varphi_{0}\right)<\widetilde{\Xi}_{1}^{min}\left(\frac{\pi}{2}\right)=-\arccos\frac{2\sqrt{2}+\sqrt{3}}{5}
	\end{align*}
	hold (see FIG. \ref{Fig:Xi=-pi/2} and \eqref{eq:xi1_hat_pi/4}), and thus we can conclude
	\[
	\widetilde{\Xi}_{1}^{min}(\varphi_{0})+\widetilde{\Xi}_{1}^{min}\left(\frac{\pi}{2}-\varphi_{0}\right)<C_{1},
	\]
	where
	\[
	C_{1}=-\frac{\pi}{2}-\arccos\frac{2\sqrt{2}+\sqrt{3}}{5}\left(<-\frac{\pi}{2}\right).
	\]
	When $-1<\widetilde{H}_{1}\le1-\sqrt{2}$ (or $\varphi^{*}<\varphi_{0}\le\frac{\pi}{4}$), we need a bit complicated evaluations.
	It holds similarly to the previous calculations that
	\begin{align*}
		&\widetilde{\Xi}_{1}^{min}(\varphi_{0})\le\widetilde{\Xi}_{1}^{min}\left(\frac{\pi}{4}\right),
		\\
		&\widetilde{\Xi}_{1}^{min}\left(\frac{\pi}{2}-\varphi_{0}\right)<\widetilde{\Xi}_{1}^{min}\left(\frac{\pi}{2}-\varphi^{*}\right).
	\end{align*}
	Since \[\widetilde{\Xi}_{1}^{min}=-\frac{\pi}{4}\iff\widetilde{H}_{1}=0\iff\varphi_{0}=\frac{\pi}{3},
	\]  $\widetilde{\Xi}_{1}^{min}(\frac{\pi}{4})<-\frac{\pi}{4}=\widetilde{\Xi}_{1}^{min}(\frac{\pi}{3})$ holds due to the monotone relations \eqref{eq:monotones}.
	On the other hand, we have
	\begin{align*}
		\cos\varphi^{*}-\cos\frac{\pi}{6}
		&=\frac{2+\sqrt{10}}{6}-\frac{\sqrt{3}}{2}\\
		&=-0.0056...<0,
	\end{align*}
	that is,
	\[
	\varphi^{*}>\frac{\pi}{6}.
	\]
	It follows that $\frac{\pi}{2}-\varphi^{*}<\frac{\pi}{3}$, and thus $\widetilde{\Xi}_{1}^{min}\left(\frac{\pi}{2}-\varphi_{0}\right)<-\frac{\pi}{4}$.
	Therefore, we can conclude also in this case
	\[
	\widetilde{\Xi}_{1}^{min}(\varphi_{0})+\widetilde{\Xi}_{1}^{min}\left(\frac{\pi}{2}-\varphi_{0}\right)<C_{2},
	\]
	where
	\[
	C_{2}=\widetilde{\Xi}_{1}^{min}\left(\frac{\pi}{4}\right)+\widetilde{\Xi}_{1}^{min}\left(\frac{\pi}{2}-\varphi^{*}\right)\left(<-\frac{\pi}{2}\right).
	\]
	Overall, we have obtained
	\[
	\widetilde{\Xi}_{1}^{min}(\varphi_{0}, \psi_{0})+\widetilde{\Xi}_{2}^{min}\left(\varphi_{0}, \psi_{0}\right)<\max\{C_{1}, C_{2}\}\left(<-\frac{\pi}{2}\right)
	\]
	for all $\varphi_{0}\in(0, \frac{\pi}{2})$ and $\psi_{0}\in[\frac{\pi}{4}, \frac{\pi}{2})$.
\end{proof}

By virtue of this proposition, for $t$ sufficiently close to $\frac{1}{\sqrt{2}}$, 
\begin{align*}
	\xi_{1}^{min}\left(t, \varphi_{0}, \psi_{0}\right)+\xi_{2}^{min}\left(t, \varphi_{0}, \psi_{0}\right)
	\le-\frac{\pi}{2}
\end{align*}
follows from the continuity of $\xi_{1}^{min}$ and $\xi_{2}^{min}$ with respect to $t$ when $\frac{\pi}{4}\le\psi_{0}<\frac{\pi}{2}$.
It means that there always exist $\xi_{1}^{\star}\ge\xi_{1}^{min}$ and $\xi_{2}^{\star}\ge\xi_{2}^{min}$ for such $t$ and for any $\varphi_{1}$ and $\varphi_{2}$ satisfying $\xi_{1}^{\star}+\xi_{2}^{\star}=-\frac{\pi}{2}.$
For these $\xi_{1}^{\star}$ and $\xi_{2}^{\star}$, $\mathbf{v}_1=-\mathbf{v}_2$ holds, and thus $\widetilde{\A}_{1}^{t}$ and $\widetilde{\A}_{2}^{t}$ are compatible, i.e. $\A_{D}^{t\vx}$ and $\A_{D}^{t\vy}$ 
are $\state_1$-compatible.

On the other hand, when $0<\psi_{0}<\frac{\pi}{4}$, it may not hold for $t\sim\frac{1}{\sqrt{2}}$ that $\xi_{1}^{min}\left(t, \varphi_{0}, \psi_{0}\right)+\xi_{2}^{min}\left(t, \varphi_{0}, \psi_{0}\right)
\le-\frac{\pi}{2}$, and thus we cannot apply the same argument.
Nevertheless, we can demonstrate that there exist $\xi_1$ and $\xi_2$ 
such that $\widetilde{\A}^{t}_1$ and $\widetilde{\A}^{t}_2$ are compatible even when $0<\psi_{0}<\frac{\pi}{4}$.
To see this, let us assume $0<\psi_{0}<\frac{\pi}{4}$ and apply the necessary and sufficient condition 
for (in)compatibility. 
According to the result proved in \cite{StReHe08,BuSc10,YuLiLiOh10}, $\widetilde{\A}^{t}_1$ and $\widetilde{\A}^{t}_2$ with \eqref{eq:E1_Bloch} and \eqref{eq:E2_Bloch} respectively are 
compatible if and only if
\begin{align}
	\label{eq:compatible iff}
	\begin{aligned}
		&\left( 1- F_1^2 - F_2^2\right)
		\left( 1- \frac{w_1^2}{F_1^2} - \frac{w_2^2}{F_2^2}
		\right)\\
		&\qquad\qquad\qquad\qquad\quad\leq \left( \mathbf{m}_1 \cdot \mathbf{m}_2 
		- w_1 w_2\right)^2
	\end{aligned} 
\end{align}
holds, where 
\begin{align}
	F_1 &:= \frac{1}{2}\hspace{-0.25mm}
	\left(
	\sqrt{(1+w_1)^2 - C_1^2}
	+ \sqrt{(1-w_1)^2 - C_1^2}
	\right),
	\\
	F_2 &:= \frac{1}{2}\hspace{-0.25mm}
	\left(
	\sqrt{(1+w_2)^2 - C_2^2}
	+ \sqrt{(1-w_2)^2 - C_2^2}
	\right). 
\end{align}
For $\xi_1^{min}$ and $\xi_2^{min}$, 
since it holds that 
\begin{align}
	1-w_1(\xi_1^{min}) &= C_1 (\xi_1^{min}),\\
	1-w_2(\xi_2^{min}) &= C_2(\xi_2^{min}), 
\end{align}
they become 
\begin{align}
	F_1 = \sqrt{w_1(\xi_1^{min})}, \quad
	F_2= \sqrt{w_2(\xi_2^{min}) }. 
\end{align}
Therefore, \eqref{eq:compatible iff} can be rewritten as
\begin{widetext}
	\begin{equation}
		\label{eq:compatible iff for min}
		\begin{aligned}
			\left[
			(1- \sin (\xi_1^{min}+ \xi_2^{min}))
			w_1(\xi_1^{min})w_2(\xi_2^{min})
			-
			(1+\sin(\xi_1^{min} + \xi_2^{min}))
			(1- w_1(\xi_1^{min}) - w_2(\xi_2^{min}))
			\right]\qquad\qquad\qquad\\
			\left[
			(1-w_1(\xi_1^{min}))(1-w_2(\xi_2^{min}))(1- \sin (\xi_1^{min}+ \xi_2^{min}))
			\right]
			\ge 0. 
		\end{aligned}
	\end{equation}
\end{widetext}
If $1-\sin(\xi_1^{min} + \xi_2^{min})=0$, then \eqref{eq:compatible iff for min} holds, that is, $\widetilde{\A}^{t}_1$ and $\widetilde{\A}^{t}_2$ for $\xi_1^{min}$ and $\xi_2^{min}$ respectively are compatible.
Therefore, we hereafter assume $1-\sin(\xi_1^{min} + \xi_2^{min})>0$, and rewrite \eqref{eq:compatible iff for min} as (note that $0<w_1(\xi_1^{min})<1$, $0<w_2(\xi_2^{min})<1$)
\begin{align}
	\begin{aligned}
		(1+\sin(\xi_1^{min} + \xi_2^{min}))
		(1- w_1(\xi_1^{min}) - w_2(\xi_2^{min}))\ \\
		\leq (1- \sin (\xi_1^{min}+ \xi_2^{min}))
		w_1(\xi_1^{min})w_2(\xi_2^{min}). 
	\end{aligned}
\end{align}
In other words, 
$\widetilde{\A}^{t}_1$ and 
$\widetilde{\A}^{t}_2$ with respect to 
$\xi^{min}_1$ and $\xi^{min}_2$ are incompatible 
if and only if
\begin{align}
	\label{eq:incompatible iff for min}
	\begin{aligned}
		(1+\sin(\xi_1^{min} + \xi_2^{min}))
		(1- w_1(\xi_1^{min}) - w_2(\xi_2^{min}))\ \\
		> (1- \sin (\xi_1^{min}+ \xi_2^{min}))
		w_1(\xi_1^{min})w_2(\xi_2^{min})
	\end{aligned}
\end{align}
holds.
In order to investigate whether \eqref{eq:incompatible iff for min} holds, it is helpful to introduce a function $Z$ defined as 
\begin{widetext}
	\begin{equation}
		\label{eq:def of Z}
		\begin{aligned}
			Z(t, \varphi_{0}, \psi_{0})
			:= \left[1+\sin(\xi_1^{min}(t, \varphi_{0}, \psi_{0})+\xi_2^{min}(t, \varphi_{0}, \psi_{0}))\right]
			\left[1+ w_1(\xi_1^{min}(t, \varphi_{0}, \psi_{0})) + w_2(\xi_2^{min}(t, \varphi_{0}, \psi_{0}))\right]
			\qquad\quad\\
			- \left[1- \sin (\xi_1^{min}(t, \varphi_{0}, \psi_{0})+\xi_2^{min}(t, \varphi_{0}, \psi_{0}))\right]
			w_1(\xi_1^{min}(t, \varphi_{0}, \psi_{0}))
			w_2(\xi_2^{min}(t, \varphi_{0}, \psi_{0})). 
		\end{aligned}
	\end{equation}
\end{widetext}
Because
\begin{align*}
	(1+\sin(\xi_1^{min} + \xi_2^{min}))
	(1- w_1(\xi_1^{min}) - w_2(\xi_2^{min}))\qquad\\
	<(1+\sin(\xi_1^{min} + \xi_2^{min}))
	(1+ w_1(\xi_1^{min}) + w_2(\xi_2^{min})),
\end{align*}
\begin{eqnarray}
	\label{eq:Z>0}
	Z(t, \varphi_{0}, \psi_{0}) >0
\end{eqnarray}
holds if $\widetilde{\A}^{t}_1$ and 
$\widetilde{\A}^{t}_2$ with respect to 
$\xi^{min}_1$ and $\xi^{min}_2$ are incompatible.
Let us focus on the case when $t=\frac{1}{\sqrt{2}}$ (i.e. $\xi_{1}^{min}=\widehat{\xi}_{1}^{min}$).
If a pair $(\varphi_{0}, \psi_{0})$ satisfies $\widehat{\xi}_{1}^{min}(\varphi_{0}, \psi_{0})\le-\frac{\pi}{2}$ or $\widehat{\xi}_{2}^{min}(\varphi_{0}, \psi_{0})\le-\frac{\pi}{2}$, then 
\begin{align*}
	\widehat{\xi}_{1}^{min}(\varphi_{0}, \psi_{0})+\widehat{\xi}_{2}^{min}(\varphi_{0}, \psi_{0})
	<C
\end{align*}
with
\begin{align*}
	C&=-\frac{\pi}{2}+
	\lim_{\substack{\varphi_{0}\rightarrow \frac{\pi}{2}-0\\ 
			\psi_{0}\rightarrow+0}}
	\widehat{\xi}_{1}^{min}\left(\varphi_{0}, \psi_{0}\right)\\
	&=-\frac{\pi}{2}-\arccos\left(\frac{2\sqrt{2}}{3}\right)\\
	&<-\frac{\pi}{2}
\end{align*}
holds due to similar monotone relations to \eqref{eq:monotones} between $\varphi_{0}, \psi_{0},$ and $\widehat{\xi}_{1}^{min}$ (remember that $\widehat{\xi}_{2}^{min}(\varphi_{0}, \psi_{0})=\widehat{\xi}_{1}^{min}\left(\frac{\pi}{2}-\varphi_{0}, \psi_{0}\right)$).
Therefore, in this case, we can apply the same argument as Proposition \ref{prop:xi1_detail_1}, which results in the compatibility of $\widetilde{\A}^{t}_1$ and 
$\widetilde{\A}^{t}_2$ for $t\sim\frac{1}{\sqrt{2}}$.
On the other hand, let us examine the case when  $(\varphi_{0}, \psi_{0})$ satisfies $\psi_{0}\in(0, \frac{\pi}{4})$, and $\widehat{\xi}_{1}^{min}(\varphi_{0}, \psi_{0})>-\frac{\pi}{2}$ and $\widehat{\xi}_{2}^{min}(\varphi_{0}, \psi_{0})>-\frac{\pi}{2}$.
Because $\psi_{0}\in(0, \frac{\pi}{4})$, we obtain for general $t$ (see \eqref{eq:x_1explicit})
\begin{align}
	\begin{aligned}
		w_1(\xi^{min}_1) 
		&> - \frac{t}{\sqrt{2}} \frac{\sin \xi_1^{min}}{
			\sin (\varphi_0- \xi_1^{min})}\\
		&\geq \frac{t}{\sqrt{2}} (-\sin \xi_1^{min})
		. 
	\end{aligned}
\end{align}
For $t=\frac{1}{\sqrt{2}}$, since 
\[
-\frac{\pi}{2}<\widehat{\xi}_{1}^{min}(\varphi_{0}, \psi_{0})<\lim_{\substack{\varphi_{0}\rightarrow \frac{\pi}{2}-0\\ 
		\psi_{0}\rightarrow+0}}
\widehat{\xi}_{1}^{min}\left(\varphi_{0}, \psi_{0}\right),
\] 
it gives a bound 
\begin{eqnarray}
	w_1(\xi_1^{min}) >  \frac{1}{2} \sin \widehat{\xi}_0, 
\end{eqnarray}
where we define
\begin{align*}
	\widehat{\xi}_0 &= -\lim_{\substack{\varphi_{0}\rightarrow \frac{\pi}{2}-0\\ 
			\psi_{0}\rightarrow+0}}
	\widehat{\xi}_{1}^{min}\left(\varphi_{0}, \psi_{0}\right)\\
	&=\arccos \left(\frac{2\sqrt{2}}{3}\right).
\end{align*}
Let $\varepsilon$ be a positive constant satisfying $\varepsilon<\frac{1}{16} (\sin\widehat{\xi}_0)^{2}$.
Due to the continuity of sine, there exists a positive constant $\delta$ such that $\sin x\in(-1, -1+\varepsilon)$ whenever $x\in\left(-\frac{\pi}{2}-\delta, -\frac{\pi}{2}\right)$.
If $(\varphi_{0}, \psi_{0})$ satisfies $\widehat{\xi}_{1}^{min}(\varphi_{0}, \psi_{0})+\widehat{\xi}_{2}^{min}(\varphi_{0}, \psi_{0})\le-\frac{\pi}{2}-\delta$, then it again leads to the same argument as Proposition \ref{prop:xi1_detail_1}, and we can see that $\widetilde{\A}^{t}_1$ and $\widetilde{\A}^{t}_2$ for this $(\varphi_{0}, \psi_{0})$ are compatible.
Conversely, if $(\varphi_{0}, \psi_{0})$ satisfies $-\frac{\pi}{2}-\delta<\widehat{\xi}_{1}^{min}(\varphi_{0}, \psi_{0})+\widehat{\xi}_{2}^{min}(\varphi_{0}, \psi_{0})<-\frac{\pi}{2}$ (remember Lemma \ref{lem:Xi_bounds}), then 
\[
-1<\sin(
\widehat{\xi}_{1}^{min}(\varphi_{0}, \psi_{0})+\widehat{\xi}_{2}^{min}(\varphi_{0}, \psi_{0})
)<-1+\varepsilon
\]
follows from the definition of $\delta$.
Therefore, by virtue of \eqref{eq:def of Z}, we have
\begin{widetext}
	\begin{align*}
		Z\left(t=\frac{1}{\sqrt{2}}, \varphi_{0}, \psi_{0}\right)
		&=\left[1+\sin(\widehat{\xi}_1^{min}(\varphi_{0}, \psi_{0})+\widehat{\xi}_2^{min}(\varphi_{0}, \psi_{0}))\right]
		\left[1+ w_1(\widehat{\xi}_1^{min}(\varphi_{0}, \psi_{0})) + w_2(\widehat{\xi}_2^{min}(\varphi_{0}, \psi_{0}))\right]
		\\
		&\qquad\qquad\qquad\qquad
		- \left[1- \sin (\widehat{\xi}_1^{min}(\varphi_{0}, \psi_{0})+\widehat{\xi}_2^{min}(\varphi_{0}, \psi_{0}))\right]
		w_1(\widehat{\xi}_1^{min}(\varphi_{0}, \psi_{0}))
		w_2(\widehat{\xi}_2^{min}(\varphi_{0}, \psi_{0}))
		\\
		&<\varepsilon\left[1+ w_1(\widehat{\xi}_1^{min}(\varphi_{0}, \psi_{0})) + w_2(\widehat{\xi}_2^{min}(\varphi_{0}, \psi_{0}))\right]-(2-\varepsilon)w_1(\widehat{\xi}_1^{min}(\varphi_{0}, \psi_{0}))
		w_2(\widehat{\xi}_2^{min}(\varphi_{0}, \psi_{0}))
		\\
		&=\varepsilon\left[1+w_1(\widehat{\xi}_1^{min}(\varphi_{0}, \psi_{0}))\right]\left[1+w_2(\widehat{\xi}_1^{min}(\varphi_{0}, \psi_{0}))\right]-2w_1(\widehat{\xi}_1^{min}(\varphi_{0}, \psi_{0}))
		w_2(\widehat{\xi}_2^{min}(\varphi_{0}, \psi_{0}))
		\\
		&<4\varepsilon-2w_1(\widehat{\xi}_1^{min}(\varphi_{0}, \psi_{0}))
		w_2(\widehat{\xi}_2^{min}(\varphi_{0}, \psi_{0})).
	\end{align*}
\end{widetext}
Because 
\begin{align*}
	4\varepsilon-2w_1(\widehat{\xi}_1^{min}(\varphi_{0}, \psi_{0}))
	w_2(\widehat{\xi}_2^{min}(\varphi_{0}, \psi_{0}))\qquad\quad
	\\<\frac{1}{4} (\sin\widehat{\xi}_0)^{2}-\frac{1}{2} (\sin\widehat{\xi}_0)^{2}=-\frac{1}{4} (\sin\widehat{\xi}_0)^{2},
\end{align*}
it holds that
\[
Z\left(t=\frac{1}{\sqrt{2}}, \varphi_{0}, \psi_{0}\right)
<-\frac{1}{4} (\sin\widehat{\xi}_0)^{2}<0.
\]
Therefore, for $t\sim\frac{1}{\sqrt{2}}$, $Z\left(t, \varphi_{0}, \psi_{0}\right)\le0$ holds, that is, $\widetilde{\A}_1^{t}$ and 
$\widetilde{\A}_2^{t}$ with respect to 
$\xi^{min}_1$ and $\xi^{min}_2$ are compatible.
Overall, we have demonstrated that when $t\sim\frac{1}{\sqrt{2}}$, there exist compatible observables $\widetilde{\A}_1^{t}$ and 
$\widetilde{\A}_2^{t}$ for any line $\state_{1}\subset\state_{D}$ such that they agree with $\A^{t\vx}_{D}$ and $\A^{t\vy}_{D}$ on $\state_{1}$ respectively.
That is, when $t\sim\frac{1}{\sqrt{2}}$, $\A^{t\vx}_{D}$ and $\A^{t\vy}_{D}$ are $\state_{1}$-compatible for any line $\state_{1}\subset\state_{D}$.
Therefore, we can conclude that $\chi_{incomp}(\A^{t\vx}_{D},\A^{t\vy}_{D})=3$ for $t\sim\frac{1}{\sqrt{2}}$, and thus the set $M$ in \eqref{eq: sets of xi=2,3} is nonempty.

\subsection*{Proof of Proposition \ref{prop:qubit-threshold}
	: Part 2}
In this part, we shall show that 
\[
t_{0}':=\inf L=\sup M\in M,
\]
where $L$ and $M$ are defined in \eqref{eq: sets of xi=2,3}.
In order to prove this, we will see that if $t\in L$, then $t-\delta\in L$ for sufficiently 
small $\delta>0$, that is, $t_{0}'\notin L$.

Let us focus again on a system described by 
a two-dimensional disk
state space $\mathcal{S}_D$. 
It is useful to identify this system with the system of a quantum bit with real coefficients by replacing 
$\{\sigma_1, \sigma_2\}$ with 
$\{\sigma_3, \sigma_1\}$.
Then, defining $\mathcal{E}_{D}$ as the set of all effects on $\mathcal{S}_D$, we can see that any $E\in\mathcal{E}_{D}$  can be expressed as a real-coefficient positive matrix smaller than $\id$.
We also define $O_{D}(2)
\subset \mathcal{E}_{D}\times \mathcal{E}_{D}$ as the set of all binary observables  on $\state_{D}$, 
which is isomorphic naturally to $\mathcal{E}_{D}$
since a binary observable $\A$ is completely specified by its effect $\A(+) \in \mathcal{E}_{D}$. 
With introducing a topology (e.g. norm topology) on $\mathcal{E}_{D}$, it also can be observed that $O_{D}(2)$ is homeomorphic to $\mathcal{E}_{D}$.
Note that because the system is described by finite dimensional matrices, any (natural) topology (norm topology, weak topology, etc.) coincides with each other.  
For a pair of states $\{\varrho^{\mathbf{r}_{1}}, \varrho^{\mathbf{r}_2}\}$ in $\state_{D}$, and a binary observable $\A
\in O_{D}(2)$, we define a set of observables $C(\A: \varrho^{\mathbf{r}_{1}}, 
\varrho^{\mathbf{r}_{2}})$ as the set of all binary observables $\widetilde{\A}\in O_{D}(2)$ such that 
\begin{align*}
	&\tr{\varrho^{\mathbf{r}_{1}}\widetilde{\A}(\pm)}=\tr{\varrho^{\mathbf{r}_{1}}\A(\pm)},\\ &\tr{\varrho^{\mathbf{r}_{2}}\widetilde{\A}(\pm)}=\tr{\varrho^{\mathbf{r}_{2}}\A(\pm)}.
\end{align*}
It can be confirmed easily that  
$C(\A: \varrho^{\mathbf{r}_{1}}, 
\varrho^{\mathbf{r}_{2}})$ is closed in $O_{D}(2)\simeq \mathcal{E}_{D}$. 
Let us denote by $O_{D}(4)$ the set of all observables 
with four outcomes, which is a compact (i.e. bounded and closed) subset of 
$\mathcal{E}_{D}^4$.  
For each $\M=\{\M(x,y)\}\in O_{D}(4)$, 
we can introduce a pair of binary observables 
by 
\[
\pi_1(\M)= \left\{\sum_{y}\M(x,y)\right\}_x,\  
\pi_2(\M)=\left\{\sum_x \M(x,y)\right\}_y.
\]
Since $\pi_j\colon O_{D}(4)\to O_{D}(2)$ is continuous, 
the set of all compatible binary observables 
denoted by
\begin{align*}
	JM(2,2):=\{(\pi_1(\M), \pi_2(\M))\mid\M \in O_{D}(4)\}
\end{align*}
is compact in $O_{D}(2)\times O_{D}(2)
\simeq \mathcal{E}_{D} \times \mathcal{E}_D$ as well. 
As we have seen in the previous part, $\chi_{incomp}(\A^{t\vx}, \A^{t\vy})=2$ (i.e. $\chi_{incomp}(\A_{D}^{t\vx}, \A_{D}^{t\vy})=2$) if and only if 
there exists a pair of vectors
$\mathbf{r}_{1}, \mathbf{r}_{2} \in \partial D$ such that 
\begin{align*}
	\left(C(\A_{D}^{t\vx}: \varrho^{\mathbf{r}_{1}}, \varrho^{\mathbf{r}_{2}})
	\times C(\A^{t\vy}_{D}: \varrho^{\mathbf{r}_{1}}, 
	\varrho^{\mathbf{r}_{2}}) \right)
	\cap JM(2:2) =\emptyset. 
\end{align*} 

\par
Let us examine concrete representations of the 
sets. 
Each effect $E\in\mathcal{E}_{D}$ is written as 
$E=\frac{1}{2}(e_0 \id + \mathbf{e}\cdot\sigma)=\frac{1}{2}(e_0 \id + e_1 \sigma_1 + e_2\sigma_2)$ 
with $(e_0, \mathbf{e})=(e_0, e_1, e_2) 
\in \real^3$ satisfying $0 \leq  
e_0 \pm |\mathbf{e}| \leq 2$.

If we consider another effect $F=\frac{1}{2}(f_0 \id + \mathbf{f}\cdot\sigma)$, the operator norm of $E-F$ 
is calculated as 
\begin{align}
	\label{eq:norm for qubit effect}
	\Vert E-F\Vert =\frac{1}{2}\left( |e_0 - f_0 | 
	+ |\mathbf{e} - \mathbf{f}|\right). 
\end{align}
We may employ this norm to define 
a topology on $\mathcal{E}_{D}$ and $O_{D}(2)
\simeq \mathcal{E}_{D}$. 
On the other hand, each state in $\state_{D}$ is parameterized as 
$\varrho^{\mathbf{r}_{1}} = \frac{1}{2}(\id + x_1 \sigma_1 + 
y_1\sigma_2)$, where $\mathbf{r}_{1}=(x_1, y_1)$ 
satisfies $|\mathbf{r}_{1}|\leq 1$. 
For an effect $E$ and a state $\varrho^{\mathbf{r}_{1}}$, we have 
$\mbox{tr}[\varrho^{\mathbf{r}_{1}}E]
=\frac{1}{2}(e_0 + \mathbf{r}_{1}\cdot \mathbf{e})
$. 
In particular, when considering $\A^{t\vx}(\pm)=\frac{1}{2}(\id \pm 
t \sigma_1)$, 
a binary observable $\C$ determined by the effect $\C(+)=\frac{1}{2}(c_0 \id  + \mathbf{c}\cdot 
\mathbf{\sigma})=\frac{1}{2}(c_0 \id  + c_{1}\sigma_1+c_{2}\sigma_{2})$ satisfies 
$\C \in 
C(\A^{t\vx}: \varrho^{\mathbf{r}_{1}}, \varrho^{\mathbf{r}_{2}})$ if and only if 
\begin{align*}
	&\tr{\varrho^{\mathbf{r}_{1}}\A^{t\vx}(+)}=\tr{\varrho^{\mathbf{r}_{1}}\C(+)},\\ &\tr{\varrho^{\mathbf{r}_{2}}\A^{t\vx}(+)}=\tr{\varrho^{\mathbf{r}_{2}}\C(+)},
\end{align*}
i.e.
\begin{align*}
	1 +t x_1 &=c_0 + \mathbf{r}_{1}\cdot \mathbf{c}
	= c_0 + x_1 c_1+y_1 c_2, \\
	1 + t x_2 &=c_0 + \mathbf{r}_{2}\cdot \mathbf{c}
	= c_0 + x_2 c_1 +y_2 c_2. 
\end{align*}
hold, where we set $\mathbf{r}_{2}=(x_{2}, y_{2})$.
The set of their solutions for $(c_0, \mathbf{c})$ is represented as 
\begin{align*}
	(c_0, \mathbf{c})=(1, t, 0)+\lambda'
	\left(
	-\frac{x_{1}y_{2}-y_{1}x_{2}}{x_{1}-x_{2}},\  
	-\frac{y_{1}-y_{2}}{x_{1}-x_{2}},\ 
	1
	\right)
\end{align*}
with $\lambda'\in\real$.
Let us define a vector $\mathbf{n}\in\real^{2}$ such that 
\[
(1, \mathbf{n})\cdot(1, \mathbf{r}_{1})=(1, \mathbf{n})\cdot(1, \mathbf{r}_{2})=0
\]
(i.e. $\mathbf{n}\cdot\mathbf{r}_{1}=\mathbf{n}\cdot\mathbf{r}_{2}=-1$).
It is easy to see that 
\[
\left(
-\frac{x_{1}y_{2}-y_{1}x_{2}}{x_{1}-x_{2}},\  
-\frac{y_{1}-y_{2}}{x_{1}-x_{2}},\ 
1
\right)
\propto
(1, \mathbf{n}),
\]
and thus the set of solutions can be rewritten as
\begin{align}
	\label{eq:solution for C}
	(c_0, \mathbf{c})=(1, t, 0)+\lambda(1, \mathbf{n})
\end{align}
with $\lambda\in\real$.
Note that because we are interested in the case when $\A_{D}^{t\vx}$ and $\A_{D}^{t\vy}$ are $\state_{1}$-incompatible, we do not consider the case when $\mathbf{r}_{1}$ and $\mathbf{r}_{2}$ are parallel or when $x_{1}=x_{2}$ corresponding to $\psi_{0}=\frac{\pi}{2}$ or $\varphi_{0}=0$ in Part 1 respectively.
Therefore, the vector $\mathbf{n}=(n_{x}, n_{y})$ can be defined successfully, and it is easy to verify that $|\mathbf{n}|=\sqrt{n_{x}^{2}+n_{y}^{2}}>1$.
\begin{figure}[h]
	\includegraphics[bb=0.000000 0.000000 539.000000 496.000000, scale=0.3]{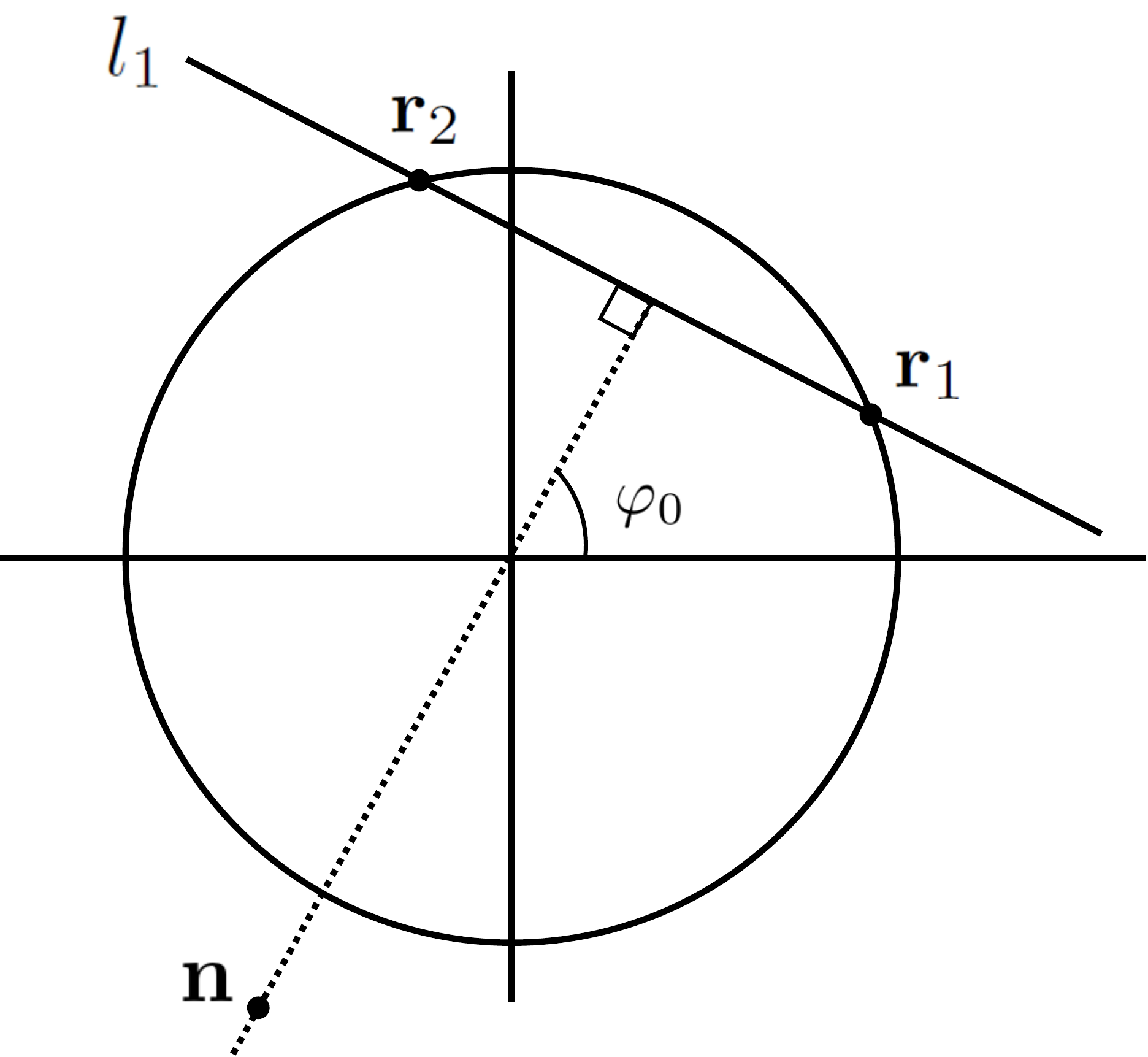}
	\caption{Geometric description of $\mathbf{n}$: we can observe that it lies in the third quadrant.}
	\label{Fig:vector n}
\end{figure}
Moreover, because $\varphi_{0}$ is supposed to be $0<\varphi_{0}<\frac{\pi}{2}$ as shown in Part 1, we can assume without loss of generality that its components $n_{x}$ and $n_{y}$ are negative (see FIG. \ref{Fig:vector n}).
In order for $\C$ to be an element of $C(\A^{t\vx}: \varrho^{\mathbf{r}_{1}}, \varrho^{\mathbf{r}_{2}})$, \eqref{eq:solution for C} should also satisfy
\[
0\leq 1+\lambda \pm |(t, 0) + \lambda\mathbf{n}|
\leq 2,
\]
i.e.
\begin{align*}
	1+\lambda - |(t, 0) + \lambda\mathbf{n}| \ge 0,\quad
	1+\lambda + |(t, 0) + \lambda\mathbf{n}| \le 2.
\end{align*}
It can be reduced to 
\begin{align}
	\label{eq:lambda}
	\lambda_{1}^{t}\le \lambda \le \lambda_{2}^{t}
\end{align}
with
\begin{equation}
	\label{eq:lambda2}
	\begin{aligned}
		&\lambda_{1}^{t}=\frac{1-n_{x}t-\sqrt{(1-n_{x}t)^{2}+(|\mathbf{n}|^{2}-1)(1-t^2)}}{|\mathbf{n}|^{2}-1},\\
		&\lambda_{2}^{t}=\\
		&\min\left\{1,\ \frac{-1-n_{x}t+\sqrt{(1+n_{x}t)^{2}+(|\mathbf{n}|^{2}-1)(1-t^2)}}{|\mathbf{n}|^{2}-1}\right\},
	\end{aligned}
\end{equation}
where we used $|\mathbf{n}|>1$ and $n_{x}<0$ (see FIG. \ref{Fig:lambda}).
Overall, $C(\A^{t\vx}_{D}: \varrho^{\mathbf{r}_{1}}, \varrho^{\mathbf{r}_{2}})$
is isomorphic to the set parameterized as 
\begin{equation}
	\label{eq:explicit C}
	\begin{aligned}
		\{(1, t, 0) + \lambda(1, \mathbf{n})\mid\lambda_{1}^{t}\le \lambda \le \lambda_{2}^{t}\},
	\end{aligned}
\end{equation}
where $\lambda_{1}^{t}$ and $\lambda_{2}^{t}$ are shown in \eqref{eq:lambda2}.
Remark that the same argument can be applied for $C(\A^{t\vy}: \varrho^{\mathbf{r}_{1}}, \varrho^{\mathbf{r}_{2}})$.
\begin{figure}[h]
	\includegraphics[bb=0.000000 0.000000 876.000000 499.000000, scale=0.265]{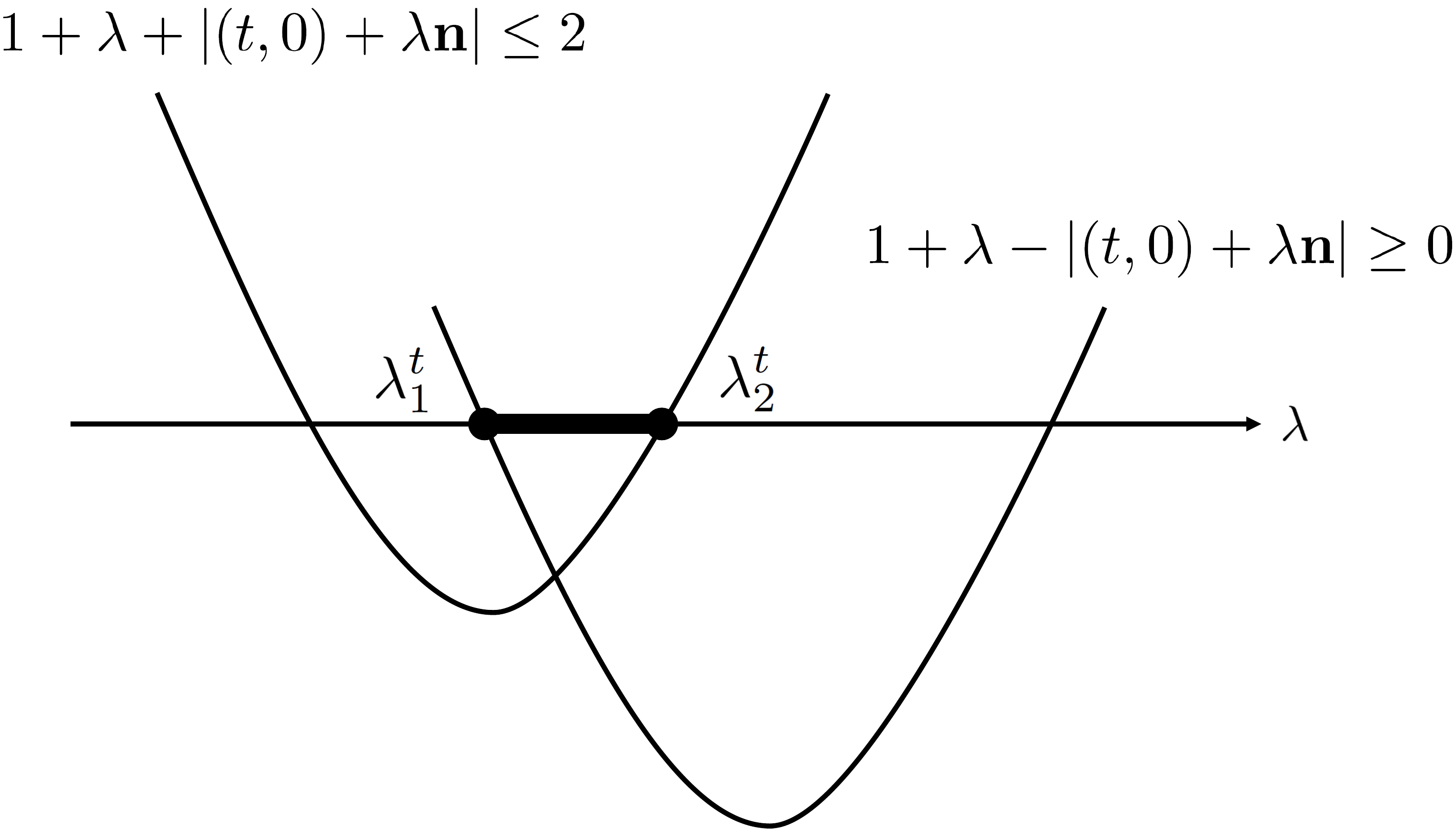}
	\caption{Solutions for $\lambda$.}
	\label{Fig:lambda}
\end{figure}

We shall now prove $t_{0}'=\inf L\notin L$.
Suppose that $t\in L$, i.e. $\chi_{incomp}(\A^{t\vx}_{D}, \A^{t\vy}_{D})=2$.
It follows that there exist $\mathbf{r}_{1}$ and $\mathbf{r}_{2}$ in $\partial D$ such that
\begin{align*}
	\left(C(\A_{D}^{t\vx}: \varrho^{\mathbf{r}_{1}}, \varrho^{\mathbf{r}_{2}})
	\times C(\A^{t\vy}_{D}: \varrho^{\mathbf{r}_{1}}, 
	\varrho^{\mathbf{r}_{2}}) \right)
	\cap JM(2:2) =\emptyset. 
\end{align*} 
Denoting $C(\A_{D}^{t\vx}: \varrho^{\mathbf{r}_{1}}, \varrho^{\mathbf{r}_{2}})$ and $C(\A_{D}^{t\vy}: \varrho^{\mathbf{r}_{1}}, \varrho^{\mathbf{r}_{2}})$ simply by $X^{t}$ and $Y^{t}$ respectively, we can rewrite it as
\[
X^{t}\times Y^{t} \cap JM(2:2) =\emptyset.
\]
We need the following lemma.
\begin{lemma}
	\label{lem: metric}
	Let $\delta>0$.
	There exists $\Delta>0$ such that 
	for all $\tau\in[0, \Delta]$ and for all $\C\in X^{t-\tau}$, there exists $\A\in X^{t}$ satisfying 
	\[
	d(\C, \A):=\|\C(+)-\A(+)\|<\delta
	\]
	where $d$ is a metric on $O_{D}(2)$ defined through the operator norm $\|\cdot\|$ on $\mathcal{E}_{D}\simeq O_{D}(2)$.
\end{lemma}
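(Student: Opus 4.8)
The plan is to exploit the explicit description \eqref{eq:explicit C} of $X^{t}=C(\A_{D}^{t\vx}:\varrho^{\mathbf{r}_{1}},\varrho^{\mathbf{r}_{2}})$ as a line segment whose points are labelled by a single real parameter $\lambda\in[\lambda_{1}^{t},\lambda_{2}^{t}]$, with endpoints given by \eqref{eq:lambda2}. The analogous description holds for $X^{t-\tau}$, obtained by replacing the base point $(1,t,0)$ with $(1,t-\tau,0)$ and the interval with $[\lambda_{1}^{t-\tau},\lambda_{2}^{t-\tau}]$; crucially the direction vector $(1,\mathbf{n})$ is unchanged, since $\mathbf{n}$ depends only on the fixed states $\varrho^{\mathbf{r}_{1}},\varrho^{\mathbf{r}_{2}}$. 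Thus the lemma is exactly a one-sided Hausdorff-continuity statement for the moving segment $t\mapsto X^{t}$: every point of the nearby segment $X^{t-\tau}$ should lie within $\delta$ of the target segment $X^{t}$ once $\tau$ is small.

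Concretely, given $\C\in X^{t-\tau}$ labelled by $\lambda'\in[\lambda_{1}^{t-\tau},\lambda_{2}^{t-\tau}]$, I would take $\A\in X^{t}$ to be the point labelled by $\lambda$, where $\lambda$ is the nearest-point projection (the clamp) of $\lambda'$ onto $[\lambda_{1}^{t},\lambda_{2}^{t}]$. Writing both effects in the Bloch form of \eqref{eq:explicit C}, their difference has components $c_{0}-a_{0}=\lambda'-\lambda$ and $\mathbf{c}-\mathbf{a}=(-\tau+(\lambda'-\lambda)n_{x},\,(\lambda'-\lambda)n_{y})$, so \eqref{eq:norm for qubit effect} together with the triangle inequality yields
\begin{align*}
\|\C(+)-\A(+)\|
&=\tfrac{1}{2}\bigl(|\lambda'-\lambda|+|\mathbf{c}-\mathbf{a}|\bigr)\\
&\le\tfrac{1}{2}\bigl(\tau+|\lambda'-\lambda|\,(1+|\mathbf{n}|)\bigr).
\end{align*}
Here $|\mathbf{n}|$ is a fixed constant once $\mathbf{r}_{1},\mathbf{r}_{2}$ are fixed, so it remains only to make both $\tau$ and $|\lambda'-\lambda|$ small, uniformly in the choice of $\C$.

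To control $|\lambda'-\lambda|$ I would note that, since $\lambda$ is the clamp of $\lambda'\in[\lambda_{1}^{t-\tau},\lambda_{2}^{t-\tau}]$ into $[\lambda_{1}^{t},\lambda_{2}^{t}]$, checking the three cases (no clamping, clamping at the left endpoint, clamping at the right endpoint) shows that the displacement is bounded by the motion of the endpoints,
\[
|\lambda'-\lambda|\le\max\bigl\{|\lambda_{1}^{t}-\lambda_{1}^{t-\tau}|,\,|\lambda_{2}^{t}-\lambda_{2}^{t-\tau}|\bigr\}.
\]
This bound is independent of $\lambda'$, hence of $\C$, so it delivers the required uniformity. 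Given $\delta$, I would then choose $\Delta>0$ small enough that for all $\tau\in[0,\Delta]$ one has $\tau<\delta$ and $|\lambda_{i}^{t}-\lambda_{i}^{t-\tau}|\,(1+|\mathbf{n}|)<\delta$ for $i=1,2$; the displayed estimate then gives $\|\C(+)-\A(+)\|<\tfrac{1}{2}(\delta+\delta)=\delta$, as claimed.

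The only genuine obstacle is justifying the continuity of $\lambda_{1}^{t}$ and $\lambda_{2}^{t}$ used in the last step, but this is routine from the explicit expressions \eqref{eq:lambda2}. On the relevant range $t\in(1/\sqrt{2},1]$ with $|\mathbf{n}|>1$ and $n_{x}<0$, the radicands $(1\mp n_{x}t)^{2}+(|\mathbf{n}|^{2}-1)(1-t^{2})$ are nonnegative, so the square roots depend continuously on $t$, and taking the minimum with $1$ in $\lambda_{2}^{t}$ preserves continuity. Hence each $t\mapsto\lambda_{i}^{t}$ is continuous and the choice of $\Delta$ above is legitimate, completing the proof.
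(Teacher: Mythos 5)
Your proof is correct and follows essentially the same route as the paper's: the same parametrization of $X^{t}$ and $X^{t-\tau}$ via \eqref{eq:explicit C}, the same bound with the factor $(1+|\mathbf{n}|)$, the same three-case endpoint analysis, and the same appeal to the continuity of $\lambda_{1}^{t},\lambda_{2}^{t}$ from \eqref{eq:lambda2}. The only cosmetic difference is that you construct the witness $\A$ explicitly as the clamp of $\lambda'$ onto $[\lambda_{1}^{t},\lambda_{2}^{t}]$, whereas the paper bounds the distance $d(\C,X^{t})$ and then invokes convexity (closedness) of $X^{t}$ to extract a minimizer.
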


\begin{proof}
	By its definition, $X^{t}$ is a convex set of $O_{D}(2)$, and thus  for all $\E\in O_{D}(2)$ we can define successfully the distance between $\E$ and $X^{t}$: 
	\[
	d(\E, X^{t})=\min_{\F\in X^{t}}d(\E, \F).
	\]
	In particular, for $\E'\in X^{t-\Delta'}\subset O_{D}(2)$ with $\Delta'>0$ and $\E'(+)=\frac{1}{2}(e'_{0}\id+\mathbf{e}'\cdot\sigma)$, it becomes
	\begin{align}
		\label{eq:d(E', X^t)}
		\begin{aligned}
			d(\E', X^{t})
			&=\min_{\F\in X^{t}}d(\E', \F)\\
			&=\min_{\F\in X^{t}}\frac{1}{2}\left( |e'_0 - f_0 | 
			+ |\mathbf{e} - \mathbf{f}|\right),
		\end{aligned}
	\end{align}
	where $\F(+)=\frac{1}{2}(f_{0}\id+\mathbf{f}\cdot\sigma)$ (see \eqref{eq:norm for qubit effect}).
	Since, in terms of \eqref{eq:explicit C}, $\E'\in X^{t-\Delta'}$ and $\F\in X^{t}$ imply 
	\[
	(e_{0}', \mathbf{e}')=(1, t-\Delta', 0) + \lambda'(1, \mathbf{n})
	\]
	with $\lambda_{1}^{t-\Delta'}\le \lambda' \le \lambda_{2}^{t-\Delta'}$ and 
	\[
	(f_{0}, \mathbf{f})=(1, t, 0) + \lambda(1, \mathbf{n})
	\]
	with $\lambda_{1}^{t}\le \lambda \le \lambda_{2}^{t}$ respectively, \eqref{eq:d(E', X^t)} can be rewritten as
	\begin{align*}
		\begin{aligned}
			2d&(\E', X^{t})\\
			&=\min_{\lambda\in[\lambda_{1}^{t}, \lambda_{2}^{t}]}
			\left(
			|\lambda'-\lambda|+|(-\Delta', 0)+(\lambda'-\lambda)\mathbf{n}|
			\right).
		\end{aligned}
	\end{align*}
	It follows that 
	\begin{align}
		\label{eq:d(E', X^t) bound}
		2d(\E', X^{t})
		\le
		\Delta'+\min_{\lambda\in[\lambda_{1}^{t}, \lambda_{2}^{t}]}|\lambda'-\lambda|(1+|\mathbf{n}|).
	\end{align}
	Let us evaluate its right hand side.
	It is easy to see that 
	\[
	\min_{\lambda\in[\lambda_{1}^{t}, \lambda_{2}^{t}]}|\lambda'-\lambda|
	=\left\{
	\begin{aligned}
		\lambda_{1}^{t}&-\lambda' & \quad&(\lambda'<\lambda_{1}^{t})\\
		&0 & \quad&(\lambda_{1}^{t}\le\lambda'\le\lambda_{2}^{t})\\
		\lambda'&-\lambda_{2}^{t}
		& \quad &(\lambda'>\lambda_{2}^{t})
	\end{aligned}
	\right.
	.
	\]
	Suppose that $\lambda'<\lambda_{1}^{t}$ holds, for example.
	In this case, because $\lambda_{1}^{t-\Delta'}\le\lambda'$, we can obtain
	\[
	\lambda_{1}^{t}-\lambda'
	\le
	\lambda_{1}^{t}-\lambda_{1}^{t-\Delta'}.
	\]
	In a similar way, it can be demonstrated that
	\begin{align*}
		\sup_{\lambda'\in[\lambda_{1}^{t-\Delta'}, \lambda_{2}^{t-\Delta'}]}&\min_{\lambda\in[\lambda_{1}^{t}, \lambda_{2}^{t}]}|\lambda'-\lambda|\\
		=&\max\left\{
		\lambda_{1}^{t}-\lambda_{1}^{t-\Delta'},\ 0,\ 
		\lambda_{2}^{t-\Delta'}-\lambda_{2}^{t}
		\right\}.
	\end{align*}
	By virtue of \eqref{eq:lambda2}, the right hand side converges to 0 as $\Delta'\rightarrow0$, and thus we can see from \eqref{eq:d(E', X^t) bound} that
	\[
	\sup_{\E'\in X^{t-\Delta'}}
	d(\E', X^{t})\underset{\Delta'\to0}{\longrightarrow}0
	\]
	It results in that there exists $\Delta>0$ such that for all $\tau\in[0, \Delta]$, 
	\[
	\sup_{\E'\in X^{t-\tau}}
	d(\E', X^{t})<\delta
	\]
	holds, that is, $d(\C, X^{t})<\delta$ holds for any $\C\in X^{t-\tau}$.
	Moreover, because $X^{t}$ is convex, there exists $\A\in X^{t}$ satisfying $d(\C, X^{t})=d(\C, \A)$, which proves the claim of the lemma.
\end{proof}
Note that a similar statement also holds for $Y^{t}$: there exists $\widetilde{\Delta}>0$ such that 
for all $\widetilde{\tau}\in[0, \widetilde{\Delta}]$ and for all $\D\in Y^{t-\widetilde{\tau}}$, there exists $\B\in Y^{t}$ satisfying $d(\D, \B)<\delta$.
Let $V:=O_{D}(2)\times O_{D}(2)(\simeq\mathcal{E}_{D}\times\mathcal{E}_{D})$ and 
let $d_{V}$ be a product metric on $V$ defined as 
\begin{align*}
	d_{V}\left((\A,\B), (\C,\D)\right) 
	= \max\{d(\A, \C), d(\B, \D)\}.
\end{align*}
According to Lemma \ref{lem: metric} and its $Y^{t}$-counterpart, if we take $\Delta_{0}=\min\{\Delta, \widetilde{\Delta}\}(>0)$, then 
there exists $(\A, \B)\in X^{t}\times Y^{t}$ for all $(\C, \D)\in X^{t-\Delta_{0}}\times Y^{t-\Delta_{0}}$ such that $d_{V}((\A,\B), (\C,\D))<\delta$.
On the other hand, as we have seen, it holds that
\[
X^{t}\times Y^{t} \cap JM(2:2) =\emptyset.
\]
Since $X^{t}\times Y^{t}$ and $JM(2:2)$ are closed in $V$, and $V$ is a metric space, we can apply Urysohn's Lemma \cite{GT75}.
It follows that there exists a continuous (in fact uniformly continuous since $V$ is compact) function $f\colon V\to [0,1]$ satisfying $f(U)=0$ for any $U\in X^{t}\times Y^{t}$ and $f(W)=1$ for any $W\in JM(2:2)$.
The uniform continuity of $f$ implies that for some $\varepsilon\in(0,1)$, there is $\delta>0$ such that
\begin{align}
	\label{eq:continuous f}
	\begin{aligned}
		&d_{V}\left((\E', \F'), (\E, \F)\right)<\delta\\
		&\qquad\qquad\Rightarrow
		\ 
		\left|f\left((\E', \F')\right)-f\left((\E, \F)\right)\right|<\varepsilon
	\end{aligned}
\end{align}
holds for any $(\E, \F)\in V$.
For this $\delta$, we can apply the argument above:
we can take $\Delta_{0}>0$ such that for any $(\C, \D)\in X^{t-\Delta_{0}}\times Y^{t-\Delta_{0}}$, there exists $(\A, \B)\in X^{t}\times Y^{t}$ satisfying $d_{V}((\A,\B), (\C,\D))<\delta$.
Because $f((\A,\B))=0$, we have $f((\C,\D))<\varepsilon<1$ (see \eqref{eq:continuous f}), and thus $(\C,\D)\notin JM(2:2)$ .
It indicates that $X^{t-\Delta_{0}}\times Y^{t-\Delta_{0}}\cap JM(2:2)=\emptyset$, that is, there is $\Delta_{0}>0$ for any $t\in L$ satisfying $t-\Delta_{0}\in L$.
Therefore, $t_{0}'=\inf L\notin L$ can be concluded.

\section{Conclusion}

In this study, we have introduced the notions of incompatibility and compatibility dimensions for collections of devices.
They describe the minimum number of states which are needed to detect incompatibility and the maximum number of states on which incompatibility vanishes, respectively.
We have not only presented general properties of those quantities but also examined concrete behaviors of them for a pair of unbiased qubit observables.
We have proved that even for this simple pair of incompatible observables there exist two types of incompatibility with different incompatibility dimensions which cannot be observed if we focus only on robustness of incompatibility under noise.
We expect that it is possible to apply this difference to some quantum protocols such as quantum cryptography.
Future work will be needed to investigate whether similar results can be obtained for observables in higher dimensional Hilbert space or other quantum devices.
As the definitions apply to devices in GPTs, an interesting task is further to see how quantum incompatibility dimension differ from incompatibility dimension in general.

\begin{acknowledgements}
The authors wish to thank Yui Kuramochi for helpful comments.
T.M. acknowledges financial support from JSPS KAKENHI Grant Number JP20K03732.
R.T. acknowledges financial support from JSPS KAKENHI Grant Number JP21J10096. 
\end{acknowledgements}

\newpage

\end{document}